
\documentclass[11pt,letterpaper,onecolumn,oneside,notitlepage]{article}

\usepackage{setspace}
\singlespacing
\usepackage[letterpaper, margin=1in]{geometry}

\usepackage{amsmath}
\usepackage{amsthm}
\usepackage{amssymb}
\usepackage{amsfonts}
\usepackage{url}
\usepackage{xcolor}
\usepackage{graphicx}
\usepackage{subcaption}

\usepackage{hyperref}
\hypersetup{
    colorlinks=true,
    linkcolor=red,
    citecolor=blue,
    filecolor=magenta,      
    urlcolor=blue,
}

\urlstyle{same}

\usepackage[sort&compress,square,comma,authoryear]{natbib}

\newtheorem{theorem}{Theorem}
\newtheorem{lemma}[theorem]{Lemma}

\newtheorem{proposition}[theorem]{Proposition}
\newtheorem{definition}[theorem]{Definition}

\newtheorem{example}[theorem]{Example}
\newtheorem{counterexample}[theorem]{Counterexample}
\newtheorem{remark}[theorem]{Remark}

\iffalse
\newenvironment{proof}{\noindent{\em Proof:}}{$\Box$~\\}
\fi

%
\newcommand{\INf}{\mathrm{In}_{f}}
\newcommand{\INnf}{\mathrm{In}_{-f}}
\newcommand{\rk}{\mathrm{ord}_{f}}
\newcommand{\exit}{\mathrm{Exit}}

\newcommand{\IsEmpty}{\mathrm{NonEmpty}}
\newcommand{\Neg}{\mathrm{Neg}}
\newcommand{\Reduce}{\mathrm{Reduce}}

\newcommand{\escape}{\mathrm{Escape}}
\newcommand{\rem}{\mathrm{rem}}
\newcommand{\deff}{\,\overset{\scriptscriptstyle\mathrm{def}}{=}\,}
\newcommand{\bbr}{\mathbb{R}}

\newcommand{\T}{\mathbf{T}}
\newcommand{\F}{\mathbf{F}}

\newcommand{\LZZ}{\textsf{\bfseries LZZ}}
\newcommand{\ExitSet}{\textsf{\bfseries ESE}}

\begin{document}

\title{Characterizing Positively Invariant Sets:\\
\Large{Inductive and Topological Methods}
}

\author{Khalil Ghorbal \\ \href{mailto:khalil.ghorbal@inria.fr}{khalil.ghorbal@inria.fr}
   \and Andrew Sogokon \\ \href{mailto:a.sogokon@soton.ac.uk}{a.sogokon@soton.ac.uk} }

\maketitle

\begin{abstract}
We present two characterizations of positive invariance of sets under the flow of systems of ordinary differential equations. %
The first characterization uses \emph{inward sets} which intuitively collect those points from which the flow evolves within the set for a short period of time,  whereas the second characterization uses the  notion of \emph{exit sets}, which intuitively collect those points from which the flow immediately leaves the set. 
Our proofs emphasize the use of the \emph{real induction} principle as a generic and unifying proof technique that captures the essence of the formal reasoning justifying our results and provides cleaner alternative proofs of known results.
The two characterizations presented in this article, while essentially equivalent, lead to two rather different decision procedures (termed respectively \LZZ{}~and~\ExitSet{}) for checking whether a given semi-algebraic set is positively invariant under the flow of a system of polynomial ordinary differential equations. 
The procedure \LZZ{} improves upon the original work by Liu, Zhan and Zhao~\citep{DBLP:conf/emsoft/LiuZZ11}. 
The procedure \ExitSet{}, introduced in this article, works by splitting the problem, in a principled way, into simpler sub-problems that are easier to check, and is shown to exhibit substantially better performance compared to \LZZ{} on problems featuring semi-algebraic sets described by formulas with non-trivial Boolean structure. 
\end{abstract}

\section{Introduction}

Positive invariance is an important concept in the theory of dynamical systems and one which also has practical applications in areas of computer science, such as formal verification, as well as in control theory. 
Informally, a set is \emph{positively invariant} if it is preserved under the evolution of the system according to the dynamics as time advances. 
A considerable amount of literature is dedicated to this subject~\citep{DBLP:journals/automatica/Blanchini99}, and great progress has been made in understanding positively invariant sets in continuous dynamical systems. 

In computer science, the notion of an \emph{inductive invariant} is analogous to that of a positively invariant set. 
It has relatively recently become the focus of considerable research interest, especially in the area of so-called \emph{hybrid systems}, which studies systems that combine discrete and continuous dynamics. Significant progress has been made over the past decade in the methods for algorithmically checking inductive invariants of ODEs (i.e. deciding whether a given set is positively invariant); these methods provide powerful tools for reasoning about the temporal behaviour of ODEs without the need to explicitly solve them. For example, one may use an inductive invariant to \emph{prove} that a system cannot evolve from a given set of initial conditions into a state which is deemed undesirable or unsafe (e.g. if the ODEs describe the motion of physical objects, one may wish to know that there can be no collisions between these objects in the future).

\paragraph{Contributions} This article presents a self-contained development of two characterizations of positively invariant sets of continuous systems (in Theorem~\ref{thm:lzz} and Theorem~\ref{thm:char}). 
In the case of semi-algebraic sets and polynomial ODEs, the two characterizations lend themselves to two alternative decision procedures for checking set positive invariance, both of which are described in detail. 

Section~\ref{sec:realinduction} is entirely devoted to the first characterization (Theorem~\ref{thm:lzz}), which relies on the concept of \emph{inward sets}~\cite[see][Def. 9.4]{DBLP:books/sp/17/ZWZ2017} and is very closely related to a theorem~\cite[Thm. 9.1]{DBLP:books/sp/17/ZWZ2017} which originally appeared in~\citep{DBLP:conf/emsoft/LiuZZ11}; we show how \emph{real induction}, via equivalent, yet subtly different formulations, can be used to give clean proofs of this known result and of Theorem~\ref{thm:lzz}. 
The section then describes a robust implementation of the associated decision procedure (\LZZ, after Liu, Zhan and Zhao), 
along with our improvements to the original method.

Section~\ref{sec:topology} presents the second characterization (Theorem~\ref{thm:char}), which is based on Conley's notion of \emph{exit sets}~\citep{conley}. 
We give a direct proof for this new result while formally establishing the relationship between exit sets and inward sets. 
Section~\ref{sec:complexity} presents a new algorithm (\ExitSet, which stands for \emph{Exit Set Emptiness}) that can more efficiently decide positive invariance of semi-algebraic sets described by formulas with non-trivial Boolean structure. The procedure works by splitting the problem 
into simpler sub-problems that are easier to check, reminiscent of divide-and-conquer algorithms. 

Our implementations of the two decision procedures $\LZZ{}$ and $\ExitSet{}$ are empirically evaluated on a number of positive invariance checking problems where semi-algebraic sets are described by non-atomic formulas in Section~\ref{sec:ex}, with \ExitSet{} exhibiting substantially better performance.
\section{Preliminaries}
A system of autonomous ordinary differential equations (ODEs) has the form:
\begin{align*}
x_1' &= f_1(x_1,\dots,x_n)\,,\\ 
\vdots & \\
x_n' &= f_n(x_1,\dots,x_n)\,,
\end{align*}
where $x_i'$ stands for the time derivative $\frac{d x_i}{dt}$ and 
$f = (f_1,f_2,\dots,f_n)$ is a vector-valued continuous function (which defines a \emph{vector field} on $\mathbb{R}^n$); we will write such a system more concisely as $x'=f(x)$.
We will denote by $\varphi(\cdot, x)$ the solution to the initial value problem $x'=f(x)$, with initial value $x \in \mathbb{R}^n$. 
We will only consider systems in which solutions to initial value problems always exist (at least locally) and are unique (e.g. local Lipschitz continuity of $f$ is sufficient to guarantee this property). 
When we quantify solutions over time $t$, we only consider $t$ in the \emph{maximal interval of existence} $I_{x}$, which in our case exists for any $x$ and contains $0$. 
In order to simplify our presentation, we will quantify over ``all forward time'' by writing  $\forall~t\geq0$ with the understanding that $\varphi(\cdot,x)$ may only be defined for $t \in I_{x}$. We refer to the mapping $\varphi$ as the (local) \emph{flow} of the vector field $f$.
\begin{definition}[Positively invariant set]
\label{def:posinv}
Given a system of ODEs $x'=f(x)$, a set $S\subseteq \mathbb{R}^n$ is \emph{positively invariant} if and only if no solution starting inside $S$ can leave $S$ in the future, i.e. just when the following holds:
\[
\forall~x\in S.~\forall~t\geq 0.~\varphi(t,x)\in S\,.
\]
\end{definition}

One analogously arrives at a definition of \emph{negatively invariant sets} in which no solution starting inside the set $S$ is permitted to be outside the set in the \emph{past}. Basic results in the theory of dynamical systems establish that a set $S$ is positively invariant precisely when its complement $S^c$ is negatively invariant \cite[Thm. 1.4]{BhatiaSzego1970}, and that the closure of a positively invariant set is also positively invariant~\cite[Prop. 1.4.5]{AlongiNelson2007}; this property also holds for the set's interior~\cite[Thm. 1.7]{BhatiaSzego1970}.

\begin{remark}
Some authors~\citep{BlanchiniMiani2008} favour a definition of positively invariant sets in which the solutions $\varphi(t,x)$ are explicitly required to exist for all time $t\geq 0$, by imposing a global Lipschitz continuity requirement on the vector field $f$, whereas others~\citep{redheffer1972theorems} simply require that solutions emanating from the set $S$ remain inside $S$ for as long as they exist in the future (Definition~\ref{def:posinv} is stated in this spirit).
\end{remark}

The first necessary and sufficient condition (i.e. characterization) for positive invariance of \textbf{closed} sets in systems of ODEs with unique solutions (but without requiring knowledge of the solutions $\varphi$) was given by~\cite{Nagumo1942},\footnote{Nagumo's result was in fact a little more general in that it did \emph{not} require unique solutions and focused on so-called \emph{weak positive invariance}, which is identical to positive invariance when solutions are unique.} and was later independently found by numerous other mathematicians (the interested reader is invited to consult~\citep{ DBLP:journals/automatica/Blanchini99}, \cite[Ch. 4, \S 4.2]{BlanchiniMiani2008}, and~\cite[Ch. III, \S 10, XV, XVI]{Walter1998} for more details about Nagumo's theorem and its multiple rediscoveries). Informally, Nagumo's theorem states that a \emph{closed set} $S$ is positively invariant if and only if at each point $x$ on the \emph{boundary} of $S$ the vector $f(x)$ points into the interior of the set or is tangent to it. The theorem may be easily applied in cases where the set $S$ is a \emph{sub-level set} of a differentiable real-valued function $g$, i.e. a set defined as $\{ x \in \bbr^n \mid g(x) \leq 0 \}$, provided that the gradient vector $\nabla g(x)$ is non-vanishing (i.e. non-zero) whenever $g(x)=0$ (intuitively this ensures that the boundary of $S$ is smooth): in this special case Nagumo's theorem says that $S$ is positively invariant if and only if $g'(x)\leq 0$ for all $x$ such that $g(x)=0$, where $g'$ denotes the (first) Lie derivative of $g$ with respect to the vector field $f$, which is defined by:~\footnote{The Lie derivative of $g$ is sometimes also denoted by $L_f(g)$ instead of $g'$.} 
\[ g' \deff \nabla g \cdot f = \sum_{i=1}^n \frac{\partial g}{\partial x_n}f_i~\,.
\]

\begin{remark}
Applying Nagumo's theorem in practice becomes problematic when the boundary of $S$ is not smooth, e.g. when the set \mbox{$\{ x \in \mathbb{R}^n \mid g(x)=0 \}$} contains \emph{singularities} (points $x$ where the gradient vanishes, i.e. $\nabla g(x)=0$); these issues have been explored by~\cite{DBLP:conf/fsttcs/TalyT09}.
In order to apply the theorem more generally to sets that are intersections of sub-level sets, i.e. \mbox{$\{ x \in \mathbb{R}^n \mid g_i(x) \leq 0,~i=1,\dots,k \}$}, one likewise needs to be very careful. The concept of \emph{practical sets} was introduced specifically to deal with these issues~\citep[see][Ch. 4, Def. 4.9]{BlanchiniMiani2008}. 
\end{remark}

In the following sections we will be concerned with characterizations of positive invariance that are of a very different nature to that of Nagumo's result (which provides a characterization only for closed sets obtained using the tools of real analysis and is without effective computational means of applying it). 
As we shall see, these alternative characterizations can be effectively applied using tools from commutative algebra and real algebraic geometry. 
\section{Characterizing Positive Invariance Through Inward Sets}
\label{sec:realinduction}
Let us consider the following construction of the so-called \emph{inward set} for a given set $S\subseteq \bbr^n$ 
and a system of ODEs ${x}' = f({x})$ for which a unique (local) solution to the initial value problem exists for any $x \in \bbr^n$, following~\cite[Def. 9.4]{DBLP:books/sp/17/ZWZ2017}:
\[\INf(S) \deff \{ x \in \mathbb{R}^n \mid \exists~\varepsilon >0.~\forall~t\in (0,\varepsilon).~\varphi(t,x) \in S \}\,. \]
When time/flow is reversed, one can likewise construct the \emph{inverse inward set}:
\[\INnf(S) = \{ x \in \mathbb{R}^n \mid \exists~\varepsilon >0.~\forall~t\in (0,\varepsilon).~\varphi(-t,x) \in S \}\,. \]
It is useful to intuitively think of these as sets of states from which the system will evolve inside $S$ for some non-trivial time interval ``immediately in the future'' and, respectively, has evolved inside $S$ for some non-trivial time interval ``immediately in the past''. 

We observe that, although according to the statement of $\INf(S)$, $x$ can be any point in $\bbr^n$, it is in fact restricted to the closure of $S$.  
\begin{lemma}
\label{lem:infSinClosure}
For $S \subseteq \bbr^n$, the set $\INf(S)$ is a subset of the closure of $S$. 
\end{lemma}
\begin{proof}
We show by contradiction that $\INf(S) \cap (S^c)^\circ$ is empty. 
Let $x$ be an element of the intersection.  
Since $x \in (S^c)^\circ$, there exists an open neighbourhood $U \subset (S^c)^\circ$ of $x$ and $\rho > 0$ such that $\varphi(t,x) \in U \subset S^c$ for all $t \in (0,\rho)$ (by continuity of $\varphi(\cdot,x)$). 
But if $x \in \INf(S)$, then there exists $\varepsilon > 0$ such that $\varphi(t,x) \in S$ for all $t \in (0,\varepsilon)$. 
So $\varphi(t,x)$ is in both $S^c$ and $S$ for all $t \in (0, \min\{\rho,\varepsilon\})$, a contradiction. 
\end{proof}

\begin{remark}
\label{rm:infinclusions}
Notice that the \emph{interior} of $S$ is always contained inside $\INf(S)$ (by definition) and the inclusion $S \subseteq \INf(S)$ therefore holds trivially whenever $S$ is an open set (for any $f$).  %
A quick glance at the definitions
\[\INf(S^c) = \{ x \in \mathbb{R}^n \mid \exists~\varepsilon >0.~\forall~t\in (0,\varepsilon).~{\varphi}(t,x) \not\in S \}\,, \]
\[\INf(S)^c = \{ x \in \mathbb{R}^n \mid \forall~\varepsilon >0.~\exists~t\in (0,\varepsilon).~{\varphi}(t,x) \not\in S \}\,, \]
reveals the following inclusion $\INf(S^c) \subseteq \INf(S)^c$ for any set $S$. %
Whenever $S$ is a closed set ($S^c$ is open), we therefore have that $S^c \subseteq \INf(S^c) \subseteq \INf(S)^c$ %
or, if we prefer, $\INf(S) \subseteq S$. %
\end{remark}

These constructions can be used to state the following characterization of positively invariant sets.

\begin{theorem}
A set $S\subseteq \mathbb{R}^n$ is positively invariant under the flow of the system ${x}' = f({x})$ if and only if $S \subseteq \INf(S)$ and $S^c \subseteq \INnf(S^c)$.
\label{thm:lzz}
\label{thm:charbyinduction}
\end{theorem}

Theorem~\ref{thm:lzz} can be understood and proved using \emph{induction over the non-negative real numbers}.
Though there are many different variations of induction over the reals~\cite[e.g. see][]{clark2019}, this method of proof appears to be far less well known than standard mathematical induction over the natural numbers. 
We state below a version of real induction that is well suited to directly prove the theorem. 

\begin{lemma}[Real induction]
A predicate $P(t)$ holds for all $t\geq 0$ if and only if: 
\begin{enumerate}
\item $P(0)$\,,
\item[\emph{2.}]   $\forall~t \geq 0. ~\Big(\lnot P(t) \to \big( \exists~\varepsilon > 0.~\forall~T\in (t - \varepsilon, t).~\lnot P(T) \big) \Big)$\,,
\item[\emph{3.}]   $\forall~t \geq 0.~ \Big( P(t) \to \big( \exists~\varepsilon > 0.~\forall~T\in (t, t + \varepsilon).~P(T) \big) \Big)$\,.
\end{enumerate}
\label{thm:realind}
\end{lemma}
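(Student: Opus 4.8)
The plan is to treat the forward implication as essentially trivial and to concentrate on the converse, where the completeness of the real line plays the role that well-ordering plays in ordinary induction over $\mathbb{N}$. For the $(\Rightarrow)$ direction, assume $P(t)$ holds for every $t \geq 0$: then condition 1 is immediate; condition 2 holds vacuously because its hypothesis $\lnot P(t)$ fails for every $t \geq 0$; and condition 3 holds because for each $t \geq 0$ we may simply take $\varepsilon = 1$ and $P(T)$ is true for all $T \in (t, t+1)$.

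For the $(\Leftarrow)$ direction, assume conditions 1--3 and suppose, for contradiction, that the failure set $Z \deff \{\, t \geq 0 \mid \lnot P(t) \,\}$ is nonempty. Being a nonempty subset of $[0,\infty)$, it has an infimum $s \deff \inf Z \geq 0$, and I would split on whether $P(s)$ holds. If $\lnot P(s)$, then $s \neq 0$ by condition 1, so $s > 0$; condition 2 applied at $s$ yields $\varepsilon > 0$ with $\lnot P(T)$ for all $T \in (s-\varepsilon, s)$, and the nonempty interval $(\max(0, s-\varepsilon), s)$ then lies inside $Z$ strictly below $s$, contradicting $s = \inf Z$. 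If instead $P(s)$ holds, then $P(t)$ holds on all of $[0, s]$ --- for $t < s$ because $t \notin Z$ by minimality of $s$, and for $t = s$ by assumption --- and condition 3 applied at $s$ supplies $\varepsilon > 0$ with $P(T)$ for all $T \in (s, s+\varepsilon)$; hence $Z \cap [0, s+\varepsilon) = \emptyset$, so every element of $Z$ is at least $s + \varepsilon$ and $\inf Z \geq s+\varepsilon > s$, again a contradiction. Both cases being impossible, $Z = \emptyset$ and $P$ holds on $[0,\infty)$.

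The argument is elementary; the only genuinely non-elementary ingredient is the existence of $\inf Z$, i.e.\ the least-upper-bound property of $\mathbb{R}$, without which the lemma is false (it fails over $\mathbb{Q}$) --- this is the precise analogue of the well-foundedness underlying induction on $\mathbb{N}$. The one place demanding care, and the main (though modest) obstacle, is the bookkeeping at the left endpoint: condition 1 is exactly what rules out $s = 0$ in the first case, and one must keep the one-sided neighbourhoods delivered by conditions 2 and 3 within $[0,\infty)$. An equivalent route, which I regard as interchangeable, instead sets $b \deff \sup\{\, t \geq 0 \mid P \text{ holds throughout } [0,t] \,\}$ and runs the same case split on whether the supremum $b$ is attained.
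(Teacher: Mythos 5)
Your proof is correct and follows essentially the same route as the paper's: both consider $\inf\{t\geq 0 \mid \lnot P(t)\}$, rule out $0$ via condition 1, and split on whether $P$ holds at the infimum, using condition 3 in one case and condition 2 in the other to reach a contradiction. Your write-up is merely more explicit about the endpoint bookkeeping and the role of completeness.
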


\begin{proof}
Necessity is obvious. Sufficiency is easy to show by considering (for contradiction) that there exists $t \geq 0$ such that $\lnot P(t)$ and defining the time $t_*= \inf \{t \geq 0 \mid \lnot P(t)\}$ (which exists as the set is assumed to be non-empty, is bounded from below, and the reals are complete). %
By 1. and 3. we have that $t_* \neq 0$, so $t_*$ must be positive, but in this case $P(t)$ holds for all $t\in [0,t_*)$ (by definition). If $P(t_*)$ holds, then $t_*$ cannot be an infimum (by 3.), and if $\lnot P(t_*)$ then (by 2.) we have that $\lnot P(t)$ holds for all $t \in (t_* - \varepsilon, t_*)$ for some $\varepsilon>0$; a contradiction.
\end{proof}

Using the above real induction principle, the proof of Theorem~\ref{thm:lzz} is immediate if one takes ``$\varphi(t,x) \in S$'' to be the predicate $P(t)$ in Lemma~\ref{thm:realind}. 
We remark that (unlike Nagumo's theorem), Theorem~\ref{thm:lzz} makes no assumptions about the set $S$ being closed, or open. As such, Theorem~\ref{thm:charbyinduction} is very general and applies to all sets and systems of ODEs with locally unique solutions.

Theorem~\ref{thm:lzz} is closely related to~\cite[Thm. 9.1]{DBLP:books/sp/17/ZWZ2017} 
where the authors require $S^c \subseteq \INnf(S)^c$ instead of $S^c \subseteq \INnf(S^c)$.\,\footnote{The set inclusions required in~\cite[Thm. 9.1]{DBLP:books/sp/17/ZWZ2017} can be alternatively phrased as \mbox{$\INnf(S)\subseteq S \subseteq \INf(S)$}.}
Despite the fact that, in general, $\INnf(S)^c \neq \INnf(S^c)$ (cf. counterexample~\ref{unioncounterexample}), the conditions in Theorem~\ref{thm:lzz} and~\cite[Thm. 9.1]{DBLP:books/sp/17/ZWZ2017} are in fact equivalent. We show this equivalence by appealing again to real induction: we first state a slightly different real induction principle that is more suited to prove~\cite[Thm. 9.1]{DBLP:books/sp/17/ZWZ2017}, providing thereby a new simpler proof for this known result, and then show that both principles are in fact equivalent.~\footnote{The idea of using real induction to prove~\cite[Thm. 9.1]{DBLP:books/sp/17/ZWZ2017} was first suggested by Paul B. Jackson and Kousha Etessami (School of Informatics, University of Edinburgh) in private communication with the second author.} 

\begin{lemma}[Real induction (Jackson)]
A predicate $P(t)$ holds for all $t\geq 0$ if and only if: 
\begin{enumerate}
\item $P(0)$,
\item[\emph{2'.}]   $\forall~t > 0.~ \Big( \big( \exists~\varepsilon \in (0,t].~  \forall~T \in (t - \varepsilon, t).~ P(T) \big) \to P(t) \Big)$,
\item[\emph{3.}]   $\forall~t \geq 0.~ \Big( P(t) \to  \big( \exists~\varepsilon > 0.~\forall~T\in (t, t + \varepsilon).~P(T) \big) \Big)$.
\end{enumerate}
Notice that condition \emph{2'.} could be equivalently replaced by its contrapositive form 
\begin{itemize}
    \item[\emph{2''.}] $\forall~t > 0.~ \Big(\lnot P(t) \to  \big( \forall~\varepsilon \in (0,t].~\exists~T\in (t - \varepsilon, t).~\lnot P(T) \big) \Big)$\,.
\end{itemize}
\label{thm:realindLZZ}
\end{lemma}
\begin{proof}
Necessity is obvious. Sufficiency is easy to show by considering (for contradiction) the time $t_*= \inf \{t \geq 0 \mid \lnot P(t)\}$. By 1. and 3. we have that $t_* \neq 0$, so $t_*$ must be positive, but in this case $P(t)$ holds for all $t\in [0,t_*)$ (by definition) and by 2'. we have that $P(t_*)$ holds; a contradiction.
\end{proof}

\begin{remark}
For completeness, we include below a statement of Hathaway's \emph{continuity induction}~\citep{hathaway2011}, which is very similar to the notion of real induction in~\citep{clark2019}. 
A predicate $P(t)$ holds for all $t \in [0, T]$, where $T >0$, if and only if: 
\begin{enumerate}
\item $P(0)$ holds\,,
\item 
$\forall~\tau \in (0,T].~\Big( \big( \forall~\tau'\in[0,\tau).~P(\tau')\big) \to P(\tau) \Big)$\,,
\item $\forall~\tau \in [0,T).~\Big(\big(\forall~\tau' \in [0,\tau].~P(\tau')\big)  \to \big(\exists~\epsilon>0.~\forall~\tau'' \in (\tau,\tau+\epsilon).~P(\tau'')\big)\Big)$\,.
\end{enumerate}
\label{thm:contind}

The proof is essentially identical to that of Lemma~\ref{thm:realindLZZ}.

\end{remark}

The proof of~\cite[Thm. 9.1]{DBLP:books/sp/17/ZWZ2017} now becomes immediate using real induction as it is stated in Lemma~\ref{thm:realindLZZ}. 
The following lemma establishes an equivalence between the two formulations of real induction.

\begin{lemma} \label{lem:eqLemmas} 
Let $P(t)$ denote a predicate defined for all $t \geq 0$. If
\begin{enumerate}
\item[\emph{1.}] $P(0)$\,, and 
\item[\emph{3.}]   $\forall~t \geq 0.~ \Big( P(t) \to \big( \exists~\varepsilon > 0.~\forall~T\in (t, t + \varepsilon).~P(T) \big) \Big)$\, hold ,
\end{enumerate}
then 
\begin{enumerate}
\item[\emph{2.}]   $\forall~t \geq 0.~ \Big( \lnot P(t) \to  \big( \exists~\varepsilon > 0.~\forall~T\in (t - \varepsilon, t).~\lnot P(T) \big) \Big)$\,,
\end{enumerate}
if and only if 
\begin{enumerate}
\item[\emph{2''.}]   $\forall~t > 0.~ \Big(\lnot P(t) \to  \big( \forall~\varepsilon \in (0,t].~\exists~T\in (t - \varepsilon, t).~\lnot P(T) \big) \Big)$\,.
\end{enumerate}
\end{lemma}
\begin{proof}
The implication from 2. to 2''. is obvious (in this sense, one may consider Lemma~\ref{thm:realind} \emph{weaker} than Lemma~\ref{thm:realindLZZ}). To prove the converse, suppose (for contradiction) that 2''. and $\lnot 2.$ both hold. 
More explicitly: 
\[
\lnot 2.~\equiv~\exists~t \geq 0.~ \Big(\lnot P(t) \land  \big( \forall~\varepsilon >0.~\exists~T\in (t - \varepsilon, t).~P(T) \big) \Big)\,.
\]

Let $\tau>0$ be the point at which $\lnot P(\tau)$ holds in $\lnot 2.$ ($\tau$ cannot be $0$ by $1.$) Then for all $\varepsilon_0>0$, there exists some $T_0\in(\tau-\varepsilon_0, \tau)$ such that $P(T_0)$ holds. Consider the interval $I_0 = [T_0,\tau]$. At $T_0$, since $P(T_0)$ holds, we have (by 3.) that $P(t)$ holds for all $t$ in the interval $I_1 = [T_0,T_0+\varepsilon_1)$ for some $\varepsilon_1>0$. If $T_0+\varepsilon_1 \geq \tau$, we obtain a contradiction (because $\lnot P(\tau)$ is assumed to hold); otherwise we have $I_1=[T_0,T_0+\varepsilon_1) \subset [T_0, \tau]$. If at the endpoint of $I_1$ we have that $\lnot P(T_0+\varepsilon_1)$ holds, we obtain a contradiction (by 2''.), and if $P(T_0+\varepsilon_1)$ holds we have (by 3.) that for some $\varepsilon_2>0$, $P(t)$ holds for all $t \in I_2 =  [T_0, T_0+\varepsilon_1+\varepsilon_2)$. Repeating the argument, we obtain a sequence $I_k$ of intervals of strictly increasing length where $P(t)$ holds. The right endpoints of the intervals in this sequence cannot converge within $(T_0,\tau]$ because this would yield a contradiction (by 2''.). 
The right endpoints thus go beyond $\tau$, which again yields a contradiction.
\end{proof}

The equivalence stated in Lemma~\ref{lem:eqLemmas} is somewhat abstract and it may not be immediately clear how this equivalence is relevant with regard to inward sets. To make this more apparent, we prove below a lemma which can be used to establish the equivalence between Theorem~\ref{thm:lzz} and~\cite[Thm. 9.1]{DBLP:books/sp/17/ZWZ2017} without appealing to real induction directly in the proof, although following a similar line of argument as that employed in the proof of Lemma~\ref{lem:eqLemmas}. 
\begin{lemma}
\label{lem:subtlelink}
Let $S\subseteq \bbr^n$. If $S \subseteq \INf(S)$ then $\INnf(S) = \INnf(S^c)^c$. 
\end{lemma}
\begin{proof}
The inclusion $\INnf(S) \subseteq \INnf(S^c)^c$ holds in general by definition as already stated. 
Let $x \in \INnf(S^c)^c$ and let $\varepsilon_0 > 0$. Then, by definition, there exists $t_0 \in (0,\varepsilon_0)$, such that $x_0 :=  {\varphi}(-t_0,x) \not\in S^c$, or equivalently $x_0 \in S$. 
Since $S \subseteq \INf(S)$, $x_0 \in \INf(S)$ and there exists $\gamma_0 > 0$, such that for all $s_0 \in (0,\gamma_0)$, $\varphi(s_0,x_0) \in S$. 
If $-t_0+\gamma_0 < 0$, then the same arguments  with $\varepsilon_1 := t_0 - \gamma_0$ lead to the existence of $t_1,\gamma_1 > 0$ such that for all $s_1 \in (0,\gamma_1)$, $\varphi(s_1,x_1) \in S$ where $x_1 := {\varphi}(-t_1,x)$. 
We can thus construct a (strictly) increasing sequence $-t_0+\gamma_0, -t_1+\gamma_1, \dotsc$. 
Two cases may occur: 

(i) If the sequence crosses zero after finitely many steps, that is there exists $n\geq 0$ such that $-t_n+\gamma_n \geq 0$, then this means that for all $t \in (0,t_n)$, $\varphi(-t,x) \in S$ thereby proving that $x \in \INnf(S)$ since $-t_n < -t < 0 \leq -t_n +\gamma_n$. 

(ii) If the sequence is upper bounded by $0$, then it has a limit $-t_l+\gamma_l \leq 0$. 
The case $-t_l+\gamma_l < 0$ is impossible since we can perform one more step to get $-t_l+\gamma_l < -t_{l'}+\gamma_{l'} \leq 0$. 
Thus $-t_l+\gamma_l = 0$ and one gets $\varphi(-t,x) \in S$ for all $t\in (0,t_l)$ leading, as in case (i), to $x \in \INnf(S)$. 
\end{proof}

\begin{remark}
The statement of the characterization in Theorem~\ref{thm:lzz} enjoys some rather nice properties when compared to that of~\cite[Thm. 9.1]{DBLP:books/sp/17/ZWZ2017}. It is in particular symmetric in the sense that the set inclusions in the theorem are preserved when one simultaneously replaces $S$ with its complement $S^c$ and $f$ with the reversed dynamics $-f$. 
This allows for instance to immediately prove the well-known result in dynamical systems which states that a set is positively invariant if and only if its complement is negatively invariant~\cite[Thm. 1.4]{BhatiaSzego1970}. 
One sees that by syntactically replacing $S$ with $S^c$ and $f$ with $-f$ in the conditions of Theorem~\ref{thm:lzz}, one obtains $S^c \subseteq \INnf(S^c)$ and $S\subseteq \INf(S)$, i.e. \emph{equivalent} conditions, using only the set-theoretic fact that $(S^c)^c = S$. 
On the other hand, applying the same transformation to the conditions $S\subseteq \INf(S)$ and $S^c \subseteq \INnf(S)^c$ required in~\cite[Thm. 9.1]{DBLP:books/sp/17/ZWZ2017}, one does \emph{not} immediately obtain the same conditions; instead, one obtains \mbox{$S^c\subseteq \INnf(S^c)$} and \mbox{$S \subseteq \INf(S^c)^c$}. In order to show that the original inclusions hold one needs to use the fact that $\INnf(S^c) \subseteq \INnf(S)^c$ for the first inclusion, and then use Lemma~\ref{lem:subtlelink} for the second inclusion, which is somewhat more involved than using Theorem~\ref{thm:lzz} to prove the same fact.
\end{remark}

The main practical difficulty in applying Theorem~\ref{thm:lzz} (or equivalently ~\cite[Thm. 9.1]{DBLP:books/sp/17/ZWZ2017}) lies in the fact that inward sets $\INf(S)$ and $\INnf(S^c)$ are defined in terms of solutions to a system of differential equations; the theorem says nothing about our ability to construct these sets or reason about their inclusion. 
The following section will elucidate how this problem is addressed using tools from algebraic geometry in the important case where the set $S$ is semi-algebraic and the right-hand side of the system $x'=f(x)$ is polynomial.

\subsection{A Decision Procedure for Checking Positively Invariant Sets}
\label{sec:computation}
In this section we describe a procedure for deciding whether a given set is positively invariant or not. For this we first require a few basic results. Let $g: \bbr^n \to \bbr$ denote a real-valued function. The zero-th Lie derivative of $g$ is $g$ itself, the first order Lie derivative $g' \deff \nabla g \cdot f$ corresponds to the total derivative of $t \mapsto g(\varphi(t,x))$ with respect to time $t$, and higher-order Lie derivatives are defined inductively, i.e. $g'' = (g')'$; the $k$-th order Lie derivative of $g$ will be denoted by $g^{(k)}$. 
We will require the fact that unique solutions to real analytic systems of ODEs are also real analytic~\cite[Thm 1.3]{Chicone}. 
Whenever $g$ is a real analytic function, its Taylor series expansion
\[
g(\varphi(t,x)) = g(x) + g'(x)t + g''(x)\frac{t^2}{2!} + \cdots
\]
converges in some time interval $(\epsilon_l, \epsilon_u)$, where $\epsilon_l<0<\epsilon_u$.
The set of states $\{ x \in \mathbb{R}^n \mid g(x)=0 \}$, simply denoted by $g=0$ in the sequel, remains invariant under the flow for some non-trivial forward time interval if and only if \emph{all} Lie derivatives $g^{(k)}$, $k \geq 1$, vanish whenever $g(x)=0$. 
\begin{remark}
\label{rmk:notation}
We will abuse notation slightly in this article by interchangeably using sets and formulas characterizing those sets. For example, we will use formulas in the arguments to $\INf$ and $\INnf$ (from Theorem~\ref{thm:charbyinduction}). However, when describing sets we will use set-theoretic symbols $\cup$ and $\cap$ for set union and intersection, respectively, and will let $S^c$ denote the complement of $S$; when we are working with formulas, we will instead employ the corresponding logical symbols $\lor$ and $\land$ for disjunction and conjunction, and $\lnot$ for negation. 
The set $\bbr^n$ (resp. $\emptyset$) will be syntactically represented by the symbol $\T$ (resp. $\F$).
\end{remark}
We thus have the inward set of $g=0$ given by 
\begin{align*}
\quad \INf(g = 0)  & \quad = \quad  g=0 \cap g'=0 \cap g''=0 \cap  g'''=0 \cap \cdots  \quad\,,
\end{align*}
which is characterized by the following infinite ``formula''~\footnote{Technically, a formula can only be finite, hence the quotes for such hypothetical objects.} %
\begin{align*}
 \text{``} \quad \INf(g = 0)  & \quad \equiv \quad  g=0 \land g'=0 \land g''=0 \land  g'''=0 \land \cdots  \quad \text{''}\,.
\end{align*}

For sets of states satisfying inequalities $\{ x \in \mathbb{R}^n \mid g(x)<0 \}$, which we also concisely denote by the formula $g<0$, the situation is similar with the following infinite construction: 
\begin{align*}
 \text{``} \quad \INf(g < 0)  & \quad \equiv \quad  g<0  \\
 \quad & \quad \lor~(g=0 \land g'<0) \\
 \quad & \quad \lor~(g=0 \land g'=0 \land g''<0) \\
 \quad & \quad \lor~(g=0 \land g'=0 \land g''=0 \land  g'''<0) \\
 \quad & \quad  \ \vdots \\
 \quad & \quad \quad \quad \quad \text{''}\,.
\end{align*}
Intuitively, the first non-zero Lie derivative of $g$ needs to be negative at a point $x$ satisfying $g(x)=0$ in order for the flow ${\varphi}(t,x)$ to enter the set $g<0$ from that point and remain within this set throughout some time interval $(0,\epsilon)$, for some positive $\epsilon$.
\begin{remark}
One may draw physical analogies here, e.g. to the motion of a vehicle: if the velocity is $0$, then it is the sign of the acceleration term that determines whether the vehicle will move forward in the next time instant; if both the velocity and the acceleration are $0$, it is the sign of the derivative of the acceleration (i.e. the sign of the jerk term), and so forth. 
\end{remark}

The decision procedure developed by~\cite{DBLP:conf/emsoft/LiuZZ11} rests on the fact that for a polynomial function $p$ and a polynomial system of ODEs $x'=f(x)$, the formulas characterizing 
\mbox{$\INf(p=0)$} and \mbox{$\INf(p<0)$} are indeed \emph{finite}. 
To see why this is true, note that whenever $p$ and $f_1,f_2,\dots,f_n$ that make up $f$ are polynomials, all the formal Lie derivatives $p', p'', \cdots$ are also guaranteed to be polynomials. 
Let us now recall the \emph{ascending chain property} of ideals in the polynomial ring $\mathbb{R}[x_1,\dots,x_n]$ -- a consequence of Hilbert's basis theorem and the fact that the ring $\mathbb{R}$ is Noetherian~\cite[Ch. 2, Thm. 7]{Cox2015}.
\begin{lemma}
\label{lemma:acc}
Let $p\in \mathbb{R}[x_1,\dots,x_n]$, then the ascending chain of ideals
\[
\langle p \rangle \subseteq \langle p, p' \rangle \subseteq \langle p, p', p'' \rangle \subseteq \cdots
\]
is finite, i.e. there exists a $k \in \mathbb{N}$ such that $\langle p, p', \dots, p^{(k)} \rangle = \langle p, p', \dots, p^{(K)} \rangle$ for all $K \geq k$. 
\end{lemma}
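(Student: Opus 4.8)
The plan is to invoke the ascending chain condition (ACC) for ideals in the Noetherian ring $\mathbb{R}[x_1,\dots,x_n]$ directly. First I would observe that since $p \in \mathbb{R}[x_1,\dots,x_n]$ and each component $f_i$ of the vector field is a polynomial, the Lie derivative operation $g \mapsto g' = \nabla g \cdot f = \sum_{i=1}^n \frac{\partial g}{\partial x_i} f_i$ maps polynomials to polynomials; hence by induction every formal Lie derivative $p^{(k)}$ lies in $\mathbb{R}[x_1,\dots,x_n]$. This guarantees that all the ideals $\langle p, p', \dots, p^{(k)}\rangle$ appearing in the chain are genuine ideals of the polynomial ring, and that each is contained in the next simply because it is generated by a subset of the generators of its successor.

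Next I would appeal to Hilbert's basis theorem: $\mathbb{R}$ is a field, hence Noetherian, so by the basis theorem $\mathbb{R}[x_1,\dots,x_n]$ is Noetherian as well (iterating the one-variable statement $n$ times). A standard equivalent formulation of the Noetherian property is precisely the ACC: every ascending chain of ideals stabilizes. Applying this to our particular chain $\langle p \rangle \subseteq \langle p, p' \rangle \subseteq \langle p, p', p'' \rangle \subseteq \cdots$ yields an index $k \in \mathbb{N}$ at which the chain becomes stationary, i.e. $\langle p, p', \dots, p^{(k)}\rangle = \langle p, p', \dots, p^{(k+1)}\rangle$.

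Finally, to get the stronger conclusion that $\langle p, p', \dots, p^{(k)}\rangle = \langle p, p', \dots, p^{(K)}\rangle$ for \emph{all} $K \geq k$, I would argue by induction on $K$. The base case $K = k$ is trivial. For the inductive step, suppose $\langle p, \dots, p^{(K)}\rangle = \langle p, \dots, p^{(k)}\rangle$. Then $p^{(k+1)} \in \langle p, \dots, p^{(k)}\rangle = \langle p, \dots, p^{(K)}\rangle$, so we may write $p^{(k+1)} = \sum_{j=0}^{K} a_j p^{(j)}$ for some polynomials $a_j$. Differentiating both sides along the flow and using the Leibniz rule for the Lie derivative, $\bigl(a_j p^{(j)}\bigr)' = a_j' p^{(j)} + a_j p^{(j+1)}$, shows that $p^{(k+2)}$ is a polynomial combination of $p^{(0)}, \dots, p^{(K+1)}$, each of which already lies in $\langle p, \dots, p^{(K)}\rangle$ (the terms $p^{(0)},\dots,p^{(K)}$ trivially, and $p^{(K+1)}$ because $\langle p,\dots,p^{(K+1)}\rangle = \langle p,\dots,p^{(K)}\rangle$ by the ACC once the chain has stabilized — or equivalently by the same reindexing argument); hence $\langle p, \dots, p^{(K+1)}\rangle = \langle p, \dots, p^{(K)}\rangle$, closing the induction.

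The only mild subtlety — and the part worth stating carefully rather than the main obstacle — is the propagation step: one needs that once two \emph{consecutive} ideals in the chain coincide, \emph{all} subsequent ones do too. This does not follow from ACC abstractly for an arbitrary chain, but it does here because of the special structure: the generator added at each stage is the Lie derivative of the previous one, and the Leibniz rule lets membership of $p^{(k+1)}$ in $\langle p,\dots,p^{(k)}\rangle$ be ``differentiated through'' to give membership of $p^{(k+2)}$, and so on. Making that Leibniz-rule bookkeeping explicit is the one place where a little care is needed; everything else is a direct citation of Hilbert's basis theorem and the Noetherian–ACC equivalence.
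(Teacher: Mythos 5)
Your proof is correct and its core is the same as the paper's: the paper gives no separate argument for this lemma at all, simply recalling that it is a consequence of Hilbert's basis theorem and the Noetherian property of $\mathbb{R}[x_1,\dots,x_n]$, exactly as in your first two paragraphs. Where you diverge is in your final ``propagation'' step, and there you slightly misplace the subtlety. The ascending chain condition for a Noetherian ring already states that every ascending chain becomes \emph{stationary}: there exists $k$ with $I_K = I_k$ for \emph{all} $K \geq k$, which is verbatim the conclusion of the lemma. So no Leibniz-rule argument is needed to prove the statement as written. What genuinely does \emph{not} follow from the ACC alone --- and what your Leibniz computation actually establishes --- is the stronger fact that the \emph{first} index $k$ at which $p^{(k+1)} \in \langle p, p', \dots, p^{(k)}\rangle$ is already the stabilization point, i.e.\ that $\langle p,\dots,p^{(k)}\rangle$ is then a differential ideal containing all higher Lie derivatives. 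That property is not part of the lemma, but it is exactly what justifies the paper's subsequent claim that one may compute $\rk(p)$ by stopping at the first successful ideal-membership test, so your extra work is valuable even if mislabelled. One housekeeping point: your induction differentiates the representation of $p^{(k+1)}$ to reach $p^{(k+2)}$ while the stated goal is $\langle p,\dots,p^{(K+1)}\rangle = \langle p,\dots,p^{(K)}\rangle$, and the parenthetical appeal to ``the ACC once the chain has stabilized'' is circular in that context; the clean version inducts on $K$, writes $p^{(K)} = \sum_{j=0}^{k} a_j p^{(j)}$, and differentiates that relation, after which every term lies in $\langle p,\dots,p^{(k)}, p^{(k+1)}\rangle = \langle p,\dots,p^{(k)}\rangle$.
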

For a given $p$, we denote the smallest $k$ in the above lemma by $\rk(p)$ and say that it defines the \emph{order} of $p$ with respect to the system of polynomial ODEs $x'=f(x)$. 
In practice, we can always compute $\rk(p)$ by simply computing successive formal Lie derivatives of $p$ and successively checking whether
\[
p^{(k+1)} \in \langle  p, p', p'', \dots, p^{(k)} \rangle
\]
holds for $k=1,2,3,\dots$, until the membership check succeeds, which would imply that the ideal chain has stabilized (the fact that this process terminates is guaranteed by Lemma~\ref{lemma:acc}).\footnote{Using terminology from differential algebra~\citep{Ritt1950} one may say that the ideal \mbox{$\langle p, p', \dots, p^{(\rk(p))}\rangle$} defines a \emph{differential ideal}.}
The ideal membership check can be easily performed by reducing the polynomial $p^{(k+1)}$ by the Gr\"obner basis of $\{p, p', \dots, p^{(k)} \}$ for each successive $k$ and checking whether the remainder is $0$.
An upper bound on the length of the ascending chain of ideals generated by successive Lie derivatives of $p$ was obtained in~\cite[Thm. 4]{NovikovYakovenko}; this bound is doubly-exponential in the number of variables, however, in practice one typically observes the ideals stabilizing after only a few iterations.

As a direct consequence of Lemma~\ref{lemma:acc}, 
whenever 
\(
p,p',\dots,p^{(\rk(p))}
\) 
are all simultaneously $0$, all higher derivatives must also evaluate to $0$. 
More formally: 
\begin{align*}
p=0 \land p'=0 \land p''=0 \land  \dots \land p^{(\rk(p))}=0\ \to \  \forall~K>\rk(p).~p^{(K)}=0 \, .
\end{align*}
Using this fact one can construct perfectly legitimate formulas that provide a finite characterization of $\INf(p=0)$ and $\INf(p<0)$, given as follows:

\begin{align*}
 \quad \INf(p = 0)  & \quad \equiv \quad p=0 \land p'=0 \land p''=0 \land \dots \land p^{(\rk(p))}=0\,, \\
 \quad \INf(p < 0)  & \quad \equiv \quad  p<0  \\
 \quad & \quad \lor~(p=0 \land p'<0) \\
 \quad & \quad \lor~(p=0 \land p'=0 \land p''<0) \\
 \quad & \quad  \ \vdots \\
 \quad & \quad \lor~(p=0 \land p'=0 \land p''=0 \land \dots \land p^{(\rk(p))}<0)\,.
\end{align*} 
Notice that in the construction of $\INf(p<0)$ the saturation of the chain of ideals guarantees that all further terms in the disjunction, i.e.  
\[
p=0 \land p'=0 \land p''=0 \land \dots \land p^{(\rk(p))}=0 \land \dots \land p^{(K)}<0
\] where $K>\rk(p)$, are \textsf{False} and therefore unnecessary.

\subsection{Improving the Construction of $\INf(p=0)$ and $\INf(p<0)$}
\label{subsec:lzzimprovements}
One may work na\"ively with ideals generated by the successive Lie derivatives $\langle p, p', p'', \dots, p^{(k)} \rangle$ and construct $\INf(p=0)$ and $\INf(p<0)$ using these derivatives directly (as above), following~\cite{DBLP:conf/emsoft/LiuZZ11}. However, this construction can be improved if one realizes that only the \emph{remainders} of the Lie derivatives are needed for this construction, as will be shown in the following lemma. The practical advantage afforded by doing this is the degree of the remainder polynomials, which is typically lower than the degree of the Lie derivatives themselves. 

\begin{lemma}
\label{lem:rem}
Given a polynomial $p$ and a system of polynomial ODEs $x'=f(x)$, let $\rem_0 = p$ and let $\rem_{i+1}$ be defined inductively as the \emph{remainder} obtained from polynomial reduction (i.e. multivariate polynomial division) of the Lie derivative $\rem_{i}'$ by the polynomials $\{\rem_0, \rem_1 \dots, \rem_{i} \}$. Then for all $i \geq 0$
\[\rem_{i} = p^{(i)} - \sum_{j=0}^{i-1} \alpha_{ij} p^{(j)} \]
where $\alpha_{i j}$ are polynomials. %
\label{lemma:remsigmap}
\end{lemma}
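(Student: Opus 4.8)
The plan is a straightforward induction on $i$, but with the statement slightly strengthened so that the induction goes through cleanly. Instead of the asserted identity I will prove: \emph{for every $i\ge 0$, $\rem_i = \sum_{j=0}^{i}\beta_{ij}\,p^{(j)}$ for some polynomials $\beta_{ij}$ with $\beta_{ii}=1$}; this is exactly the claim of the lemma with $\alpha_{ij}=-\beta_{ij}$ (and it records the extra bookkeeping fact that the top Lie derivative $p^{(i)}$ appears in $\rem_i$ with coefficient exactly $1$). The only property of polynomial reduction I will use is the defining identity of multivariate division: reducing a polynomial $h$ by a finite ordered set $\{h_0,\dots,h_k\}$ produces a remainder $r$ together with quotient polynomials $q_0,\dots,q_k$ such that $h=\sum_{l=0}^{k}q_l h_l+r$ (this is what underlies the Gr\"obner-basis membership test mentioned above, and it holds regardless of the monomial order or the order in which divisors are used, so the freedom in choosing $\beta_{ij}$ is enough).

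The base case $i=0$ is immediate since $\rem_0=p=p^{(0)}$. For the inductive step, assume the strengthened claim holds for all indices up to $i$. By construction $\rem_{i+1}$ is the remainder of $\rem_i'$ on division by $\{\rem_0,\dots,\rem_i\}$, so $\rem_{i+1}=\rem_i'-\sum_{j=0}^{i}q_j\,\rem_j$ for some polynomials $q_j$. Differentiating the inductive hypothesis for $\rem_i$ and using the product rule, $\rem_i'=\sum_{k=0}^{i}\bigl(\beta_{ik}'\,p^{(k)}+\beta_{ik}\,p^{(k+1)}\bigr)=p^{(i+1)}+\sum_{k=0}^{i}\gamma_k\,p^{(k)}$ for suitable polynomials $\gamma_k$: the only way to produce $p^{(i+1)}$ is from the term $\beta_{ii}\,p^{(i)}$, whose coefficient $\beta_{ii}$ is $1$ by hypothesis, and all remaining terms involve only $p^{(0)},\dots,p^{(i)}$. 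Likewise, by the inductive hypothesis each $\rem_j$ with $j\le i$ is a polynomial combination of $p^{(0)},\dots,p^{(j)}\subseteq\{p^{(0)},\dots,p^{(i)}\}$, hence $\sum_{j=0}^{i}q_j\,\rem_j$ is a polynomial combination of $p^{(0)},\dots,p^{(i)}$ only. Subtracting, $\rem_{i+1}=p^{(i+1)}+\sum_{k=0}^{i}\beta_{i+1,k}\,p^{(k)}$ with $\beta_{i+1,i+1}=1$, which is the strengthened claim for $i+1$.

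I do not expect a genuine obstacle here; the argument is routine. The one place that requires care — and the sole reason for strengthening the statement — is keeping track of the coefficient of the top derivative $p^{(i+1)}$ in $\rem_{i+1}$: it must come out to be exactly $1$ (not some other polynomial), since that is what isolates $p^{(i)}$ in the form $\rem_i=p^{(i)}-\sum_{j<i}\alpha_{ij}p^{(j)}$. This reduces to the elementary observation that $\frac{d}{dt}\bigl(\beta_{ii}\,p^{(i)}\bigr)$ with $\beta_{ii}\equiv 1$ contributes $p^{(i+1)}$ with coefficient $1$, while no lower-order term of $\rem_i'$ and none of the quotient-times-$\rem_j$ terms can contribute any $p^{(i+1)}$. (As an aside, the same identity run in reverse shows $p^{(i)}\in\langle\rem_0,\dots,\rem_i\rangle$, so $\langle p,p',\dots,p^{(i)}\rangle=\langle\rem_0,\dots,\rem_i\rangle$ for every $i$, which is why the remainders may safely replace the Lie derivatives in the constructions of $\INf(p=0)$ and $\INf(p<0)$.)
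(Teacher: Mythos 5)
Your proof is correct and follows essentially the same route as the paper's: strong induction on $i$, writing $\rem_{i+1}=\rem_i'-\sum_j q_j\rem_j$ from the division identity, and applying the product rule to see that $\rem_i'$ is $p^{(i+1)}$ plus a polynomial combination of lower-order Lie derivatives. Your explicit tracking of the unit coefficient on the top derivative makes precise a point the paper's proof uses implicitly, but the argument is the same.
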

\begin{proof}
By induction. Base case: $\rem_0 = p = p^{(0)}$.
For an inductive hypothesis, assume that \( \rem_k = p^{(k)} - \sum_{j=0}^{k-1} \alpha_{kj} p^{(j)} \) holds for all \mbox{$k \leq n$}. Since $\rem_{n+1}$ is the remainder upon the reduction of $\rem_{n}'$ by $\{ \rem_0, \dots, \rem_{n} \}$, we have \mbox{$\rem_{n+1} = \rem_n' - \sum_{i=0}^{n} \beta_i \rem_i$}, where $\beta_0,\dots,\beta_{n}$ are polynomials. From our inductive hypothesis and by applying the product rule for differentiation we have
\begin{align}
\rem_n' = p^{(n+1)} - \left(\sum_{j=0}^{n-1} \alpha_{nj} p^{(j)}\right)'  &= p^{(n+1)} - \sum_{j=0}^{n} \gamma_j p^{(j)}\,,
\label{lemma:remprime}
\end{align}
where $\gamma_0, \dots,\gamma_{n}$ are polynomials, and 
\begin{align*}
\rem_{n+1} &= \rem_n' - \sum_{i=0}^{n} \beta_i \rem_i \qquad \text{[from the definition]\,}\\
 &= \left(p^{(n+1)} - \sum_{j=0}^{n} \gamma_j p^{(j)}\right) - \sum_{i=0}^{n} \beta_i \rem_i \qquad \text{[from (\ref{lemma:remprime})]}\,, \\
 &= \left(p^{(n+1)} - \sum_{j=0}^{n} \gamma_j p^{(j)}\right) - \sum_{i=0}^{n} \beta_i\left(p^{(i)} - \sum_{l=0}^{i-1} \alpha_{il} p^{(l)}\right) \quad \text{[by hypothesis]\,,}
\end{align*}
from which it is apparent that $\rem_{n+1}$ has the required form: 
\[
\rem_{n+1} = p^{(n+1)} - \sum_{j=0}^{n} \alpha_{n+1j} p^{(j)}\,.
\]
\end{proof}

\begin{lemma}
\label{lem:INfrem}
Let $\rem_i$ be defined as in Lemma~\ref{lemma:remsigmap}. Then the inward sets can be characterized as follows:
\begin{align*}
 \quad \INf(p = 0)  & \quad \equiv \quad (\rem_0=0 \land \rem_1=0 \land \rem_2=0 \land \dots \land \rem_{\rk(p)}=0)
\end{align*}
and
\begin{align*}
 \quad \INf(p < 0)  & \quad \equiv \quad  \rem_0<0  \\
 \quad & \quad \lor~(\rem_0=0 \land \rem_1<0) \\
 \quad & \quad \lor~(\rem_0=0 \land \rem_1=0 \land \rem_2<0) \\
 \quad & \quad  \ \vdots \\
 \quad & \quad \lor~(\rem_0=0 \land \rem_1=0 \land \rem_2=0 \land \dots \land \rem_{\rk(p)}<0)\,.
\end{align*}
\end{lemma}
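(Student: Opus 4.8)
The plan is to reduce the claim to the Lie-derivative characterizations of $\INf(p=0)$ and $\INf(p<0)$ that have already been established above (and justified via the ascending chain property, Lemma~\ref{lemma:acc}), by showing that replacing each $p^{(i)}$ by $\rem_i$ throughout those formulas does not change the semi-algebraic set being described. The bridge between the two descriptions is Lemma~\ref{lemma:remsigmap}, which expresses each $\rem_i$ as $p^{(i)}$ minus a polynomial combination of the lower-order Lie derivatives $p^{(0)},\dots,p^{(i-1)}$.

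First I would establish what I will call a \emph{prefix equivalence}: for every $m \ge 0$ and every $x \in \bbr^n$,
\[
p^{(0)}(x)=\dots=p^{(m)}(x)=0 \quad\Longleftrightarrow\quad \rem_0(x)=\dots=\rem_m(x)=0 .
\]
The forward implication is immediate from Lemma~\ref{lemma:remsigmap}: if $p^{(j)}(x)=0$ for all $j\le m$, then each $\rem_i(x)=p^{(i)}(x)-\sum_{j=0}^{i-1}\alpha_{ij}(x)p^{(j)}(x)=0$ for $i\le m$. For the converse I would argue by induction on $i\le m$: the base case is $\rem_0=p=p^{(0)}$, and if $p^{(0)}(x)=\dots=p^{(i-1)}(x)=0$ then $0=\rem_i(x)=p^{(i)}(x)-\sum_{j=0}^{i-1}\alpha_{ij}(x)p^{(j)}(x)=p^{(i)}(x)$. (Equivalently, one may first invert the triangular relation of Lemma~\ref{lemma:remsigmap} to write $p^{(i)}=\rem_i+\sum_{j=0}^{i-1}\alpha'_{ij}\rem_j$ for suitable polynomials $\alpha'_{ij}$, after which both directions are direct.)

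The characterization of $\INf(p=0)$ then follows at once by applying the prefix equivalence with $m=\rk(p)$ to the already-established formula $\INf(p=0)\equiv\bigwedge_{i=0}^{\rk(p)}(p^{(i)}=0)$. For $\INf(p<0)$ I would argue disjunct by disjunct: the $i$-th clause $\big(\bigwedge_{j=0}^{i-1}\rem_j=0\big)\land(\rem_i<0)$ describes the same set as $\big(\bigwedge_{j=0}^{i-1}p^{(j)}=0\big)\land(p^{(i)}<0)$, because on the set where $p^{(0)}=\dots=p^{(i-1)}=0$ — equivalently, by the prefix equivalence, where $\rem_0=\dots=\rem_{i-1}=0$ — Lemma~\ref{lemma:remsigmap} gives $\rem_i=p^{(i)}$ pointwise, so the strict inequalities $\rem_i<0$ and $p^{(i)}<0$ coincide there. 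Taking the disjunction over $i=0,\dots,\rk(p)$ and invoking the established Lie-derivative formula for $\INf(p<0)$ finishes the argument.

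There is no serious obstacle here; the proof is essentially bookkeeping. The one point that needs a little care is that these equivalences are between semi-algebraic \emph{sets}, not between polynomials: the substitution $\rem_i\leftrightarrow p^{(i)}$ is valid only pointwise at states where the preceding derivatives (equivalently, remainders) all vanish. But that is precisely the context in which each $\rem_i$ or $p^{(i)}$ appears in the formulas for $\INf(p=0)$ and $\INf(p<0)$, so the required conditioning is automatically in force and the substitution is legitimate clause by clause.
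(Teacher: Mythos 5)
Your proof is correct and follows essentially the same route as the paper's: an inductive ``prefix equivalence'' showing $\bigwedge_{i=0}^{m}\rem_i=0 \equiv \bigwedge_{i=0}^{m}p^{(i)}=0$ via Lemma~\ref{lemma:remsigmap}, then substituting clause by clause. Your treatment of the $\INf(p<0)$ case is in fact slightly more explicit than the paper's, which merely asserts that it ``follows a similar inductive argument.''
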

\begin{proof}
For $\INf(p=0)$, we show by induction that
\[
\forall~n\geq 0.~\left(\bigcap_{i=0}^n \rem_i=0\right)  \ = \  \left(\bigcap_{i=0}^n p^{(i)}=0 \right)\,.
\]
Base case: $\rem_0 = p^{(0)}=p$ by definition. For the inductive hypothesis, let us assume that 
\[
\left(\bigcap_{i=0}^k \rem_i=0\right)  \ = \  \left(\bigcap_{i=0}^k p^{(i)}=0 \right)\,
\]
holds for some $k\geq 0$. Then from the hypothesis we have that
\[
\left(\bigcap_{i=0}^{k+1} \rem_i=0\right)  \ = \  \left(\bigcap_{i=0}^{k} p^{(i)}=0~\cap \rem_{k+1}=0\right)\,.
\]
By Lemma~\ref{lemma:remsigmap} we have 
\(
\rem_{k+1} = p^{(k+1)} - \sum_{j=0}^{k} \alpha_{k+1j} p^{(j)}
\) and hence
\begin{align*}
\left(\bigcap_{i=0}^{k+1} \rem_i=0\right)  \ &= \  \left(\bigcap_{i=0}^{k} p^{(i)}=0~\cap p^{(k+1)} - \sum_{j=0}^{k} \alpha_{k+1j} p^{(j)}=0\right)\, \\
\ &= \ \left(\bigcap_{i=0}^{k} p^{(i)}=0~\cap p^{(k+1)}=0\right)\,.
\end{align*}
The proof for $\INf(p<0)$ follows a similar inductive argument.
\end{proof}

\begin{remark}
Using the remainders instead of the higher-order Lie derivatives of $p$ for constructing $\INf(p=0)$ and $\INf(p<0)$ is pragmatically often a good choice. 
For a concrete example, consider the Van der Pol oscillator whose dynamics is given by $x'=y$ and $y'=-x - y (x^2 - 1)$, and let $p=x^2+y^2-1$. The ascending chain of ideals \[
\langle \rem_0 \rangle \subseteq \langle \rem_0, \rem_1 \rangle \subseteq \langle \rem_0, \rem_1, \rem_2 \rangle \subseteq \cdots
\]
stabilizes at $\langle \rem_0, \dots, \rem_6 \rangle$, which is  $\langle x^2 + y^2 -1, 2 y^4, -8 x y^3, 24 y^2, -48 x y, 48 \rangle$. In contrast, if one uses the actual higher-order Lie derivatives, the $6$ generators of the ideal $\langle p, p', \dots, p^{(6)} \rangle$ are too large to all fit on this page, with $p^{(6)}$ having total degree $12$. 
We should however note that in certain cases it may be more expensive to compute the ideal $\langle \rem_0, \rem_1,\dotsc, \rem_{\rk(p)} \rangle$ than it is to compute $\langle p, p', \dotsc, p^{(\rk(p))} \rangle$ because the potential gain in lowering the total degree of the ideal generators can be outweighed by the computational overhead arising from the size of the coefficients of the intermediate polynomials. This is a well-known phenomenon when computing Gr\"obner bases~\cite[Ch. 2, p.116]{Cox2015}.  
\end{remark}

\subsection{Distributive Properties of $\INf$}
Viewed as a set operator, $\INf$ distributes over set intersections. For any sets $S_1, S_2 \subseteq \bbr^n$, one has: 
\[\INf(S_1 \cap S_2) = \INf(S_1) \cap \INf(S_2)\,. \]
The operator $\INf$ does not, however, distribute over set union; only the following set inclusion is guaranteed to hold in general:
\[
\INf(S_1 \cup S_2) \supseteq \INf(S_1) \cup \INf(S_2)\,.
\]

\begin{counterexample}\label{unioncounterexample}
To see why the converse inclusion does not hold, consider the simple $1$-dimensional system $x' = 1$ and the set    
\[
S = \left\{ x \in \bbr \mid x \leq 0 \lor \left(x > 0 \land \sin\left(x^{-1}\right)=0\right) \right\}\, .
\]
The point $0\in \bbr$ cannot be an element of $\INf(S)$ because $\varphi(t,0)=t$ and for any positive $\epsilon$ there exists a $t \in (0,\epsilon)$ such that $\sin\bigl(t^{-1}\bigr)\neq 0$ and therefore $\varphi(t,0) \not\in S$. 
In other words, $0$ belongs to $\INf(S)^c$. 
At the same time, $0$ cannot be in $\INf(S^c)$ either because the flow cannot move from the point at $x=0$ without crossing one root of $\sin\bigl(t^{-1}\bigr)=0$. 
Thus 
\[
\INf(S \cup S^c) = \INf(\mathbb{R}^n) = \bbr^n \neq \INf(S) \cup \INf(S^c) \, .
\]
The example also shows that in general $\INf(S^c)$ is not equal to $\INf(S)^c$ since $0\in \INf(S)^c$ while $0\not\in \INf(S^c)$.\,\footnote{Recall from Remark~\ref{rm:infinclusions} that $\INf(S^c) \subseteq \INf(S)^c$ holds for any set $S$. The above counterexample demonstrates that the converse inclusion does not hold generally.}

\end{counterexample}
For semi-analytic sets the $\INf$ operator \emph{does distribute} over set unions. In particular, for semi-algebraic sets (a special class of semi-analytic sets) given by
\[ 
S = \bigcup_{i=1}^{l}\left( \bigcap_{j=1}^{m_i}~p_{ij} < 0~\cap~\bigcap_{j=m_i+1}^{M_i}~p_{ij} = 0 \right)\,, 
\]
where $p_{ij}$ are polynomials, one has:
\[ 
\INf(S) = \bigcup_{i=1}^{l}\left( \bigcap_{j=1}^{m_i}~\INf(p_{ij} < 0)~\cap~\bigcap_{j=m_i+1}^{M_i}~\INf(p_{ij} = 0) \right)\,.
\]
A proof of this property for semi-algebraic sets~\cite[Lemma 20]{DBLP:conf/emsoft/LiuZZ11} was generalized to semi-analytic sets in~\cite[\S 6.1.2]{DBLP:journals/jacm/PlatzerT20}. 
These results in particular mean that, if one restricts attention to these classes of sets, the equality $\INf(S)^c = \INf(S^c)$ holds (making the equivalence of Theorem~\ref{thm:lzz} and~\cite[Thm. 9.1]{DBLP:books/sp/17/ZWZ2017} immediate, contrary to the general setting where this equality does not hold and where Lemma~\ref{lem:subtlelink} is required to prove the equivalence). 

\subsection{The \LZZ~Decision Procedure Based on Theorem~\ref{thm:charbyinduction}}
\label{sec:lzzalg}
Given a quantifier-free formula describing a semi-algebraic set 
\[ 
S \equiv \bigvee_{i=1}^{l}\left( \bigwedge_{j=1}^{m_i}~p_{ij} < 0~\land~\bigwedge_{j=m_i+1}^{M_i}~p_{ij} = 0 \right)\,,
\]
and a polynomial system of ODEs $x' = f(x)$, in order to decide whether $S$ is a positively invariant set, a basic decision procedure using the characterizations based on inward sets (Theorem~\ref{thm:charbyinduction} and \cite[Thm. 19]{DBLP:conf/emsoft/LiuZZ11}), which we term \LZZ,~after the authors in~\cite{DBLP:conf/emsoft/LiuZZ11}, can be implemented by performing the following steps: 

\begin{enumerate}
\item Compute $\INf(p_{ij} \bowtie_{ij} 0)$, where $\bowtie_{ij} \in \{=,<\}$ appearing in $S$ (formulas $p<0$ and $p=0$, where $p$ is a polynomial, will be referred to as \emph{atomic formulas}), and from these construct 
\[ 
\INf(S) \equiv \bigvee_{i=1}^{l}\left( \bigwedge_{j=1}^{m_i}~\INf(p_{ij} < 0)~\land~\bigwedge_{j=m_i+1}^{M_i}~\INf(p_{ij} = 0) \right)\, ,
\]
following the distributive property of $\INf$ for semi-algebraic sets $S$. 
\item Construct $\INnf(S^c)$ following the same process as in \textsf{step 1}, but using the complement $S^c$ and the reversed system ${x}' = -f({x})$.
\item Check the semi-algebraic set inclusions $S\subseteq \INf(S)$ and $S^c \subseteq \INnf(S^c)$ from Theorem~\ref{thm:charbyinduction} using e.g. the CAD algorithm of~\cite{DBLP:journals/jsc/CollinsH91}.
\end{enumerate}
\begin{remark}
One can alternatively construct $\INnf(S)$ in \textsf{step 2} and check the inclusions $S\subseteq \INf(S)$ and $S^c \subseteq \INnf(S)^c$ in \textsf{step 3}, following the original method of~\cite{DBLP:conf/emsoft/LiuZZ11}, rather than the characterization in Theorem~\ref{thm:lzz}.
\end{remark}
A basic implementation of the \LZZ~decision procedure thus requires an algorithm for computing Gr\"obner bases (to compute the inward sets in \textsf{step 1} and \textsf{step 2}) and a decision procedure for the universally (or existentially) quantified fragment of real arithmetic (to check the semi-algebraic set inclusions in \textsf{step 3}).

In practice, the syntactic description of $S$ may feature atomic formulas that are not of the form $p<0$ or $p=0$, e.g. $S$ may feature the comparison operators $>,\geq,\leq$, and may have atomic formulas where the term on the right-hand side of the comparison operator is not $0$ as assumed above. To implement \textsf{step 1} and \textsf{step 2} for this more general case (without tampering with the description of $S$) it is convenient to compute $\INf(S)$ by syntactically replacing all atomic formulas $p_{\mathrm{lhs}} \bowtie p_{\mathrm{rhs}}$ (where $p_{\mathrm{lhs}}$ and $p_{\mathrm{rhs}}$ are polynomials and $\bowtie\,\in \{ <, \leq, =, \neq, \geq, > \}$) appearing in the syntactic description of $S$, with $\INf(p_{\mathrm{lhs}} \bowtie p_{\mathrm{rhs}})$, which can be defined for atoms in terms of the primitives $\INf(p < 0)$ and $\INf(p = 0)$ in the following way (we use `$:=$' to denote function definitions):
\begin{flalign*}
   \qquad & \INf(\T) := \T\,,& \\
   \qquad & \INf(\F) := \F\,,\\
   \qquad & \INf(p_\mathrm{lhs}=p_\mathrm{rhs}) := \INf(p_\mathrm{lhs}-p_\mathrm{rhs}=0)\,,\\
   \qquad & \INf(p_\mathrm{lhs}<p_\mathrm{rhs}) := \INf(p_\mathrm{lhs}-p_\mathrm{rhs}<0)\,,\\
   \qquad & \INf(p_\mathrm{lhs}>p_\mathrm{rhs}) := \INf(p_\mathrm{rhs}-p_\mathrm{lhs}<0)\,,
\end{flalign*}
and, using the fact that $\INf(S^c) = \INf(S)^c$ for semi-algebraic sets $S$,
\begin{flalign*}
   \qquad & \INf(p_\mathrm{lhs} \neq p_\mathrm{rhs}) := \lnot\, \INf(p_\mathrm{lhs}-p_\mathrm{rhs}=0)\,, \\
   \qquad & \INf(p_\mathrm{lhs}\leq p_\mathrm{rhs}) := \lnot\, \INf(p_\mathrm{rhs}-p_\mathrm{lhs}<0)\,,\\
   \qquad & \INf(p_\mathrm{lhs}\geq p_\mathrm{rhs}) := \lnot\, \INf(p_\mathrm{lhs} - p_\mathrm{rhs}<0)\,.&
\end{flalign*}
The primitives $\INf(p = 0)$ and $\INf(p < 0)$ are defined following Lemma~\ref{lem:INfrem} as
\begin{flalign*}
 \quad \INf(p = 0)  & \quad := \quad (\rem_0=0 \land \rem_1=0 \land \rem_2=0 \land \dots \land \rem_{\rk(p)}=0)\,, \\
 \quad \INf(p < 0)  & \quad := \quad \Big( \rem_0<0  \\
 \quad & \qquad \lor~(\rem_0=0 \land \rem_1<0) \\
 \quad & \qquad \lor~(\rem_0=0 \land \rem_1=0 \land \rem_2<0) \\
 \quad & \qquad  \ \vdots \\
 \quad & \qquad \lor~(\rem_0=0 \land \rem_1=0 \land \rem_2=0 \land \dots \land \rem_{\rk(p)}<0) \Big)\,.
\end{flalign*}

An implementation of the \LZZ~decision procedure in the Wolfram Language can be achieved with fewer than 35 lines of code following the above approach.\footnote{Our implementation is available from~\citep{code}} 
\section{Characterizing Positive Invariance Through Exit Sets}
\label{sec:topology}

In this section we develop an alternative characterization of positively invariant sets using the concept of \emph{exit set} 
as formulated by~\cite{conley}.
Let $s \in I_{x}$ be a point in time within the maximal interval of existence of solution $\varphi$ from initial value $x$, and let $I_{\varphi(s,x)} \deff \{t \mid t+s \in I_{x} \}$, which is simply the time interval $I_{x}$ offset by $s$ (or, equivalently, the maximal interval of existence from the initial value $\varphi(s,x)$). 
The mapping $\varphi$ defines a local flow on the topological space $\bbr^n$ since, for all $x \in \bbr^n$, $\varphi(0,x) = x$, and 
\[
\forall~s \in I_x.~\forall~t \in I_{\varphi(s,x)}. \quad
\varphi(t,\varphi(s,x)) = \varphi(s+t, x)\,.
\] 

Let $S$ be a subset of $\bbr^n$. 
Recall that a point $x \in \bbr^n$ is a closure point of $S$ if and only if every open set containing $x$ intersects $S$ in at least one point (not necessarily distinct from $x$ itself if $x$ happens to be in $S$). Let $S^\circ$ denote the interior of $S$. The boundary of $S$, denoted $\partial S$, is defined as $S \setminus S^\circ$.
As before, we use $t>0$ as a shorthand for $t \in I_x \cap (0,+\infty)$ and, similarly, by $t<0$ we understand $t \in I_x \cap (-\infty,0)$. 
\begin{definition}[Exit Set~\citep{conley}]
\label{def:exitset}
The \emph{exit set} of $S \subseteq \mathbb{R}^n$ with respect to the local flow induced by $x'=f(x)$ is defined as follows:
\begin{equation*}
\exit_f(S) \deff \{ x \in S \mid \forall~t > 0.~\exists~s \in (0,t). ~ \varphi(s,x) \not\in S \}\,.
\end{equation*}
\end{definition}
The exit set of $S$ defines the set of points in $S$ from which the flow cannot evolve forward in time without leaving the set $S$. %
As the name suggests, a flow starting at a point in $\exit_f(S)$ ``leaves the set $S$ immediately'' (regardless of where it was before). 
It is intuitive that such points can only lie on the boundary of $S$.  
\begin{lemma}\label{lem:exitboundary}
The set $\exit_f(S)$ is a subset of $\partial S$ (in addition to being a subset of $S$, by definition). 
\end{lemma}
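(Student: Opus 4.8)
The plan is to establish the inclusion by proving its contrapositive at the level of points: no point of the interior $S^\circ$ can lie in $\exit_f(S)$. Since $\exit_f(S) \subseteq S$ holds trivially from Definition~\ref{def:exitset}, this immediately gives $\exit_f(S) \subseteq S \setminus S^\circ \subseteq \bar S \setminus S^\circ = \partial S$, which is the desired conclusion.

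First I would fix an arbitrary $x \in S^\circ$ and exploit the continuity of the solution $t \mapsto \varphi(t,x)$ together with the fact that $\varphi(0,x) = x$ belongs to the \emph{open} set $S^\circ$. Continuity at $t = 0$ from the right — recalling that the maximal interval of existence $I_x$ is an open interval containing $0$ — yields some $t_0 > 0$ with $t_0 \in I_x$ such that $\varphi(s,x) \in S^\circ \subseteq S$ for every $s \in [0, t_0)$.

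Next I would use this $t_0$ as the witness that refutes the defining condition of the exit set at $x$: the predicate $\forall~t > 0.~\exists~s \in (0,t).~\varphi(s,x) \not\in S$ fails when $t$ is taken to be $t_0$, because $\varphi(s,x) \in S$ for all $s \in (0,t_0)$. Hence $x \notin \exit_f(S)$. As $x$ was an arbitrary point of $S^\circ$, this shows $\exit_f(S) \cap S^\circ = \emptyset$, and the chain of inclusions above concludes the argument.

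I do not anticipate a genuine obstacle here: the proof is a short continuity argument. The only points that merit a little care are (i) checking that the chosen time $t_0$ can be taken inside $I_x$, so that the quantifier ``$t > 0$'' in Definition~\ref{def:exitset} is meaningful — this is fine because $I_x$ is open and contains $0$; and (ii) noting that it is precisely the \emph{openness} of $S^\circ$ that makes the continuity step work, so the statement is really about $S^\circ$ rather than $S$ itself.
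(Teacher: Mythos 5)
Your argument is correct and is essentially the paper's own proof: both use continuity of $t \mapsto \varphi(t,x)$ at $t=0$ to keep the trajectory inside an open neighbourhood contained in $S^\circ$ for a short positive time, thereby refuting the immediate-exit condition, and then combine $\exit_f(S)\cap S^\circ=\emptyset$ with $\exit_f(S)\subseteq S$ to conclude $\exit_f(S)\subseteq \partial S$. The only difference is presentational — the paper argues by contradiction from $x\in\exit_f(S)\cap S^\circ$ while you argue the contrapositive directly — which changes nothing of substance.
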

\begin{proof}
Let $x \in \exit_f(S) \cap S^\circ$, then there exists an open set $U \subset S^\circ$ containing $x$. By continuity of $\varphi(\cdot,x)$ with respect to time, there exists a neighbourhood $I$ of $0$ in $I_x$ such that $\varphi(t,x) \in U$ for all $t \in I$. 
Let $t \in I \cap (0,+\infty)$. Since $x\in \exit_f(S)$, there exists $s \in (0,t) \subset I$ such that $\varphi(s,x) \not\in S$ and, a fortiori, $\varphi(s,x) \not\in U$, which contradicts the existence of $I$ and thus $\exit_f(S) \cap S^\circ = \emptyset$. Since $\exit_f(S) \subseteq S$ by definition, the exit set is a subset of $\partial S$.
\end{proof}

Positive invariance of a set $S$ (as given in Definition~\ref{def:posinv}) may be equivalently defined using the set of  so-called \emph{escape points}~\citep[also due to][]{conley}:~\footnote{The set of escape points is fundamental to the Wa\.zewski principle. See~\cite{conley} where it is denoted as $W^\circ$ for a set $W$.} 
\begin{equation}
    \label{eq:escape}
    \escape_f(S) \deff \{ x \in S \mid \exists~t>0.~\varphi(t,x) \not\in S\}\,.
\end{equation}

Notice the difference between the exit and escape sets: starting at an exit point, the flow \emph{immediately exits} the set $S$, whereas for an escape point the flow may first evolve within $S$ before leaving $S$ at some point in time in the future (i.e., it \emph{must} eventually leave $S$). 
Thus, $\exit_f(S) \subseteq \escape_f(S)$. 
The set of escape points of $S$ is empty precisely when $S$ is a positively invariant set. Furthermore, this criterion can be stated entirely in terms of exit sets. 
\begin{theorem}
\label{thm:char}
A set $S\subseteq \mathbb{R}^n$ is positively invariant if and only if both $\exit_f(S)$ and $\exit_{-f}(S^c)$ are empty. 
\end{theorem}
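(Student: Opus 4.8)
The plan is to prove the two directions of the equivalence separately, using the escape set $\escape_f(S)$ from~\eqref{eq:escape} as a bridge. Recall (as stated just before the theorem) that $S$ is positively invariant if and only if $\escape_f(S) = \emptyset$. So it suffices to show that $\escape_f(S) = \emptyset$ and $\escape_{-f}(S^c) = \emptyset$ together are equivalent to $\exit_f(S) = \emptyset$ and $\exit_{-f}(S^c) = \emptyset$. Since $\exit_f(S) \subseteq \escape_f(S)$ trivially (an exit point leaves $S$ immediately, hence certainly leaves it at \emph{some} positive time), the forward direction is immediate: if $S$ is positively invariant, then $\escape_f(S) = \escape_{-f}(S^c) = \emptyset$, and the exit sets, being subsets, are also empty.

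For the converse, I would argue by contraposition: assume $S$ is \emph{not} positively invariant and produce a nonempty exit set. Non-invariance means there is a point $x \in S$ and a time $\tau > 0$ with $\varphi(\tau, x) \notin S$. The key step is to locate, along the trajectory segment $\{\varphi(t,x) \mid t \in [0,\tau]\}$, a point that is an \emph{exit} point of $S$ or an \emph{exit} point of $S^c$ for the reversed flow $-f$. The natural candidate is obtained from the ``last time the trajectory is in $S$'': set $t_* \deff \sup\{ t \in [0,\tau] \mid \varphi(s,x) \in S \text{ for all } s \in [0,t]\}$, or more simply $t_* \deff \sup\{ t \in [0,\tau] \mid \varphi(t,x) \in S \}$ and let $y \deff \varphi(t_*, x)$. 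One then performs a case split on whether $y \in S$ or $y \notin S$. If $y \in S$, then for every $t > 0$ there is some $s \in (0,t)$ with $\varphi(s, y) = \varphi(t_* + s, x) \notin S$ (because beyond $t_*$ the trajectory must leave $S$, using $t_* < \tau$ since $\varphi(\tau,x)\notin S$ forces $t_* < \tau$), so $y \in \exit_f(S)$, contradiction. If $y \notin S$, i.e. $y \in S^c$, then running the flow \emph{backwards} from $y$ — which corresponds to the forward flow of $-f$ — we have $\varphi(-s, y) = \varphi(t_* - s, x) \in S$ for all sufficiently small $s > 0$ (here $t_* > 0$ because $\varphi(0,x) = x \in S$ means $t_*$ cannot be $0$ if the trajectory leaves $S$ arbitrarily late... more carefully: if $t_* = 0$ then $\varphi(0,x)=x\in S$ but no positive-time point near $0$ stays, which still puts $x$ in $\exit_f(S)$), so $y \in \exit_{-f}(S^c)$, again a contradiction.

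The subtle point, and the one I expect to be the main obstacle, is the handling of the supremum $t_*$ and the boundary behaviour: whether $y = \varphi(t_*,x)$ lands in $S$ or in $S^c$ is genuinely ambiguous for an arbitrary (non-closed, non-open) set $S$, which is exactly why the theorem needs \emph{both} $\exit_f(S)$ and $\exit_{-f}(S^c)$ — neither alone suffices. I would need to carefully verify that in the case $y \in S$ the trajectory does leave $S$ at times arbitrarily close to (and greater than) $t_*$: this follows from the definition of $t_*$ as a supremum of times at which $\varphi(\cdot,x) \in S$, since for any $\varepsilon > 0$ there is $t \in (t_*, t_* + \varepsilon]$ (taking $\varepsilon$ small enough that $t_* + \varepsilon \le \tau$, or using $\tau$ itself) with $\varphi(t,x)\notin S$. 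Likewise in the case $y \notin S$, I must confirm $\varphi(t,x) \in S$ for $t$ slightly below $t_*$, which again follows from $t_*$ being the supremum of such times together with a short argument that these times accumulate at $t_*$ from the left (any $t < t_*$ has some $t' \in (t, t_*]$ with $\varphi(t',x) \in S$; combined with continuity one extracts a sequence). I would also make the small remark that the edge cases $t_* = 0$ and $t_* = \tau$ are harmless: $t_* = \tau$ is impossible since $\varphi(\tau,x)\notin S$, and $t_* = 0$ directly exhibits $x \in \exit_f(S)$. Assembling these observations gives the contradiction and completes the contrapositive, hence the theorem.
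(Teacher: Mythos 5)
Your proposal is correct and follows essentially the same route as the paper: the sufficiency direction is proved by locating the critical time $t_*$ (the paper uses $\sup T'=\inf T''$ for two auxiliary sets where you use a single supremum) and then splitting on whether $\varphi(t_*,x)$ lies in $S$ (giving a point of $\exit_f(S)$) or in $S^c$ (giving a point of $\exit_{-f}(S^c)$), while your necessity direction via $\exit_f(S)\subseteq\escape_f(S)$ is a cosmetic variant of the paper's direct construction. One small caution: your two candidate definitions of $t_*$ are \emph{not} equivalent in general (the trajectory may re-enter $S$), but each of them does support the case analysis you outline, so the argument goes through either way.
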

\begin{proof}
For necessity, it is easy to see that the set is not positively invariant whenever the exit sets are not both empty. Case (i): if $\exit_f(S)$ is non-empty, then for some point $x \in S$ there exists a $t>0$ such that $\varphi(t,x) \not\in S$. Case (ii): if $\exit_{-f}(S^c)$ is non-empty, then for some $y \not\in S$ there exists a $\tau>0$ such that $\varphi(-\tau, y) \in S$. Taking $z = \varphi(-\tau, y)$, it is clear that $z \in S$ and $\varphi(\tau, z) \not\in S$. 

For sufficiency we show that whenever $S$ is not positively invariant, the sets $\exit_f(S)$ and $\exit_{-f}(S^c)$ cannot both be empty.  Suppose (for contradiction) that both $\exit_f(S)$ and $\exit_{-f}(S^c)$ are empty and that $S$ is not positively invariant. The set of escape points of $S$ is therefore non-empty. Consider an escape point $x \in \escape_f(S)$: by our hypothesis $x$ cannot be in the empty set of exit points $\exit_{f}(S)$. Therefore there exists a positive $t_0 \in I_x$ such that for all $s \in (0,t_0)$ one has $\varphi(s,x)\in S$, and there exists a $t_1 \in I_x$ such that $t_0 \leq t_1$ and $\varphi(t_1,x) \not\in S$~(i.e. $\varphi(t_1,x) \in S^c$). Let us define
\[
T' = \{t \in I_x \cap (0,+\infty) \mid \forall s \in (0,t), \varphi(s,x) \in S \}\,.
\]
Under our hypothesis, the set $T'$ is non-empty and has a supremum $t'$ such that $t_0\leq t'\leq t_1$. Let us now define 
\[
T'' = \{t \in I_x \cap (0,+\infty) \mid \varphi(t,x) \not\in S\}\,.
\]
This set is likewise non-empty (as it contains $t_1$) and has an infimum $t''$ such that $t_0 \leq t''$. Every element of $T''$ is an upper bound for $T'$ (otherwise there would exist a time $t\in T''$ at which both $\varphi(t,x) \in S$ and $\varphi(t,x) \not\in S$). Clearly, since $t'$ is the least upper bound for $T'$ it can act as a lower bound on $T''$ and we therefore have $t' \leq t''$, where $t''$ is the greatest lower bound for $T''$. Suppose the inequality is strict ($t'<t''$), then for all $r \in [t',t'')$ one has $\varphi(r,x) \in S$ (otherwise $t''$ is not the greatest lower bound for $T''$). But then $t'$ cannot be the least upper bound for $T'$ because $\varphi(s,x)\in S$ for $s \in (t',t'')$. Thus $t'=t''$ and we have two cases to consider: either (i) $\varphi(t',x)\in S$, in which case $\varphi(t',x)\in \exit_f(S)$ and $\exit_f(S)$ is therefore non-empty, or (ii) $\varphi(t',x) \not\in S$, in which case $\varphi(t',x) \in \exit_{-f}(S^c)$, so $\exit_{-f}(S^c)$ is non-empty. Both cases give us a contradiction.
\end{proof}
\begin{remark}
The main technical difference between the proof of Theorem~\ref{thm:char} and~\cite[Thm. 9.1]{DBLP:books/sp/17/ZWZ2017} is that the latter draws a contradiction from considering the supremum of the set $T'$ (with respect to our notations in the proof of Theorem~\ref{thm:char}) whereas we draw a contradiction by considering in addition the set $T''$. This is to be expected as the statements of these theorems are slightly different: Theorem~\ref{thm:char} complements $S$ then applies the $\exit_{-f}$ operator to $S^c$, whereas~\cite[Thm. 9.1]{DBLP:books/sp/17/ZWZ2017} applies the $\INnf$ operator to $S$ first then complements the result. 
This being said, the overall structure of both proofs is however very similar and this fact is better captured by appealing to the real induction principle as a generic proof technique as detailed in Section~\ref{sec:realinduction}. 
\end{remark}

\begin{remark}
Readers with a background in dynamical systems may find it a little counterintuitive that one needs to consider the flow in the reversed system to characterize positive invariance. Indeed, for \emph{closed sets} $S$ it is well known that ``local invariance'' under the flow $\varphi$ (viz. emptiness of $\exit_f(S)$) is equivalent to positive invariance~\citep[e.g. see][Ch. 4]{vrabie2007}. It is important to remember that Theorem~\ref{thm:char} makes no assumptions about the set $S$ being open or closed. When $S$ is open, local invariance holds trivially because the flow may always evolve within the set for some time from any $x\in S$. 

\end{remark}

Observe that the sets $\exit_f(S)$ and $\exit_{-f}(S)$ are not necessarily disjoint: for example, any isolated point which is not an equilibrium would lie in both sets. 
Neither are they required to cover the boundary $\partial S$: if $S$ is an equilibrium point, then both $\exit_f(S)$ and $\exit_{-f}(S)$ are empty, whereas $\partial S = S$. 
The operators $\exit_f$ and $\INf$ respectively capturing the main underlying concepts used in Theorems~\ref{thm:charbyinduction} and Theorem~\ref{thm:char} are intimately related. 
\begin{lemma}
\label{lem:link}
For any set $S \subseteq \bbr^n$, $\exit_f(S) = \INf(S)^c \cap S$. 
Equivalently, one has $\exit_f(S)^c \cap S = \INf(S) \cap S$. 
\end{lemma}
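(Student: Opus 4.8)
The plan is to establish the first identity by unfolding both sides to see that they are described by literally the same predicate, and then to obtain the ``Equivalently'' statement by taking complements relative to $S$.

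First I would rewrite the right-hand side. Using the explicit description of $\INf(S)^c$ already recorded in Remark~\ref{rm:infinclusions}, intersection with $S$ gives
\[
\INf(S)^c \cap S \;=\; \{\, x \in S \mid \forall~\varepsilon > 0.~\exists~t \in (0,\varepsilon).~\varphi(t,x) \notin S \,\}\,.
\]
Placing this next to Definition~\ref{def:exitset}, $\exit_f(S) = \{\, x \in S \mid \forall~t > 0.~\exists~s \in (0,t).~\varphi(s,x) \notin S \,\}$, the two sets carry the same defining condition up to renaming bound variables. The only point that warrants care is that in the exit-set definition the outer quantifier ranges over $t \in I_x \cap (0,+\infty)$ (per our shorthand for forward time), whereas in $\INf(S)^c$ the variable $\varepsilon$ ranges over all positive reals.

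To handle this I would argue the two inclusions separately. For $\INf(S)^c \cap S \subseteq \exit_f(S)$: given $x$ in the left set and any $t \in I_x \cap (0,+\infty)$, instantiate $\varepsilon := t$ to obtain $s \in (0,t)$ with $\varphi(s,x) \notin S$; since $t \in I_x$ and $I_x$ is an interval containing $0$ we have $(0,t) \subseteq I_x$, so this $s$ witnesses the exit condition at $t$. For $\exit_f(S) \subseteq \INf(S)^c \cap S$: given $x \in \exit_f(S)$ and any $\varepsilon > 0$, note that $I_x \cap (0,\varepsilon) \neq \emptyset$ (again because $I_x$ is an interval around $0$), so pick $\tau \in I_x \cap (0,\varepsilon)$; applying the exit-set property at $\tau$ gives $s \in (0,\tau) \subseteq (0,\varepsilon) \cap I_x$ with $\varphi(s,x) \notin S$, as required. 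This proves $\exit_f(S) = \INf(S)^c \cap S$.

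For the equivalent reformulation I would take complements within $S$. Since $\exit_f(S) \subseteq S$ by definition,
\[
\exit_f(S)^c \cap S \;=\; S \setminus \big(\INf(S)^c \cap S\big) \;=\; S \cap \big(\INf(S) \cup S^c\big) \;=\; \INf(S) \cap S\,,
\]
which is the claimed identity. I expect no real obstacle here: once the definitions are aligned the argument is essentially a tautology, and the sole subtlety is the bookkeeping around the maximal interval of existence addressed above.
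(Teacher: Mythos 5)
Your proof is correct and follows essentially the same route as the paper's: both arguments simply unfold $\INf(S)^c \cap S$ and observe that the resulting predicate is literally the definition of $\exit_f(S)$. You are somewhat more careful than the paper in reconciling the quantifier ranges ($\varepsilon$ over all positive reals versus $t$ over $I_x \cap (0,+\infty)$) and in spelling out the complement computation for the second identity, but these are refinements of the same argument rather than a different approach.
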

\begin{proof}
One has $x \in \INf(S)^c \cap S$ if and only if $x \in S$, and, for any positive $t \in I_x$, there exists $s \in (0,t)$ such that $\varphi(s,x) \not\in S$, otherwise $\varphi(s,x) \in S$ holds for all $s \in (0,t)$ which would mean that $x \in \INf(S)$. 
The latter is exactly the definition of $\exit_f(S)$. 
\end{proof}
A symmetric equality holds (only) for closed sets. 
\begin{lemma}
\label{lem:linkclosed}
For a \emph{closed} set $S \subseteq \bbr^n$, \mbox{$\INf(S) = \exit_f(S)^c \cap S$}.
\end{lemma}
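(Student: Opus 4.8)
The plan is to reduce the claim to the symmetric equality already established in Lemma~\ref{lem:link} together with the closure-dependent inclusion $\INf(S) \subseteq S$ noted in Remark~\ref{rm:infinclusions}. Concretely, Lemma~\ref{lem:link} gives, for \emph{any} set $S$, the identity $\exit_f(S)^c \cap S = \INf(S) \cap S$. So to obtain $\INf(S) = \exit_f(S)^c \cap S$ for closed $S$ it suffices to show $\INf(S) \cap S = \INf(S)$, i.e. $\INf(S) \subseteq S$.

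First I would recall (or, if a self-contained argument is preferred, re-derive) the inclusion $\INf(S) \subseteq S$ for closed $S$. The direct argument: if $x \in \INf(S)$ there is $\varepsilon>0$ with $\varphi(t,x) \in S$ for all $t \in (0,\varepsilon)$; since $\varphi(\cdot,x)$ is continuous with $\varphi(0,x)=x$, the point $x$ is a closure point of $S$, hence $x \in \bar S = S$. This is exactly the chain of inclusions $\INf(S^c) \subseteq \INf(S)^c$ combined with $S^c \subseteq \INf(S^c)$ spelled out in Remark~\ref{rm:infinclusions}, so I may simply cite that remark.

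Then I would assemble the two facts: from Lemma~\ref{lem:link}, $\exit_f(S)^c \cap S = \INf(S) \cap S$; and since $S$ is closed, $\INf(S) \subseteq S$ forces $\INf(S) \cap S = \INf(S)$. Combining, $\INf(S) = \exit_f(S)^c \cap S$, which is the desired identity.

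There is no substantive obstacle here — the statement is essentially a corollary of Lemma~\ref{lem:link} and Remark~\ref{rm:infinclusions} — so the only thing to be careful about is which direction of Lemma~\ref{lem:link} to invoke (the ``equivalently'' form $\exit_f(S)^c \cap S = \INf(S) \cap S$) and making explicit that closedness is used solely to upgrade $\INf(S) \cap S$ to $\INf(S)$; the preceding sentence in the text already flags that the symmetric equality fails without closedness, so it is worth remarking that the counterexample territory is precisely where $\INf(S) \not\subseteq S$.
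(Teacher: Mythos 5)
Your proposal is correct and matches the paper's own proof essentially verbatim: both invoke the identity $\exit_f(S)^c \cap S = \INf(S) \cap S$ from Lemma~\ref{lem:link} and then use the closedness of $S$ (via Remark~\ref{rm:infinclusions}) to conclude $\INf(S) \cap S = \INf(S)$. The extra direct continuity argument you sketch for $\INf(S) \subseteq S$ is a fine self-contained alternative to citing the remark, but it does not change the structure of the argument.
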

\begin{proof}
If $S$ is a closed set, then the inclusion $\INf(S) \subseteq S$ holds trivially, from Lemma~\ref{lem:link} we have $\exit_f(S)^c \cap S = \INf(S) \cap S$ and the result follows. 
\end{proof}
\begin{remark}
According to the above lemmas, while $\INf(S)$ is sufficient to fully recover $\exit_f(S)$ by simple set operations. The converse is not true for general sets: the bare knowledge of $\exit_f(S)$ is not enough to completely recover $\INf(S)$ unless $S$ is closed. 
This might seem as a conceptual defect favouring inward sets as more fundamental than exit sets. From a computational standpoint, however, this lack of symmetry between the two concepts turns out to be powerful: intuitively one does not need the full information encoded by inward sets to decide the positive invariance of $S$. Exit sets, despite carrying less information, are sufficient for this purpose. 
\end{remark}
Using Lemma~\ref{lem:link}, both characterizations of positively invariant sets in Theorem~\ref{thm:charbyinduction} and Theorem~\ref{thm:char} can be recovered from one another using the following equivalences:
\[
\emptyset = \underbrace{\INf(S)^c \cap S}_{\exit_f(S)} \iff S \subseteq \INf(S)\,,
\]
\[
\emptyset = \underbrace{\INnf(S^c)^c \cap S^c}_{\exit_{-f}(S^c)} \iff S^c \subseteq \INnf(S^c)\,.
\]

The origins of exit sets in Theorem~\ref{thm:char} lie in topology and it is the properties of exit sets that make this characterization computationally interesting. 
The astute reader may remark at this point that Theorem~\ref{thm:char} admits a shorter proof using real induction via Lemma~\ref{lem:link}. This is indeed the case; however, such a proof would not rely on the concept of exit set nor would it expose the topological insights that we wish to call upon later. 
As we shall see, exit sets afford a very different way of looking at the problem of checking positive invariance and their properties can be exploited to give a substantially different algorithmic solution than that offered by~\LZZ{} in Section~\ref{sec:lzzalg}.

\subsection{Properties of Exit Sets}
\label{subsec:exitproperties}
Let $S_1, S_2 \subseteq \bbr^n$, we discuss below the distributive properties of $\exit_f$ over set intersection and union. 
\begin{lemma}\label{lem:exitcap}
\(
\exit_f(S_1 \cap S_2) = (\exit_f(S_1) \cap S_2) \cup (S_1 \cap \exit_f(S_2))
\). 
\end{lemma}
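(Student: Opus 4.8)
The plan is to derive the identity purely algebraically from Lemma~\ref{lem:link} together with the fact that $\INf$ distributes over set intersections. First I would recall that $\exit_f(S) = \INf(S)^c \cap S$ for every set $S$, so that $\exit_f(S_1 \cap S_2) = \INf(S_1 \cap S_2)^c \cap S_1 \cap S_2$. Next I would invoke $\INf(S_1 \cap S_2) = \INf(S_1) \cap \INf(S_2)$, whose complement is $\INf(S_1)^c \cup \INf(S_2)^c$ by De Morgan. Substituting and distributing the union over the intersection with $S_1 \cap S_2$ yields
\[
\exit_f(S_1 \cap S_2) = \big(\INf(S_1)^c \cap S_1 \cap S_2\big) \cup \big(\INf(S_2)^c \cap S_1 \cap S_2\big),
\]
and since $\exit_f(S_i) = \INf(S_i)^c \cap S_i$ by Lemma~\ref{lem:link}, the two terms are exactly $\exit_f(S_1) \cap S_2$ and $S_1 \cap \exit_f(S_2)$, giving the claim.

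I would also note (briefly, for the reader's intuition) that the identity can be read off directly from Definition~\ref{def:exitset}: $x \in \exit_f(S_1 \cap S_2)$ iff $x \in S_1 \cap S_2$ and every forward interval $(0,t)$ contains an $s$ with $\varphi(s,x) \notin S_1$ or $\varphi(s,x) \notin S_2$. The subtlety of the direct route is that the "offending" set among $S_1, S_2$ may depend on $t$; ruling out the case $x \notin \exit_f(S_1)$ and $x \notin \exit_f(S_2)$ simultaneously amounts to re-establishing that $\INf$ distributes over intersections. Since that distributivity is already available in the excerpt, the algebraic argument above is the cleaner one and is the route I would take.

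The main point requiring a little care is purely notational rather than mathematical: the right-hand side is stated in the asymmetric form $(\exit_f(S_1) \cap S_2) \cup (S_1 \cap \exit_f(S_2))$ rather than with a symmetric $\exit_f(S_1) \cap \exit_f(S_2)$ term. This is automatic, because $\exit_f(S_1) \subseteq S_1$ already (Lemma~\ref{lem:exitboundary}, or directly from the definition), so $\INf(S_1)^c \cap S_1 \cap S_2 = \exit_f(S_1) \cap S_2$ with no superfluous intersection, and symmetrically for the second term. Beyond this bookkeeping with complements and De Morgan's laws there is no genuine obstacle, so I expect the proof to be short.
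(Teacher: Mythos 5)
Your proof is correct, but it takes a different route from the paper's. The paper argues directly from Definition~\ref{def:exitset}: the inclusion $\supseteq$ is immediate (if $\varphi(s,x)\not\in S_1$ then $\varphi(s,x)\not\in S_1\cap S_2$), and for $\subseteq$ it unfolds $\varphi(s,x)\not\in S_1\cap S_2$ into $\varphi(s,x)\not\in S_1$ or $\varphi(s,x)\not\in S_2$ (the missing final step being exactly the min-of-two-epsilons observation you identify). You instead derive the identity purely algebraically from Lemma~\ref{lem:link} ($\exit_f(S)=\INf(S)^c\cap S$) combined with $\INf(S_1\cap S_2)=\INf(S_1)\cap\INf(S_2)$ and De Morgan; your computation is correct, both ingredients appear in the paper before this lemma, and the distributivity of $\INf$ over intersections holds for arbitrary sets, so no generality is lost. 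Your approach buys a shorter, more mechanical argument and makes transparent \emph{why} the intersection case needs no semi-analyticity hypothesis while the union case (Lemma~\ref{lem:exitcupSA}) does: the asymmetry is inherited directly from the distributivity properties of $\INf$. The paper's direct argument buys independence from Lemma~\ref{lem:link} and keeps the exit-set development self-contained in the spirit of Section~\ref{sec:topology}, where the authors deliberately avoid leaning on the $\INf$ machinery. Your closing remark that the asymmetric form $(\exit_f(S_1)\cap S_2)\cup(S_1\cap\exit_f(S_2))$ falls out automatically because $\exit_f(S_i)\subseteq S_i$ is accurate and worth keeping.
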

\begin{proof}
The inclusion $\exit_f(S_1 \cap S_2) \supseteq (\exit_f(S_1) \cap S_2) \cup ( \exit_f(S_2) \cap S_1)$ is immediate: if $x \in \exit_f(S_1) \cap S_2$, then, for all positive $t$, there exists a positive $s < t$ such that $\varphi(s,x) \not\in S_1$ and therefore $\varphi(s,x) \not\in S_1 \cap S_2$. 
Likewise for $\exit_f(S_1 \cap S_2) \supseteq \exit_f(S_2) \cap S_1$. 
To prove the converse, let $x \in \exit_f(S_1 \cap S_2)$, then $x \in S_1 \cap S_2$ and for all positive $t$, there exists a positive $s < t$ such that $\varphi(s,x) \not\in S_1 \cap S_2$ which is equivalent to $\varphi(s,x) \not\in S_1$ or $\varphi(s,x) \not\in S_2$. %
\end{proof}

\begin{lemma}\label{lem:exitcup} 
\( 
\exit_f(S_1 \cup S_2) \subseteq \bigl(\exit_f(S_1) \cap \INf(S_2)^c\bigr) \cup \bigl( \INf(S_1)^c \cap \exit_f(S_2) \bigr) 
\). 
\end{lemma}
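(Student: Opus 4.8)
The plan is to give a direct element-chase argument, using the identity $\exit_f(S) = \INf(S)^c \cap S$ from Lemma~\ref{lem:link} to convert membership in an exit set into a statement about $\INf$.

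First I would take an arbitrary $x \in \exit_f(S_1 \cup S_2)$ and unpack the definition: $x \in S_1 \cup S_2$ and, for every $t > 0$, there is some $s \in (0,t)$ with $\varphi(s,x) \not\in S_1 \cup S_2$, i.e. $\varphi(s,x) \not\in S_1$ \emph{and} $\varphi(s,x) \not\in S_2$ at that same $s$. The key observation is that the first conjunct, holding for $s$ in arbitrarily small punctured right-neighbourhoods of $0$, is incompatible with $x \in \INf(S_1)$ (which would furnish an $\varepsilon > 0$ with $\varphi(t,x) \in S_1$ throughout $(0,\varepsilon)$); hence $x \in \INf(S_1)^c$. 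By the same reasoning applied to $S_2$, we get $x \in \INf(S_2)^c$. At this stage we have $x \in S_1 \cup S_2$ together with $x \in \INf(S_1)^c \cap \INf(S_2)^c$.

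Next I would split on which of $S_1$, $S_2$ contains $x$ (at least one does, since $x \in S_1 \cup S_2$). If $x \in S_1$, then $x \in \INf(S_1)^c \cap S_1 = \exit_f(S_1)$ by Lemma~\ref{lem:link}, and together with $x \in \INf(S_2)^c$ this places $x$ in the first set of the claimed union. Symmetrically, if $x \in S_2$, then $x \in \exit_f(S_2)$ and, combined with $x \in \INf(S_1)^c$, $x$ lies in the second set. In either case $x$ belongs to the right-hand side, which establishes the inclusion.

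I do not expect a serious obstacle: the only step needing a moment's care is the elementary observation that escape from $S_i$ at arbitrarily small times contradicts $x \in \INf(S_i)$, which is just the definition of $\INf$ read in contrapositive. I would also remark in passing that the same bookkeeping shows the right-hand side equals $\INf(S_1)^c \cap \INf(S_2)^c \cap (S_1 \cup S_2)$, so the inclusion can alternatively be obtained from monotonicity of $\INf$ (namely $\INf(S_i) \subseteq \INf(S_1 \cup S_2)$, hence $\INf(S_1 \cup S_2)^c \subseteq \INf(S_1)^c \cap \INf(S_2)^c$) together with Lemma~\ref{lem:link}; but I would keep the self-contained element-chase as the primary proof.
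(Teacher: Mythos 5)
Your proof is correct and follows essentially the same route as the paper's: unpack the definition of $\exit_f(S_1\cup S_2)$, observe that the same witness time $s$ escapes both $S_1$ and $S_2$ so that $x\in\INf(S_1)^c\cap\INf(S_2)^c$, and then case-split on whether $x\in S_1$ or $x\in S_2$ (your detour through Lemma~\ref{lem:link} to identify $\INf(S_i)^c\cap S_i$ with $\exit_f(S_i)$ is a cosmetic difference only). Your closing remark that the right-hand side equals $\INf(S_1)^c\cap\INf(S_2)^c\cap(S_1\cup S_2)$, giving an alternative one-line derivation from monotonicity of $\INf$, is a nice observation but not needed.
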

\begin{proof}
Let $x \in \exit_f(S_1 \cup S_2)$, then by definition, for all $t >0$, there exists $s \in (0,t)$ such that $\varphi(s,x) \not\in S_1 \cup S_2$, which is equivalent to $\varphi(s,x) \not\in S_1$ and $\varphi(s,x) \not\in S_2$. 
By hypothesis, $x \in S_1 \cup S_2$. If $x \in S_1$ then it has to belong to $\exit_f(S_1)$ as well as $\INf(S_2)^c$, by definition of the latter.
If $x \in S_2$, we get a symmetric formula by swapping $S_1$ and $S_2$, namely $x \in \exit_f(S_2) \cap \INf(S_1)^c$. 
The desired formula is the union of these two cases. 
\end{proof}

\begin{counterexample}
The reverse inclusion of Lemma~\ref{lem:exitcup} does not hold in general. 
Consider the simple $1$-dimensional system $x' = 1$ and the sets    
\begin{align*}
S_1 &= \{0\} \cup \left\{ x \in \bbr \mid x > 0 \land \sin\left(x^{-1}\right)=0  \right\}\,, \\
S_2 &= \{0\} \cup \left\{ x \in \bbr \mid x > 0 \land \sin\left(x^{-1}\right)\neq 0 \right\}\,.
\end{align*}
The point $0$ belongs to both $\exit_f(S_1)$ and $\exit_f(S_2)$. In addition, it does not belong to either $\INf(S_1)$ or $\INf(S_2)$. However, $0$ is not in $\exit_f(S_1 \cup S_2)$ as the union ($x \geq 0$) is clearly a positively invariant set for the considered flow. 
\end{counterexample}
This simple example highlights the main reason why the inclusion in Lemma~\ref{lem:exitcup} cannot in general be replaced with set equality. If $x \in \exit_f(S_1) \cap \INf(S_2)^c$, one can only conclude that for any positive $\epsilon_1,\epsilon_2$, there exist $t_1 \in (0,\epsilon_1)$ and $t_2 \in (0,\epsilon_2)$ such that $\varphi(t_1,x) \not\in S_1$ and $\varphi(t_2,x) \not\in S_2$; there is nothing to suggest that $t_1$ should be equal to $t_2$, which is required for $x$ to belong to $\exit_f(S_1 \cup S_2)$. 
However, if one restricts attention to semi-analytic sets $S$ (which includes semi-algebraic sets) then $\INf(S^c) = \INf(S)^c$ (as observed in the previous section), and the inclusion of Lemma~\ref{lem:exitcup} becomes an equality. 
(Notice that semi-analyticity is only sufficient to ensure that $\INf(S^c) = \INf(S)^c$. We currently lack a full characterization of the most general topological settings that respect this equality.)  
 
\begin{lemma}
\label{lem:exitcupSA}
Let $S_1, S_2$ be semi-analytic sets. Then  
\[ 
\exit_f(S_1 \cup S_2) = \bigl(\exit_f(S_1) \cap \INf(S_2)^c\bigr) \cup \bigl(\INf(S_1)^c \cap \exit_f(S_2) \bigr)\,.
\]
\end{lemma}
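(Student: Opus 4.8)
The plan is to establish the one remaining inclusion, namely the reverse of Lemma~\ref{lem:exitcup}, under the additional hypothesis that $S_1$ and $S_2$ are semi-analytic; combined with Lemma~\ref{lem:exitcup} this yields the asserted equality. So I would begin by fixing a point $x$ in the right-hand side $\bigl(\exit_f(S_1) \cap \INf(S_2)^c\bigr) \cup \bigl(\INf(S_1)^c \cap \exit_f(S_2)\bigr)$ and showing $x \in \exit_f(S_1 \cup S_2)$. By symmetry it suffices to treat the case $x \in \exit_f(S_1) \cap \INf(S_2)^c$. From $x \in \exit_f(S_1)$ we get $x \in S_1 \subseteq S_1 \cup S_2$, so the membership part of the exit-set definition is immediate; the content is the temporal condition.

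The key observation is that for a semi-analytic set $S$ the map $t \mapsto [\varphi(t,x) \in S]$ has a very rigid local structure near $t = 0$: because $S$ is a finite Boolean combination of sets $\{q \bowtie 0\}$ with $q$ real-analytic and $\varphi(\cdot,x)$ is real-analytic, each scalar function $t \mapsto q(\varphi(t,x))$ is either identically zero near $0$ or has an isolated zero at $0$. Hence there is a single $\varepsilon_0 > 0$ such that on $(0,\varepsilon_0)$ the truth value of ``$\varphi(t,x) \in S$'' is \emph{constant} — either $\varphi(t,x) \in S$ throughout $(0,\varepsilon_0)$, or $\varphi(t,x) \notin S$ throughout $(0,\varepsilon_0)$. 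This is precisely the dichotomy underlying the identity $\INf(S^c) = \INf(S)^c$ already invoked in the text: for semi-analytic $S$, either $x \in \INf(S)$ or $x \in \INf(S^c)$. I would make this step explicit, citing the semi-analytic case of the distributivity/complementation property referenced earlier (the footnote to \cite[Lemma 20]{DBLP:conf/emsoft/LiuZZ11} and \cite[\S 6.1.2]{DBLP:journals/jacm/PlatzerT20}), since it is the real engine of the argument.

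Applying this to $x \in \exit_f(S_1)$: since $x \in S_1$ but $x \notin \INf(S_1)$, the dichotomy forces $x \in \INf(S_1^c)$, so there is $\varepsilon_1 > 0$ with $\varphi(t,x) \notin S_1$ for all $t \in (0,\varepsilon_1)$. Similarly, from $x \in \INf(S_2)^c$ and the dichotomy for $S_2$ we get $x \in \INf(S_2^c)$, hence $\varepsilon_2 > 0$ with $\varphi(t,x) \notin S_2$ for all $t \in (0,\varepsilon_2)$. Setting $\varepsilon = \min(\varepsilon_1,\varepsilon_2)$, every $t \in (0,\varepsilon)$ satisfies $\varphi(t,x) \notin S_1$ and $\varphi(t,x) \notin S_2$, i.e. $\varphi(t,x) \notin S_1 \cup S_2$. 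In particular, for any $t > 0$ we may pick $s \in (0, \min(t,\varepsilon))$ with $\varphi(s,x) \notin S_1 \cup S_2$, which is exactly the defining condition for $x \in \exit_f(S_1 \cup S_2)$. This completes that case; the case $x \in \INf(S_1)^c \cap \exit_f(S_2)$ is identical with the roles of $S_1,S_2$ exchanged.

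I expect the main obstacle to be the second step — pinning down and justifying the uniform $\varepsilon_0$ on which membership in a semi-analytic set along an analytic curve is constant. The counterexample immediately preceding the lemma shows exactly why this fails for general (non-analytic) sets: there the zeros of $\sin(t^{-1})$ accumulate at $0$, so no such $\varepsilon_0$ exists, and indeed the reverse inclusion breaks. The cleanest route is not to reprove this from scratch but to reduce it to the already-cited fact that $\INf(S^c) = \INf(S)^c$ for semi-analytic $S$: that statement, applied to $S_1$ and to $S_2$ separately, delivers precisely the memberships $x \in \INf(S_1^c)$ and $x \in \INf(S_2^c)$ that the argument needs, so the proof stays short and leans on machinery the paper has already granted itself.
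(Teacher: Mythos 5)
Your proof is correct, but it reaches the equality by a different route than the paper. You prove only the reverse inclusion of Lemma~\ref{lem:exitcup} pointwise: from $x \in \exit_f(S_1) \cap \INf(S_2)^c$ you invoke the semi-analytic complementation identity $\INf(S^c) = \INf(S)^c$ twice (once for $S_1$, once for $S_2$) to upgrade the two ``frequently outside'' conditions to memberships $x \in \INf(S_1^c)$ and $x \in \INf(S_2^c)$, which hand you a single uniform interval $(0,\varepsilon)$ on which the flow avoids $S_1 \cup S_2$ entirely; this is sound, and it makes explicit exactly why the $\sin(x^{-1})$ counterexample is excluded (no uniform $\varepsilon$ exists there). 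The paper instead gives a three-line equational derivation: it rewrites $\exit_f(S_1 \cup S_2)$ as $\INf(S_1 \cup S_2)^c \cap (S_1 \cup S_2)$ via Lemma~\ref{lem:link}, applies the distributivity of $\INf$ over unions of semi-analytic sets to get $\INf(S_1)^c \cap \INf(S_2)^c$, distributes the intersection over $S_1 \cup S_2$, and applies Lemma~\ref{lem:link} again to each term --- obtaining both inclusions at once without any pointwise reasoning. Both arguments lean on a semi-analyticity property of $\INf$ that the paper has already granted itself (you use complementation, the paper uses union-distributivity; the two are close cousins, both stemming from the local sign-constancy of analytic functions along analytic trajectories). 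Your version is slightly longer but arguably more illuminating about the temporal content; the paper's buys brevity by staying entirely at the level of set algebra. One cosmetic nit: your motivating remark that $t \mapsto q(\varphi(t,x))$ ``is either identically zero near $0$ or has an isolated zero at $0$'' omits the case where it is nonzero at $0$, but since you ultimately defer to the cited identity rather than this sketch, nothing rests on it.
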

\begin{proof}
\begin{align*}
    \exit_f(S_1 \cup S_2) 
    &= \INf(S_1 \cup S_2)^c \cap (S_1 \cup S_2) \\
    &= (\INf(S_1)^c \cap \INf(S_2)^c \cap S_1) \cup   (\INf(S_1)^c \cap \INf(S_2)^c \cap S_2) \\
    &= \bigl(\exit_f(S_1) \cap \INf(S_2)^c\bigr) \cup \bigl(\exit_f(S_2) \cap \INf(S_1)^c\bigr)\,.
\end{align*}
\end{proof}

\subsection{The \ExitSet{} Decision Procedure Based on Theorem~\ref{thm:char}}
\label{sec:exitsetalg}

Given a quantifier-free formula describing a semi-algebraic set $S$ and a polynomial system of ODEs $x'=f(x)$, Theorem~\ref{thm:char} can be used to algorithmically decide whether $S$ is positively invariant or not with respect to $f$. 

A na\"ive approach would be to first compute $E = \exit_f(S) \cup \exit_{-f}(S^c)$ recursively on the Boolean structures of $S$ and $S^c$ using Lemmas~\ref{lem:exitcup} and~\ref{lem:exitcupSA}, then check whether $E$ is empty or not. 
Such an approach would be very similar to the \LZZ{} procedure described in  section~\ref{sec:lzzalg} %
and would therefore suffer from the same problem, namely the impossibility of the current state-of-the-art quantifier elimination algorithms to check the emptiness of $E$ in reasonable time, even for seemingly simple planar systems (cf. section~\ref{sec:ex}). 
Indeed, one experimentally observes that, for many interesting examples, the construction of the set $E$ is not computationally expensive despite requiring several ideal membership tests as, often, the order (with respect to $f$) of the polynomials involved remains low. 
An overwhelming share of the running time for a typical problem is spent on proving emptiness of $E$ (as is the case for checking the inclusions $S \subseteq \INf(S)$ and $S^c \subseteq \INnf(S^c)$ using \LZZ). 

We will see in this section how the concept of exit sets, and more precisely Theorem~\ref{thm:char}, can be used to overcome this bottleneck in a principled way. 
The main idea is to ``chop the set $E$ into smaller pieces'' (chunks) on which the emptiness test can be performed in a divide-and-conquer fashion, instead of constructing a formula characterizing $E$ first and only then checking for its emptiness. 
What is perhaps more interesting is that Theorem~\ref{thm:char} suggests a natural way of splitting $E$ into chunks in such a way that each chunk involves precisely \emph{one} exit set of an atomic formula. 
This, in turn, allows one to exploit topological properties of atomic formulas, such as openness, in order to check for set emptiness \emph{syntactically}, obviating the need for expensive computations such as real quantifier elimination. 

As in the previous sections, we use the same notation for semi-algebraic sets and their formal representations as quantifier-free formulas of real arithmetic. 
Without loss of generality, we also restrict our attention to the atomic formulas, $p < 0$ and $p=0$, where $p$ is a polynomial. 
The formulas $p \leq 0$ and $p \neq 0$, are syntactic sugar for $(p < 0 \lor p=0)$ and $(-p<0 \lor p<0)$ respectively. %
Similarly, $p>0, p \geq 0$ can be encoded as $-p < 0, -p \leq 0$ respectively.  

The exit sets of $\F$, $\T$, and $p<0$  are all empty (by Lemma~\ref{lem:exitboundary}) as the sets defined by these formulas are open. 
According to the same lemma, the exit set of $p = 0$ necessarily lies on its boundary, which is also given by $p=0$. When the first Lie derivative of $p$ does not vanish on $p=0$, the flow necessarily leaves the set for some positive time. The same reasoning applies for higher-order Lie derivatives. As with the construction of $\INf$ in Section~\ref{sec:lzzalg}, the construction of the exit set of $p=0$ is fully captured by a (finite) formula: 
\begin{align*}
    \exit_f(p=0) &\equiv \big(~ p=0 \land p' \neq 0 \\
  &\quad \lor p=0 \land p'=0 \land p''\neq 0 \\
  &\quad \vdots& \\
 &\quad \lor p=0 \land p'=0 \land p''=0 \land \cdots \land p^{(\rk(p))}\neq0 \big)\,.
\end{align*}
Note that Lemma~\ref{lem:rem} also applies to $\exit_f(p = 0)$ and the remainders $\rem_i$ (as defined in the lemma) can be used instead of the Lie derivatives $p^{(i)}$. 
In summary, the exit set of atomic formulas can be constructed using a procedure $\exit_f$ which is defined as follows:  
\begin{flalign*}
\quad & \quad \exit_f(\F) := \F\,,\\
\quad & \quad \exit_f(\T) := \F\,,\\
\quad & \quad \exit_f(p<0) := \F\,,\\
\quad & \quad \exit_f(p=0) := \big(~ \rem_0=0 \land \rem_1 \neq 0 \\
  \quad & \phantom{\quad \quad \exit_f(p=0) :=} \lor \rem_0=0 \land \rem_1=0 \land \rem_2\neq 0 \\
  \quad & \phantom{\quad \quad \exit_f(p=0) :=} \vdots \\
 \quad & \phantom{\quad \quad \exit_f(p=0) :=} \lor \rem_0=0 \land \rem_1=0 \land \cdots \land \rem_{\rk(p)}\neq0 \big)\,.
\end{flalign*}
Thus, the only non-trivial exit set for atomic formulas is the exit set of an equality as it is the only (atomic) closed set. 

We next define a recursive procedure called $\IsEmpty_f$, parametrized by the vector field $f$, and which takes as its arguments two quantifier-free real arithmetic formulas describing semi-algebraic sets $S$ and $R$. 
It is defined as follows ($A$ denotes an atomic formula): %
\begin{flalign*}
  \qquad  & \IsEmpty_f(A,~R) := \Reduce\left( \exists x_1. \dotsc \exists x_n.~\exit_f(A) \land R \right)\,, & \\
   &\IsEmpty_f(S_1 \land S_2,~R) := 
   \IsEmpty_f(S_1,~S_2 \land R) \\ 
  &\phantom{\IsEmpty_f(S_1 \land S_2,~R) :=} \lor \IsEmpty_f(S_2,~S_1 \land R)\,,\\
   &\IsEmpty_f(S_1 \lor S_2,~R) := \IsEmpty_f(S_1, \lnot\INf(S_2) \land R) \\
   &\phantom{\IsEmpty_f(S_1 \lor S_2,~R) :=} \lor \IsEmpty_f(S_2,~ \lnot\INf(S_1) \land R)\,, \\
    &\IsEmpty_f(\lnot S,~R) := \IsEmpty_f(\Neg(S),~R) \, .
\end{flalign*}
In addition to $\exit_f$, $\IsEmpty_f{}$ relies on three other procedures: $\INf$ (already defined in Section~\ref{sec:lzzalg}), %
$\Neg{}$, and $\Reduce{}$. 
The procedure $\Neg{}$ applies negation $\lnot$ to the formula it receives as its argument (but does \emph{not} recursively apply negation to the sub-formulas). For atomic formulas, $\Neg{}$ simply negates the formula, expressing the result in terms of only the basic forms of atomic formulas ($\T$, $\F$, $p<0$, and $p=0$):
\begin{flalign*}
    \qquad & \Neg(\F) := \T\,, &\\
    & \Neg(\T) := \F\,,\\
    & \Neg(p<0) := (-p<0) \lor (-p=0)\,, \\
    & \Neg(p=0) := (-p<0) \lor (p<0)\,. 
\end{flalign*}
For non-atomic formulas $\Neg{}$ simply applies De Morgan's laws and eliminates double negation: %
\begin{flalign*}
    \qquad & \Neg(S_1 \land S_2) := (\lnot S_1) \lor (\lnot S_2)\,, & \\ 
    & \Neg(S_1 \lor S_2) := (\lnot S_1) \land (\lnot S_2) \,, \\
    & \Neg(\lnot S) := S\,.
\end{flalign*}

The procedure $\Reduce$ checks for emptiness of the semi-algebraic set by performing real quantifier elimination (this functionality is offered e.g. by implementations of CAD~\citep{DBLP:journals/jsc/CollinsH91}; however, there exist alternatives which are not based on CAD, e.g. RAGLib~\citep{raglib}).

In a nutshell, the main purpose of $\IsEmpty_f$ is to \emph{recursively} check for emptiness of exit sets, as stated more formally in the following lemma. \begin{lemma}
\label{lem:isempty}
Let $S$ and $R$ be two formulas describing semi-algebraic sets. Then $\IsEmpty_f(S,R)$ returns \textsf{False} if and only if $\exit_f(S) \cap R$ is empty.
\end{lemma}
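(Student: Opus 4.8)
The plan is to prove Lemma~\ref{lem:isempty} by structural induction on the formula $S$, mirroring the four defining clauses of $\IsEmpty_f$. The key invariant to maintain is the logical equivalence $\IsEmpty_f(S,R) = \texttt{False} \iff \exit_f(S) \cap R = \emptyset$ for arbitrary semi-algebraic $R$; keeping $R$ universally quantified in the induction hypothesis is what makes the recursion go through, since the recursive calls pass modified second arguments.

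\emph{Base case (atomic $A$).} For $A \in \{\T, \F, p<0, p=0\}$ the procedure computes $\Reduce(\exists x_1.\dots\exists x_n.~\exit_f(A) \land R)$. I would first argue that the syntactic formula produced by the $\exit_f$ procedure on atoms correctly characterizes the set $\exit_f(A)$: for $\F$, $\T$, and $p<0$ this is Lemma~\ref{lem:exitboundary} (these sets are open, so their exit sets are empty), and for $p=0$ this is the finite disjunctive characterization established just before the lemma via the ascending chain property (Lemma~\ref{lemma:acc}) together with Lemma~\ref{lem:rem} justifying the use of remainders $\rem_i$ in place of Lie derivatives $p^{(i)}$. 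Once $\exit_f(A) \land R$ correctly denotes the set $\exit_f(A) \cap R$, correctness of $\Reduce$ as a decision procedure for existential real arithmetic gives that the existential sentence is true iff that set is nonempty, i.e. $\IsEmpty_f(A,R)$ returns $\texttt{False}$ iff $\exit_f(A)\cap R = \emptyset$.

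\emph{Inductive cases.} For $S_1 \land S_2$, apply Lemma~\ref{lem:exitcap}: $\exit_f(S_1\cap S_2)\cap R = \bigl((\exit_f(S_1)\cap S_2)\cup(S_1\cap\exit_f(S_2))\bigr)\cap R$, which is empty iff both $\exit_f(S_1)\cap(S_2\cap R)$ and $\exit_f(S_2)\cap(S_1\cap R)$ are empty; by the induction hypothesis applied to $S_1$ with second argument $S_2\land R$ and to $S_2$ with $S_1\land R$, this matches exactly the disjunction in the definition of $\IsEmpty_f(S_1\land S_2,R)$ (a disjunction returning $\texttt{True}$ iff one branch does, i.e. iff one of the exit sets is nonempty). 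For $S_1 \lor S_2$, use Lemma~\ref{lem:exitcupSA} (the sets are semi-algebraic, hence semi-analytic): $\exit_f(S_1\cup S_2) = (\exit_f(S_1)\cap\INf(S_2)^c)\cup(\INf(S_1)^c\cap\exit_f(S_2))$, so intersecting with $R$ and splitting the union gives emptiness iff $\exit_f(S_1)\cap(\lnot\INf(S_2)\land R)$ and $\exit_f(S_2)\cap(\lnot\INf(S_1)\land R)$ are both empty, which is the induction hypothesis applied to the two recursive calls. For $\lnot S$, I would note that $\Neg(S)$ is a semi-algebraic formula denoting the same set as $\lnot S$ (by De Morgan's laws, double-negation elimination, and the correctness of the atomic negation clauses, using that for semi-algebraic sets complementation is well-behaved), so $\exit_f(\lnot S) = \exit_f(\Neg(S))$ and the claim follows from the induction hypothesis on the structurally simpler (in the relevant sense) formula $\Neg(S)$.

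\emph{Main obstacle.} The termination/well-foundedness of the recursion in the $\lnot S$ case is the delicate point: $\Neg$ pushes negation only one level down, so $\IsEmpty_f(\lnot S, R)$ calls $\IsEmpty_f(\Neg(S), R)$ where $\Neg(S)$ may again contain negations at inner nodes. I would address this by choosing the right induction measure — e.g. induction on the number of logical connectives in $S$ ignoring negations, or on an ordinal rank where $\Neg$ strictly decreases the "outermost negation depth" while not increasing the connective count — and checking that each clause recurses on a strictly smaller formula under that measure (the atomic $\Neg$ cases terminate immediately; $\Neg$ of a conjunction/disjunction produces a formula whose top connective is resolved at the next step without net growth). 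A secondary point requiring care is that all second-argument formulas $R$ passed down remain semi-algebraic and quantifier-free, so that the induction hypothesis and the $\Reduce$ call at the leaves are applicable — this is immediate since $\INf$ and $\Neg$ preserve these properties.
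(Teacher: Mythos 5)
Your proposal is correct and follows essentially the same route as the paper's proof: induction on the structure of $S$, with the base case discharged by $\Reduce$ on $\exit_f(A)\land R$, the conjunctive and disjunctive cases handled via Lemma~\ref{lem:exitcap} and Lemma~\ref{lem:exitcupSA} respectively (factoring out $R$), and the negation case reduced to the others through $\Neg$. Your explicit attention to keeping $R$ universally quantified in the induction hypothesis and to the well-foundedness of the $\Neg$ recursion is a welcome refinement of a point the paper's proof passes over quickly, but it does not change the argument.
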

\begin{proof}
The proof is by induction on the depth of formula $S$. 
Base case: if $S \in \{\F, \T, p<0, p=0\}$, then $\IsEmpty_f(S,R)$ is  
\[
\Reduce\left( \exists x_1. \dotsc \exists x_n.~\exit_f(S) \land R \right)\,,
\]
which is \textsf{False} if and only if $\exit_f(S) \cap R$ is empty (we freely interchange $\land$ and $\cap$ as well as the empty set and \textsf{False} as mentioned in  Remark~\ref{rmk:notation}). %

For the inductive hypothesis, suppose the property holds for all formulas of depth less than or equal to $k$ and let $S_1, S_2,$ and $S'$ be such formulas.

If $S = S_1 \land S_2$, then by definition $\IsEmpty_f(S_1 \land S_2,R)$ is \textsf{False} if and only if both $\IsEmpty_f(S_1,~S_2 \land R)$ and $\IsEmpty_f(S_2,~S_1 \land R)$ are \textsf{False}. 
By the induction hypothesis, this means that both $\exit_f(S_1) \land (S_2 \land R)$ and $\exit_f(S_2) \land (S_1 \land R)$ are empty, and therefore their union is also empty. 
One gets the desired result by factoring out $R$ then using Lemma~\ref{lem:exitcap}: 
\begin{align*}
    \emptyset=&~\left( \exit_f(S_1) \cap (S_2 \cap R)\right) \cup \left( \exit_f(S_2) \cap (S_1 \cap R)\right) \\
    =&~ \left( (\exit_f(S_1) \cap S_2) \cup (\exit_f(S_2) \cap S_1) \right) \cap R \\
    =&~\exit_f(S_1\cap S_2) \cap R\,.
\end{align*}
The disjunctive case can be proved similarly using Lemma~\ref{lem:exitcupSA}. 

Finally, if $S = \lnot S'$, 
$\IsEmpty_f(S, R) = \IsEmpty_f(\Neg{}(S'), R)$
and one may eliminate all negations from $\Neg{}(S')$ by applying De Morgan's laws and double negation elimination and finally applying $\Neg{}$ to any remaining negated atoms. Since the property holds for atomic formulas, conjunctions and disjunctions, it holds for $S$ as well. 
\end{proof}

The procedure $\IsEmpty_f{}$ can thus be used to check positive invariance as an immediate corollary of Theorem~\ref{thm:char} and Lemma~\ref{lem:isempty} by setting $R$ to $\T$.
\begin{theorem}\label{thm:isemptychar}
A semi-algebraic set $S$ is positively invariant for a system of ODEs $x'=f(x)$ if and only if
\(  \lnot \left(\IsEmpty_f(S,\T)  ~\lor~ \IsEmpty_{-f}(\lnot S, \T) \right) \).
\end{theorem}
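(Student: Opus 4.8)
The plan is to derive the statement as a direct corollary of Theorem~\ref{thm:char} and Lemma~\ref{lem:isempty}, since the substantive work has already been done. First I would invoke Theorem~\ref{thm:char}, which states that a set $S \subseteq \bbr^n$ is positively invariant if and only if both $\exit_f(S)$ and $\exit_{-f}(S^c)$ are empty. The task then reduces to showing that each of these two emptiness conditions is faithfully captured by the corresponding call to $\IsEmpty$.

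Next I would specialize Lemma~\ref{lem:isempty} by taking $R = \T$ (i.e. $R = \bbr^n$). Since $\exit_f(S) \cap \bbr^n = \exit_f(S)$, the lemma gives that $\IsEmpty_f(S,\T)$ returns False if and only if $\exit_f(S)$ is empty. Applying the same lemma to the reversed vector field $-f$ and to the formula $\lnot S$ --- which, under the convention identifying formulas with the semi-algebraic sets they describe, denotes $S^c$ --- yields that $\IsEmpty_{-f}(\lnot S, \T)$ returns False if and only if $\exit_{-f}(S^c)$ is empty. Here one should note in a sentence that $\lnot S$ is again a quantifier-free formula describing a semi-algebraic set, so Lemma~\ref{lem:isempty} genuinely applies; the $\Neg$ procedure invoked inside $\IsEmpty$ is what pushes the negation through the Boolean structure.

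Finally I would combine the two equivalences propositionally: $\exit_f(S)$ and $\exit_{-f}(S^c)$ are both empty precisely when both $\IsEmpty_f(S,\T)$ and $\IsEmpty_{-f}(\lnot S, \T)$ return False, i.e. precisely when $\IsEmpty_f(S,\T) \lor \IsEmpty_{-f}(\lnot S, \T)$ is False, i.e. when $\lnot\!\left(\IsEmpty_f(S,\T) \lor \IsEmpty_{-f}(\lnot S, \T)\right)$ holds. Chaining this with Theorem~\ref{thm:char} gives the claimed characterization.

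I do not expect a genuine obstacle: the result is a corollary, and its content lives in Lemma~\ref{lem:isempty} (whose proof handles the recursion on Boolean structure via Lemmas~\ref{lem:exitcap} and~\ref{lem:exitcupSA}) and in Theorem~\ref{thm:char}. The only points worth a sentence of care are (i) confirming that taking $R=\T$ really strips the intersection down to $\exit_f(S)$ itself, and (ii) that feeding $\lnot S$ to $\IsEmpty_{-f}$ is legitimate and computes emptiness of $\exit_{-f}(S^c)$ rather than something else --- both immediate from the definitions.
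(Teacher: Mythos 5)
Your proposal is correct and follows exactly the route the paper takes: the paper presents this theorem as an immediate corollary of Theorem~\ref{thm:char} and Lemma~\ref{lem:isempty} obtained by setting $R$ to $\T$, which is precisely your argument. The two points of care you flag (that $R=\T$ reduces $\exit_f(S)\cap R$ to $\exit_f(S)$, and that passing $\lnot S$ to $\IsEmpty_{-f}$ correctly captures emptiness of $\exit_{-f}(S^c)$) are exactly the right details to note, and both hold as you say.
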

Accordingly, we define the \emph{Exit Set Emptiness} ($\ExitSet{}$) decision procedure that checks for positive invariance of $S$ with respect to $f$ as 
\[
\ExitSet(S,f) := \lnot \left(\IsEmpty_f(S,\T)  ~\lor~ \IsEmpty_{-f}(\lnot S, \T) \right)\,.
\]
\subsection{Complexity Analysis}
\label{sec:complexity}

For given formulas $S$ and $R$, in order to check the emptiness of $\exit_f(S) \cap R$, the procedure $\IsEmpty_f(S,R)$ performs several calls to $\Reduce{}$ in order to eliminate existential quantifiers.
The number of such calls depends only on the Boolean structure of $S$, in particular, the second argument $R$ plays no role in the way the procedure operates. 
In this section, we first give upper and lower bounds of the number of such calls as a measure of the impact of the encoding of $S$.  
We then discuss further decompositions of $\exit_f(S)$ as a union of \emph{basic} semi-algebraic sets. Recall that a basic semi-algebraic set is a set described by a conjunction of atomic formulas $\bigwedge_i (p_i \bowtie_i 0)$, where $\bowtie_i \in \{<,=\}$ and $p_i$ are polynomials. 

\begin{proposition}
\label{prop:DNFsizes}
Suppose the set $S$ is characterized by a formula in disjunctive normal form (DNF) $\bigvee_{i=1}^k \bigwedge_{j=1}^{m_i} A_{i j}$, where $A_{i j}$ are atomic formulas. Let $m = \max_i m_i$. 
Then the recursion depth of $\IsEmpty_f(S,\T)$ is bounded by $k + m$ and the number of calls to $\Reduce{}$ is $\sum_{i=1}^k m_i \leq k m$, each of which has the form $\Reduce\, \exists x_1 \dotsc \exists x_n. \exit_f(A_{r s}) \land R_{r s}$, where 
\[
R_{r s} \equiv \bigwedge_{j=1,j\neq s}^{m_r} A_{r j} \land \lnot \INf\left( \bigvee_{i=1,i \neq r}^k \bigwedge_{j=1}^{m_i} A_{i j} \right)\,.
\]
\end{proposition}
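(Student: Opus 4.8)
The plan is to read all three claims off the recursion tree of $\IsEmpty_f(S,\T)$. Since $S$ is in disjunctive normal form it contains no negation, so $\IsEmpty_f$ only ever invokes the atomic, conjunction, and disjunction clauses of its definition; consequently, after fixing any bracketing of the $k$-ary disjunction and of each $m_i$-ary conjunction, the call tree of $\IsEmpty_f(S,\T)$ is isomorphic to the syntax tree of $S$: each internal $\lor$-node or $\land$-node triggers one unfolding step producing two recursive calls, and each atomic leaf $A_{rs}$ triggers one base-case call $\IsEmpty_f(A_{rs},R_{rs}) = \Reduce(\exists x_1.\dotsc\exists x_n.~\exit_f(A_{rs})\land R_{rs})$, which makes exactly one call to $\Reduce{}$.

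From this the counting claims are immediate: the number of calls to $\Reduce{}$ equals the number of atomic leaves of $S$, namely $\sum_{i=1}^k m_i$, and $\sum_{i=1}^k m_i \le k\max_i m_i = km$; and the longest root-to-leaf path of the syntax tree descends through at most $k-1$ of the $\lor$-nodes, then through at most $m-1$ of the $\land$-nodes inside a single clause, and finally reaches an atom, so the recursion depth is at most $(k-1)+(m-1)+1 = k+m-1 \le k+m$.

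The one step that needs care is identifying the second argument $R_{rs}$ that has accumulated by the time the recursion reaches the leaf $A_{rs}$ inside the clause $C_r \deff \bigwedge_{j=1}^{m_r} A_{rj}$. I would prove this by induction along the path from the root to $A_{rs}$, recording what each clause of the definition appends. On the disjunction descent toward $C_r$, every step $\IsEmpty_f(S_1 \lor S_2,R)\mapsto\IsEmpty_f(\text{side containing }C_r,~\lnot\INf(\text{other side})\land R)$ conjoins $\lnot\INf$ of the off-path sibling sub-disjunction; the siblings met along the path have leaf-sets that partition $\{C_i\mid i\neq r\}$, so the accumulated conjunction is $\bigwedge\lnot\INf(\cdot)$ over those pieces, which by De~Morgan and the distributive property of $\INf$ over unions of semi-algebraic sets (established earlier) equals $\lnot\INf\!\bigl(\bigvee_{i\neq r}C_i\bigr)$. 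On the subsequent conjunction descent inside $C_r$, every step $\IsEmpty_f(S_1 \land S_2,R)\mapsto\IsEmpty_f(\text{one side},~(\text{other side})\land R)$ conjoins the off-path sibling atoms, which together range over exactly $\{A_{rj}\mid j\neq s\}$; starting from the initial second argument $\T$ (absorbed by the conjunction), this yields $R_{rs}\equiv\bigwedge_{j=1,\,j\neq s}^{m_r}A_{rj}\land\lnot\INf\!\bigl(\bigvee_{i\neq r}C_i\bigr)$. Substituting into the atomic clause gives each $\Reduce{}$ call the stated form.

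The main obstacle is precisely this bookkeeping of the accumulated second argument, together with the small point that the literal output of the procedure is a conjunction of \emph{separately} negated $\INf$-terms which must be merged into the single term $\lnot\INf\bigl(\bigvee_{i\neq r}C_i\bigr)$ using the semi-algebraic distributive law; the depth and $\Reduce{}$-count estimates are direct consequences of the shape of the recursion tree. A routine loose end is to observe that the claimed form is independent of how the $k$-ary $\lor$ and the $m_i$-ary $\land$'s were bracketed, since $\IsEmpty_f$ ultimately returns a disjunction over all atomic leaves irrespective of the association, and the worst-case depth of a binary tree on a given number of leaves is still linear in that number.
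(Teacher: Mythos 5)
Your proposal is correct and follows essentially the same route as the paper: the paper's (very terse) proof likewise reads the form of the QE calls directly off the definition of $\IsEmpty_f$ and invokes the distributive properties of $\INf$ to merge the separately accumulated $\lnot\INf$ conjuncts into the single term $\lnot \INf\bigl(\bigvee_{i\neq r}\bigwedge_j A_{ij}\bigr)$. Your version simply spells out the bookkeeping along the recursion tree that the paper leaves implicit.
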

\begin{proof}
The form of the real quantifier elimination (QE) problems is immediate from the definition of $\IsEmpty_f$. The equivalence of $R_{r s}$ is obtained by using the distributive properties (over disjunctions and conjunctions) of the $\INf$ operator. 
\end{proof}

For instance, suppose $S \equiv (A_{1 1} \land A_{1 2}) \lor A_{2 1} \lor A_{3 1}$ ($k=3$, $m=m_1=2$, $m_2=m_3=1$). Then, in the worst case, the procedure $\IsEmpty_f(S,\T)$ has to call $\Reduce{}$ $4$ times: 
\begin{align*}
&\Reduce~ \exists x_1 \dotsc \exists x_n.~ \exit_f(A_{1 1}) \land A_{1 2} \land \lnot \INf(A_{2 1} \lor A_{3 1})\,, \\ 
            &\Reduce~ \exists x_1 \dotsc \exists x_n.~ \exit_f(A_{1 2}) \land A_{1 1} \land \lnot \INf(A_{2 1} \lor A_{3 1})\,, \\
            &\Reduce~ \exists x_1 \dotsc \exists x_n.~  \exit_f(A_{2 1}) \land \lnot \INf((A_{1 1} \land A_{1 2}) \lor A_{3 1})\,, \\
            &\Reduce~ \exists x_1 \dotsc \exists x_n.~ \exit_f(A_{3 1}) \land \lnot\INf((A_{1 1} \land A_{1 2}) \lor A_{2 1})\,.
\end{align*}

\begin{proposition}
\label{prop:CNFsizes}
Suppose the set $S$ is characterized by a formula in conjunctive normal form (CNF) $\bigwedge_{i=1}^k \bigvee_{j=1}^{m_i} A_{i j}$ where $A_{i j}$ are atomic formulas, and let $m = \max_i m_i$. 
Then the recursion depth of $\IsEmpty_f(S,\T)$ is bounded by $k + m$ and the number of calls to $\Reduce{}$ is $\sum_{i=1}^k m_i \leq k m$, each of which has the form $\Reduce \, \exists x_1 \dotsc \exists x_n. \exit_f(A_{r s}) \land R_{r s}$, where 
\[
R_{r s} \equiv \lnot\INf\left( \bigwedge_{j=1,j\neq s}^{m_r} A_{r j} \right) \land  \bigwedge_{i=1,i \neq r}^k \bigvee_{j=1}^{m_i} A_{i j}\,.
\]
\end{proposition}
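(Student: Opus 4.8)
The plan is to establish all three assertions — the recursion‑depth bound $k+m$, the count $\sum_{i=1}^{k} m_i$ of calls to $\Reduce$, and the stated shape of each such call — purely by unfolding the recursive definition of $\IsEmpty_f$, in exact parallel with the proof of Proposition~\ref{prop:DNFsizes} but with the roles of $\land$ and $\lor$ interchanged. First I would fix a parenthesization of the CNF formula, say the left‑associated one $S \equiv \big(\cdots(C_1 \land C_2)\land\cdots\land C_k\big)$ with $C_i \equiv \bigvee_{j=1}^{m_i} A_{ij}$; since $\IsEmpty_f$ branches into exactly two recursive calls both on a conjunction and on a disjunction, and bottoms out precisely when the first argument is atomic, the multiset of base‑case calls — hence the number of $\Reduce$ invocations — is independent of this choice (this is part of the bookkeeping below, via the distributive laws of $\INf$).

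Next I would unfold the outer conjunction. Repeatedly applying $\IsEmpty_f(S_1\land S_2,R) := \IsEmpty_f(S_1,S_2\land R)\lor\IsEmpty_f(S_2,S_1\land R)$ shows that within $k-1$ levels of recursion each clause $C_r$ occurs exactly once as the first argument of a call $\IsEmpty_f\big(C_r,\ \bigwedge_{i=1,i\neq r}^{k} C_i \land \T\big)$, and that no $\INf$ guard is introduced during this phase. Then I would unfold each such clause with $\IsEmpty_f(S_1\lor S_2,R):=\IsEmpty_f(S_1,\lnot\INf(S_2)\land R)\lor\IsEmpty_f(S_2,\lnot\INf(S_1)\land R)$: within at most $m_r-1$ further levels each atom $A_{rs}$ of $C_r$ occurs exactly once as the first argument of a base‑case call $\IsEmpty_f(A_{rs},R_{rs})$. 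Expanding the base case, $\IsEmpty_f(A_{rs},R_{rs})$ is precisely $\Reduce\big(\exists x_1. \dotsc \exists x_n.\ \exit_f(A_{rs})\land R_{rs}\big)$. Counting then gives: the total number of base‑case calls (hence of $\Reduce$ calls) is $\sum_{r=1}^k m_r \le km$, and the longest recursion branch has length at most $(k-1)+(m-1)+1 = k+m$.

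It remains to put $R_{rs}$ in closed form. By construction $R_{rs}$ is the conjunction of $\bigwedge_{i\neq r} C_i$ (inherited unchanged from the conjunction phase) with all the guards $\lnot\INf(\cdot)$ accumulated while extracting $A_{rs}$ from the left‑associated disjunction $C_r$; those guards are $\lnot\INf\big(\bigvee_{j<s} A_{rj}\big)$ together with $\lnot\INf(A_{rj})$ for every $j>s$. Applying the distributivity of $\INf$ over $\lor$ (valid for semi‑algebraic — indeed semi‑analytic — sets, as recorded in Section~\ref{sec:computation}) together with De Morgan's laws collapses this conjunction to $\lnot\INf\big(\bigvee_{j=1,j\neq s}^{m_r} A_{rj}\big)$, so that $R_{rs} \equiv \lnot\INf\big(\bigvee_{j=1,j\neq s}^{m_r}A_{rj}\big)\land\bigwedge_{i=1,i\neq r}^{k}\bigvee_{j=1}^{m_i}A_{ij}$, which is the asserted form; the same distributivity also shows this is independent of the chosen parenthesization. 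For completeness I would note that the $\lnot S$ clause of $\IsEmpty_f$ is never reached here, since $S$ is already in CNF over the basic atoms $p<0$ and $p=0$ (with $\le,\ge,>,\neq$ treated as syntactic sugar, as in Section~\ref{sec:exitsetalg}).

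The only genuinely delicate point is this last bookkeeping step: on the two branches of each disjunction split the accumulated guard has two different shapes — an $\INf$ of a growing disjunctive prefix on one side, and a single $\INf(A_{rj})$ on the other — and re‑folding them into one $\INf$ of a disjunction really does rely on the semi‑algebraic distributivity of $\INf$ over unions, a property that fails for arbitrary sets (cf. the counterexample following the distributive properties of $\INf$ in Section~\ref{sec:computation}). Everything else is a mechanical unfolding of the recursion, structurally identical to the argument for Proposition~\ref{prop:DNFsizes}.
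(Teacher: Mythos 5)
Your proof is correct and follows essentially the same route the paper intends: the paper states Proposition~\ref{prop:CNFsizes} without a written proof, and its proof of the DNF analogue (Proposition~\ref{prop:DNFsizes}) is just ``immediate from the definition of $\IsEmpty_f$'' plus the distributive properties of $\INf$; your unfolding of the conjunction phase, then the disjunction phase, followed by collapsing the accumulated guards via distributivity of $\INf$ over unions, is exactly that argument spelled out. Two remarks. First, the form of $R_{rs}$ you derive, namely $\lnot\INf\bigl(\bigvee_{j=1,j\neq s}^{m_r}A_{rj}\bigr)\land\bigwedge_{i=1,i\neq r}^{k}\bigvee_{j=1}^{m_i}A_{ij}$, differs from the literal statement in two places (a $\bigvee$ rather than a $\bigwedge$ inside the $\INf$, and $i\neq r$ rather than $i\neq s$ in the outer conjunction); your version is the one actually produced by the recursion and is corroborated by the paper's own remark that the CNF problems are obtained from the DNF ones by swapping $A_{ij}$ with $\lnot\INf(A_{ij})$ --- note that $\lnot\INf\bigl(\bigwedge_{j\neq s}A_{rj}\bigr)$ would distribute to a \emph{disjunction} of the guards $\lnot\INf(A_{rj})$, which is not what the algorithm computes --- so the discrepancy is a defect of the statement (and of the worked example following it), not of your proof. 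Second, your depth count $(k-1)+(m-1)+1$ equals $k+m-1$, not $k+m$; the claimed bound of $k+m$ still holds, so this is only an arithmetic slip.
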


For instance, suppose $S \equiv (A_{1 1} \lor A_{1 2}) \land A_{2 1} \land A_{3 1}$, ($k=3$, $m=m_1=2$, $m_2=m_3=1$). Then, in the worst case, the procedure $\IsEmpty_f(S,\T)$ has to call $\Reduce{}$ $4$ times: 
\begin{align*}
&\Reduce~ \exists x_1 \dotsc \exists x_n.~ \exit_f(A_{1 1}) \land \lnot \INf(A_{1 2}) \land (A_{2 1} \land A_{3 1})\,, \\ 
            &\Reduce~ \exists x_1 \dotsc \exists x_n.~ \exit_f(A_{1 2}) \land \lnot\INf(A_{1 1}) \land (A_{2 1} \land A_{3 1})\,, \\
            &\Reduce~ \exists x_1 \dotsc \exists x_n.~  \exit_f(A_{2 1}) \land ((A_{1 1} \land A_{1 2}) \lor A_{3 1})\,, \\
            &\Reduce~ \exists x_1 \dotsc \exists x_n.~ \exit_f(A_{3 1}) \land ((A_{1 1} \land A_{1 2}) \lor A_{2 1})\,.
\end{align*}

\begin{remark}
    Suppose $S \equiv \bigvee_{i=1}^k \bigwedge_{j=1}^{m_i} A_{i j}$ and let $S'$ denote the same formal expression as $S$ except that $\lor$ and $\land$ are swapped. Then the QE problems that $\IsEmpty_f(S',\T)$ has to solve could be obtained syntactically from those of $\IsEmpty_f(S,\T)$ by swapping $A_{i j}$ and $\lnot\INf(A_{i j})$ (and leaving $\exit_f(A_{i j})$ untouched). 
\end{remark}

The encoding of the set $S$ to be checked may have a significant impact on the number of calls to $\Reduce{}$ in $\IsEmpty_f(S,\T)$. 
For instance, suppose $S$ is encoded as $S_1 \equiv (A_1 \lor (A_2 \land A_3)) \land (A_4 \lor (A_2 \land A_3))$ where the $A_i$ are atomic formulas. Then $\IsEmpty_f(S_1,\T)$ calls $\Reduce{}$ $6$ times. In this case, none of the upper bounds of Propositions~\ref{prop:DNFsizes} nor~\ref{prop:CNFsizes} apply because $S_1$ is neither in DNF nor in CNF. 
If one uses the equivalent (DNF) encoding $S_2 \equiv (A_1 \land A_4) \lor (A_2 \land A_3)$ for $S$, then $\IsEmpty_f(S_2,\T)$ calls $\Reduce{}$ only $4$ times at most. 
\begin{lemma}
\label{lem:lowerb}
The number of calls to $\Reduce{}$ is bounded below by the number of distinct atomic formulas in $S$ (regardless of the encoding of $S$). 
\end{lemma}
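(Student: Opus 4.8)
The plan is to track how many times $\Reduce$ is invoked along the recursion of $\IsEmpty_f$ and to show that this count is at least the number of \emph{occurrences} of atomic formulas in $S$; the stated lower bound then follows at once, since every distinct atomic formula occupies at least one such occurrence.

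First I would record two structural facts about $\IsEmpty_f$. The control flow of $\IsEmpty_f(S,R)$ — and in particular the number of $\Reduce$ calls it performs — does not depend on the second argument $R$ (as already observed before Proposition~\ref{prop:DNFsizes}), so we may write $c(S)$ for this number. Moreover $\Reduce$ is called exactly once in the base case (atomic $S$) and not at all in the recursive cases, because the auxiliary operators appearing there, namely $\INf$ and $\Neg$, make no calls to $\Reduce$: constructing $\INf$ only requires polynomial remainders, ideal membership tests and Boolean combinations, while $\Neg$ merely rewrites connectives and atoms. Consequently $c$ obeys the recurrences $c(A)=1$ for atomic $A$, $c(S_1 \land S_2)=c(S_1 \lor S_2)=c(S_1)+c(S_2)$, and $c(\lnot S)=c(\Neg(S))$.

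Next I would let $o(S)$ denote the number of atomic occurrences in $S$ (with atoms written in the normal forms $\T$, $\F$, $p<0$, $p=0$) and prove $c(S)\geq o(S)$ by induction. The atomic and binary-connective cases are immediate from the recurrences above and the additivity of $o$. The only case that needs thought is $S=\lnot S'$, where $c(S)=c(\Neg(S'))$: here it suffices to check that one step of $\Neg$ never decreases the atomic-occurrence count, i.e.\ that $o(\Neg(S'))\geq o(\lnot S')=o(S')$. This is verified by inspecting the clauses of $\Neg$: double-negation elimination and the two De Morgan rules preserve $o$ exactly, while $\Neg(p<0)$ and $\Neg(p=0)$ each replace one atom by a disjunction of two atoms, so $o$ passes from $1$ to $2$. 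Applying the induction hypothesis to $\Neg(S')$ then gives $c(S)=c(\Neg(S'))\geq o(\Neg(S'))\geq o(S')=o(S)$, and combining $c(S)\geq o(S)$ with the trivial bound that $o(S)$ is at least the number of distinct atomic formulas in $S$ finishes the argument.

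I expect the main obstacle to be the one subtlety just flagged: the induction in the $\lnot$ case is not an ordinary structural induction, because $\Neg(S')$ need not be a subformula of $S$ — indeed it can be syntactically larger when a negated atom is expanded. I would handle this by inducting on a complexity measure that strictly decreases under every rewrite step taken by $\IsEmpty_f$, for instance $\nu(A)=1$ on atoms, $\nu(S_1 \circ S_2)=1+\nu(S_1)+\nu(S_2)$ for $\circ\in\{\land,\lor\}$, and $\nu(\lnot S)=1+3\nu(S)$; a short computation shows that $\nu$ drops both when passing to an immediate subformula and when replacing $\lnot S'$ by $\Neg(S')$ in each of the cases of $\Neg$ (the same computation re-establishes termination of $\IsEmpty_f$). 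All remaining steps are routine bookkeeping.
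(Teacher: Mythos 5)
Your proposal is correct and rests on the same core observation as the paper's proof: each call to $\Reduce$ arises from exactly one atomic occurrence (via the base case of $\IsEmpty_f$), the connective cases are additive, and $\Neg$ never decreases the atomic-occurrence count, so the call count is at least the number of occurrences and hence at least the number of distinct atoms. The paper states this informally in a few lines, whereas you formalize it as an induction on a well-founded measure to handle the $\lnot$ case where $\Neg(S')$ is not a subformula — a worthwhile tightening, but not a different route.
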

\begin{proof}
The procedure $\IsEmpty_f$ requires one call to $\Reduce{}$ for each problem of the form $\exit_f(A) \land R$ (where $A$ is an atomic formula), and $R$ any arbitrary formula. 
Depending on the encoding of $S$, $\IsEmpty_f$ might call $\Reduce{}$ once for $\exit_f(A) \land (R_1 \lor R_2)$, or twice for $\exit_f(A) \land R_1$ and $\exit_f(A) \land R_2$ separately. 
In the best case, the encoding of $S$ is such that each call to $\Reduce{}$ features a distinct $\exit_f(A)$ (otherwise, the several calls with the same $\exit_f(A)$ can be factored out), and the result follows. 
\end{proof}

An interesting open question is whether there exists a systematic way of finding an encoding of $S$ which always results in the minimal number of calls to $\Reduce{}$ that is possible. 
We leave this question open while observing that one can build simple examples for which neither the DNF nor the CNF encoding of $S$ are adequate in this regard (it suffices to consider encodings with redundant atomic formulas). 

The QE problems to solve in Proposition~\ref{prop:DNFsizes} can be split further (by distributivity) into $\prod_{i=1,i\neq r}^k m_i \leq m^{k-1}$ ``smaller'' problems of the form 
\[
\Reduce~ \exists x_1 \dotsc \exists x_n.~ \exit_f(A_{r s}) \land \bigwedge_{j=1,j\neq s}^{m_r} A_{r j} \land \bigwedge_{i=1,i \neq r}^{k} \lnot \INf(A_{i \ell_i})\,.
\]
Likewise, the QE problems to solve in propositions~\ref{prop:CNFsizes} can be split further into $\prod_{i=1,i\neq r}^k m_i \leq m^{k-1}$ problems of the form 
\[
\Reduce~ \exists x_1 \dotsc \exists x_n.~ \exit_f(A_{r s}) \land \bigwedge_{j=1,j\neq s}^{m_r} \lnot \INf(A_{r j}) \land \bigwedge_{i=1,i \neq r}^{k} A_{i \ell_i}\,.
\]
We could further evaluate $\exit_f$ and $\INf$ for atomic formulas. 
To do so, one has to account for the system of ODEs $x'=f(x)$ as well as the order of the involved polynomials with respect to $f$. 
Let $\deg(p)$ denote the (total) degree of a polynomial $p$, and $\deg(f)$ the maximum degree of the polynomials appearing in the right-hand side of $x'=f(x)$. 
Recall that the degree of $p'$, the first (Lie) derivative of $p$ with respect to $f$, has a total degree which is at most $\deg(p)+(\deg(f)-1)$, and  the degree of $p^{(s)}$ is at most $\deg(p)+s(\deg(f)-1)$. 
Recall that $\rk(p)$ denotes the order of $p$ with respect to $f$. 

The set $\exit_f(p \bowtie 0)$ is the union of $\rk(p)$ basic semi-algebraic sets, whereas $\INf(p \bowtie 0)$ is the union of $\rk(p)+1$ basic semi-algebraic sets. 
\begin{lemma}
\label{lem:disjunctDNF}
Let $p_i$, $1 \leq i \leq m$, and $q_j$, $1 \leq j \leq k$, denote some polynomials and let $\rho$ denote the maximum of their respective order with respect to $f$. The expression 
\[
\exit_f(p_1 \bowtie_1 0) \land \bigwedge_{i=2}^{m} (p_i \bowtie_i 0) \land \bigwedge_{j=1}^{k} \INf\bigl( q_j \bowtie_j 0 \bigr)
\]
is the union of at most $\rho(\rho+1)^{k}$ basic semi-algebraic sets. 
Each basic semi-algebraic set is a conjunction of at most $m-1+(k+1)(\rho+1)$ expressions of the form $p \bowtie 0$.
\end{lemma}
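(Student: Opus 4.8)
The plan is to reduce the statement to two facts that have just been recorded in the text — that $\exit_f(p \bowtie 0)$ is a union of at most $\rk(p)$ basic semi-algebraic sets and that $\INf(q \bowtie 0)$ is a union of at most $\rk(q)+1$ basic semi-algebraic sets — and then distribute the outer conjunction over these unions. First I would fix the notation: using $\rk(p_1) \le \rho$, write $\exit_f(p_1 \bowtie_1 0) = \bigcup_{a} E_a$ with $a$ ranging over at most $\rho$ values and each $E_a$ a basic semi-algebraic set (if $\bowtie_1$ is $<$ this set is empty and the claim is trivial); moreover, inspecting the construction of $\exit_f$, the longest disjunct is $\rem_0=0 \land \dots \land \rem_{\rk(p_1)-1}=0 \land \rem_{\rk(p_1)} \neq 0$, so each $E_a$ is a conjunction of at most $\rk(p_1)+1 \le \rho+1$ atoms. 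Similarly write each $\INf(q_j \bowtie_j 0) = \bigcup_{b_j} N_{j,b_j}$ with $b_j$ ranging over at most $\rho+1$ values and each $N_{j,b_j}$ a basic set of at most $\rk(q_j)+1 \le \rho+1$ atoms (the longest disjunct in $\INf(q_j<0)$ being $\rem_0=0 \land \dots \land \rem_{\rk(q_j)}<0$). The middle block $\bigwedge_{i=2}^{m}(p_i \bowtie_i 0)$ is itself a single basic semi-algebraic set with exactly $m-1$ atoms.

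Next I would invoke the elementary facts that finite intersection distributes over finite union and that an intersection of basic semi-algebraic sets is again basic, to obtain
\[
\exit_f(p_1 \bowtie_1 0) \cap \bigcap_{i=2}^{m}(p_i \bowtie_i 0) \cap \bigcap_{j=1}^{k}\INf(q_j \bowtie_j 0) \;=\; \bigcup_{a,b_1,\dots,b_k}\!\left(E_a \cap \bigcap_{i=2}^{m}(p_i\bowtie_i 0) \cap \bigcap_{j=1}^{k} N_{j,b_j}\right).
\]
Since $a$ runs over at most $\rho$ values and each $b_j$ over at most $\rho+1$ values, the right-hand side is a union of at most $\rho(\rho+1)^{k}$ basic semi-algebraic sets, which is the first assertion. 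For the second, I would simply add the atom counts inside a single term $E_a \cap \bigcap_{i=2}^{m}(p_i\bowtie_i 0) \cap \bigcap_{j=1}^{k} N_{j,b_j}$: at most $\rho+1$ from $E_a$, exactly $m-1$ from the middle block, and at most $\rho+1$ from each of the $k$ sets $N_{j,b_j}$, for a total of at most $(\rho+1)+(m-1)+k(\rho+1) = m-1+(k+1)(\rho+1)$ atoms.

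The argument is essentially bookkeeping, so I do not expect a real obstacle; the one point that needs care is the treatment of the disequalities $\rem_\ell \neq 0$ appearing in the disjuncts of $\exit_f(p=0)$: under the strict convention in which a basic semi-algebraic set uses only atoms $p<0$ and $p=0$, one splits each such disjunct into two (which at worst doubles the number of basic sets and does not change their length), whereas if disequality atoms are permitted the counts above hold verbatim. I would also note that using the remainders $\rem_i$ in place of the Lie derivatives $p^{(i)}$ is harmless here by Lemma~\ref{lem:rem} and Lemma~\ref{lem:INfrem}, and that the same computation with the roles of $\INf$ and $\exit_f$ appropriately adjusted is what underlies the CNF variant discussed just after the lemma.
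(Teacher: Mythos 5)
Your proposal is correct and follows essentially the same route as the paper's (much terser) proof: decompose $\exit_f(p_1\bowtie_1 0)$ into at most $\rk(p_1)$ basic pieces and each $\INf(q_j\bowtie_j 0)$ into at most $\rk(q_j)+1$, distribute the conjunction over these unions to get the bound $\rk(p_1)\prod_j(\rk(q_j)+1)\le\rho(\rho+1)^k$, and add up atom counts per piece. Your extra care about the $\rem_\ell\neq 0$ disequalities (which the paper's definition of basic semi-algebraic set does not strictly admit) is a legitimate refinement the paper silently elides, but it does not change the substance of the argument.
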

\begin{proof}
The expression is a union of at most $\rk(p_1) \prod_{j=1}^{k} (\rk(q_j) + 1)$ basic semi-algebraic sets. From which one immediately deduces the $\rho(\rho+1)^{k}$ upper bound. 
Each basic semi-algebraic set is a conjunction of at most $(\rk(p_1)+1) + (m-1) + \sum_{j=1}^{k} (\rk(q_j)+1) \leq m+k + (k+1) \rho$ literals. 
\end{proof}

For instance, $\exit_f(p_1 = 0) \land (p_2 < 0) \land \INf(q < 0)$ where $\rk(p_1)=\rk(q)=2$, (thus $m=2$, $k=1$, and $\rho=2$) is the following union 
\begin{align*}
    \phantom{~\lor~}     &p_1 = 0 \land p'_1 \neq 0 \land p_2 < 0 \land q < 0  \\
    ~\lor~               &p_1 = 0 \land p'_1 \neq 0 \land p_2 < 0 \land q = 0 \land q' < 0  \\
    ~\lor~               &p_1 = 0 \land p'_1 \neq 0 \land p_2 < 0 \land q = 0 \land q'=0 \land q'' < 0  \\
    ~\lor~               &p_1 = 0 \land p'_1 = 0 \land p''_1 \neq 0 \land p_2 < 0 \land q < 0  \\
    ~\lor~               &p_1 = 0 \land p'_1 = 0 \land p''_1 \neq 0 \land p_2 < 0 \land q = 0 \land q' < 0 \\
    ~\lor~               &p_1 = 0 \land p'_1 = 0 \land p''_1 \neq 0 \land p_2 < 0 \land q = 0 \land q'=0 \land q'' < 0\,.
\end{align*}

\begin{theorem}
\label{thm:smallchunks}
Let $S$ be a semi-algebraic set encoded either as $\bigwedge_{i=1}^k \bigvee_{j=1}^{m_i} (p_{i j} \bowtie_{i j} 0)$ (DNF) or as $\bigvee_{i=1}^k \bigwedge_{j=1}^{m_i} (p_{i j} \bowtie_{i j} 0)$ (CNF) for some polynomials $p_{i j}$. 
Let $m = \max_i m_i$, $d = \max_{i,j} \deg(p_{i j})$, and $\rho = \max_{i,j} \rk(p_{i j})$.  
Then $\exit_f(S) \lor \exit_{-f}(\lnot S)$ is a union of at most $k m^k \rho (\rho+1)^{k-1}$ basic semi-algebraic sets  
\[
q_1 \bowtie_1 0 \land \dotsc \land q_s \bowtie_s 0\,, 
\]
where $s \leq m - 1 + k(\rho+1)$ and $\deg(q_j) \leq d + \rho(\deg(f)-1)$. 
\end{theorem}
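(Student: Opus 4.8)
The plan is to obtain the bound by composing three counting steps on top of the syntactic decomposition already furnished by Propositions~\ref{prop:DNFsizes} and~\ref{prop:CNFsizes}, treating the two encodings in parallel. Throughout I would use that passing from $f$ to $-f$ changes none of the relevant quantities: the Lie derivative with respect to $-f$ only flips signs, so the ideal chains (hence $\rk(\cdot)$) and the degrees are unchanged. I would argue the bound for $\exit_f(S)$ with $S$ written as $\bigvee_{i=1}^k\bigwedge_{j=1}^{m_i}(p_{ij}\bowtie_{ij}0)$, and then invoke the symmetric statement for $\exit_{-f}(\lnot S)$: after pushing negations down to the atoms, $\lnot S$ has the dual two-level shape, so Proposition~\ref{prop:CNFsizes} supplies the analogous decomposition and the identical count applies.

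First I would recall from Proposition~\ref{prop:DNFsizes} that $\exit_f(S)$ is the union, over the at most $\sum_i m_i\le km$ atom occurrences $A_{rs}\equiv(p_{rs}\bowtie_{rs}0)$ in $S$, of the sets $\exit_f(A_{rs})\cap R_{rs}$ with $R_{rs}\equiv\bigwedge_{j\neq s}A_{rj}\wedge\lnot\INf\bigl(\bigvee_{i\neq r}\bigwedge_j A_{ij}\bigr)$. Next I would rewrite each context $R_{rs}$ as a disjunction of conjunctions: since $\INf$ distributes over unions and over intersections of semi-algebraic sets, $\lnot\INf\bigl(\bigvee_{i\neq r}\bigwedge_j A_{ij}\bigr)=\bigwedge_{i\neq r}\bigvee_j\lnot\INf(A_{ij})$, and distributing the inner disjunctions outward produces $\prod_{i\neq r}m_i\le m^{k-1}$ disjuncts of the form $\bigwedge_{j\neq s}A_{rj}\wedge\bigwedge_{i\neq r}\lnot\INf(A_{i\ell_i})$. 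Hence $\exit_f(S)$ is a union of at most $km\cdot m^{k-1}=km^k$ expressions of the shape $\exit_f(A_{rs})\wedge\bigwedge_{j\neq s}A_{rj}\wedge\bigwedge_{i\neq r}\lnot\INf(A_{i\ell_i})$.

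Finally I would bound each such expression via Lemma~\ref{lem:disjunctDNF} (equivalently, via the explicit finite constructions of $\exit_f$ and $\INf$ on atoms together with $\INf(A^c)=\INf(A)^c$ for semi-algebraic $A$): $\exit_f(A_{rs})$ is either $\F$ or a union of at most $\rk(p_{rs})\le\rho$ basic sets, each a conjunction of at most $\rho+1$ atoms on the Lie derivatives $p_{rs},p_{rs}',\dots$ (or their $\rem_i$ surrogates from Lemma~\ref{lem:rem}); each $\lnot\INf(A_{i\ell_i})$ is a union of at most $\rho+1$ basic sets, each a conjunction of at most $\rho+1$ atoms; and the remaining $m_r-1\le m-1$ literals $A_{rj}$ contribute one conjunct apiece. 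Multiplying through, each expression is a union of at most $\rho(\rho+1)^{k-1}$ basic sets, each a conjunction of at most $(\rho+1)+(m-1)+(k-1)(\rho+1)=m-1+k(\rho+1)$ atoms. Combining the three stages yields the claimed $km\cdot m^{k-1}\cdot\rho(\rho+1)^{k-1}=k\rho m^k(\rho+1)^{k-1}$ basic sets, with $\exit_{-f}(\lnot S)$ contributing the same bound through Proposition~\ref{prop:CNFsizes}. For the degree estimate, every polynomial that appears is some iterated Lie derivative $q^{(t)}$ (or a remainder thereof) of some $p_{ij}$ with $t\le\rho$, so $\deg q^{(t)}\le\deg p_{ij}+t(\deg f-1)\le d+\rho(\deg f-1)$.

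The main obstacle is the bookkeeping in the last step: one must verify carefully that $\lnot\INf$ of an atomic formula is itself a small union of basic sets (this is exactly where $\INf(A^c)=\INf(A)^c$ and the explicit constructions for $\INf(p<0)$ and $\INf(p=0)$ are invoked), and one must treat the two atom types $p<0$ and $p=0$ uniformly --- the strict-inequality case makes $\exit_f(A_{rs})$ vanish yet keeps $\lnot\INf(A_{rs})$ nontrivial --- while holding the per-expression basic-set count at $\rho(\rho+1)^{k-1}$ and the conjunct count at $m-1+k(\rho+1)$. Everything else (the invariance under $f\mapsto-f$, the distributivity rewrites for $\INf$, and the degree growth estimate for iterated Lie derivatives) is routine.
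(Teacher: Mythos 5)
Your overall architecture matches the paper's proof: decompose each of $\exit_f(S)$ and $\exit_{-f}(\lnot S)$ into at most $km^k$ chunks via Propositions~\ref{prop:DNFsizes} and~\ref{prop:CNFsizes} together with the distributivity of $\INf$, expand each chunk into at most $\rho(\rho+1)^{k-1}$ basic sets via Lemma~\ref{lem:disjunctDNF}, and bound the degrees by the growth of iterated Lie derivatives. The gap is in the final combination. You establish the bound $k\rho m^k(\rho+1)^{k-1}$ for $\exit_f(S)$ \emph{alone} and then assert that $\exit_{-f}(\lnot S)$ ``contributes the same bound''; summing the two only bounds the union $\exit_f(S)\cup\exit_{-f}(\lnot S)$ by $2k\rho m^k(\rho+1)^{k-1}$, which is twice what the theorem claims.

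The missing idea is the open/closed complementarity of atoms, which is precisely how the paper avoids the factor of $2$. Every atom $p_{ij}\bowtie_{ij}0$ is either open (strict inequality) or closed (equality). In the first case $\exit_f(p_{ij}\bowtie_{ij}0)$ is syntactically $\F$ (Lemma~\ref{lem:exitboundary}); in the second, its negation is open, so $\exit_{-f}(\lnot(p_{ij}\bowtie_{ij}0))$ is syntactically $\F$. The chunks of $\exit_f(S)$ and of $\exit_{-f}(\lnot S)$ are indexed by the same data (an atom occurrence $(r,s)$ together with a choice $\ell_i$ in each of the other clauses), and for each index at least one of the paired chunks vanishes because its leading $\exit$ factor does. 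Hence of the $2km^k$ potential chunks at most $km^k$ survive, and only then does multiplying by the per-chunk count $\rho(\rho+1)^{k-1}$ yield the stated $k\rho m^k(\rho+1)^{k-1}$. You do observe in your closing paragraph that the strict-inequality case makes $\exit_f(A_{rs})$ vanish, but you use this only as a bookkeeping caveat; it needs to be promoted to the counting device that pairs each forward chunk with its backward counterpart and discards one of the two.
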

\begin{proof}
Suppose $S \equiv \bigvee_{i=1}^k \bigwedge_{j=1}^{m_i} (p_{i j} \bowtie_{i j} 0)$ (the same reasoning applies when $S$ is in CNF). 
Thus $\lnot S \equiv \bigwedge_{i=1}^k \bigvee_{j=1}^{m_i} \lnot (p_{i j} \bowtie_{i j} 0)$. 
According to Propositions~\ref{prop:DNFsizes} and~\ref{prop:CNFsizes}, $\exit_f(S)$ is a union of at most $k m^k$ basic semi-algebraic sets, each involving $\exit_f(p_{i j} \bowtie_{i j} 0)$, whereas $\exit_{-f}(\lnot S)$ is the union of at most $k m^k$ basic semi-algebraic sets, each involving $\exit_{-f} \lnot(p_{i j} \bowtie_{i j} 0)$. 
If $p_{i j} \bowtie_{i j} 0$ encodes a closed set, then its negation encodes an open set (and vice versa). Thus at least one of the expressions
\[
\exit_f(p_{r s} \bowtie_{r s} 0) \land \bigwedge_{j=1,j\neq s}^{m_r} (p_{r j} \bowtie_{r j} 0) \land \bigwedge_{i=1,i \neq r}^{k} \lnot \INf(p_{i \ell_i} \bowtie_{i \ell_i} 0)
\]
or 
\[
\exit_{-f} \lnot(p_{r s} \bowtie_{r s} 0) \land \bigwedge_{j=1,j\neq s}^{m_r} \INnf(p_{r j} \bowtie_{r j} 0) \land \bigwedge_{i=1,i \neq r}^{k} \lnot(p_{i \ell_i} \bowtie_{i \ell_i} 0)
\]
reduces to \textsf{False} syntactically, and the total number of basic semi-algebraic sets is therefore $k m^k$. Now, according to Lemma~\ref{lem:disjunctDNF}, each of the above expressions is the union of at most $\rho(\rho+1)^{k-1}$ basic semi-algebraic sets (after evaluating $\INf$ and $\exit_f$ for atomic formulas). Thus $\exit_f(S) \lor \exit_{-f}(\lnot S)$ is the union of at most $k m^k \rho(\rho+1)^{k-1}$ basic semi-algebraic sets, as stated. 
The bounds on the total number of the involved polynomials as well as their degrees are direct consequences of Lemma~\ref{lem:disjunctDNF} and of the bound on the total degree of high-order Lie derivatives, namely $d + \rho(\deg(f)-1)$. 
\end{proof}

The main conclusion from the analysis performed in this section is the following: instead of solving one large real quantifier elimination problem which results from a na\"ive application of Theorems~\ref{thm:charbyinduction} and~\ref{thm:char} (coarse granularity), it is instead possible to solve exponentially many (precisely $k m^k \rho(\rho+1)^{k-1}$) smaller real quantifier elimination problems as in Theorem~\ref{thm:smallchunks}; these smaller problems furthermore only involve basic semi-algebraic sets (fine granularity). 

In theory, there exist decision procedures for deciding universally (or existentially) quantified sentences of real arithmetic that have singly exponential worst case complexity $(s d)^{O(n)}$, where $s$ is the number of polynomials, $d$ their maximum degree and $n$ the number of variables~\citep{GRIGOREV198865}. %
Each of the smaller QE problems features fewer polynomials with a lower maximum degree than the original QE problem. 
The potential gain in complexity is however mitigated by the number of these small problems, which is exponential as stated in Theorem~\ref{thm:smallchunks}. 

The procedure $\ExitSet{}$, as defined in Section~\ref{sec:exitsetalg}, seeks a trade-off between the fine and coarse granularities which translates into a trade-off between the computational cost of QE problems versus the number of QE problems to solve. 
Combined with the syntactic reductions to \textsf{False} of $\exit_f(A)$ whenever $A$ is an atomic formula encoding an open set, the concept of exit sets provides a powerful tool from a computational standpoint -- in addition to its ability to  characterize positively invariant sets in full generality as stated in Theorem~\ref{thm:char}. 

The next section provides some examples that are out of reach for \LZZ{} (see Section~\ref{sec:lzzalg}) and where \ExitSet{} (see Section~\ref{sec:exitsetalg}) succeeds in deciding set positive invariance. 
Notice that, although one can divide the QE problem in \LZZ{} into basic semi-algebraic sets, such an approach will not benefit from the syntactic reductions to \textsf{False} offered by $\exit_f$ (without paying an extra computational overhead to detect such cases).  
\section{Experiments}
\label{sec:ex}
For checking positive invariance of sets described by a single atomic formula (e.g. $p<0$), there is no discernible difference in performance between the \LZZ~and \ExitSet~procedures.
However, there is a very palpable difference between the two procedures when checking positive invariance of sets described by more interesting  formulas with non-trivial Boolean structure. The examples below serve to illustrate this difference.
\begin{example}
Consider the non-linear system \mbox{$x'=-x^3$}, \mbox{$~y'=-y^3+x$}. %
To construct a semi-algebraic set with non-trivial Boolean structure, let us consider the sequence of points obtained from a \emph{rational parametrization} of the unit circle $x^2+y^2=1$, e.g. a sequence of points $(x_t,y_t) = (\frac{2 t}{t^2+1}, -\frac{1-t^2}{t^2+1}) \in \mathbb{Q}^2$. From the arithmetic sequence of rational numbers $t_0=-2$, $t_{n+1} = t_n + \frac{1}{8}$ with $t$ in the range $[-2,2]$, we can construct a sequence of half-planes that include the unit disc centred at the origin and are tangent to the unit circle at the points $(x_t,y_t)$. The intersection of these half-planes results in a droplet-like shape shown in Fig.~\ref{fig:ex1} and is characterized by a formula $S$ which is a conjunction of $36$ linear inequalities. 

\begin{figure}[ht]
    \centering
    \begin{subfigure}{.5\textwidth}
    \centering
    \includegraphics[scale=0.2]{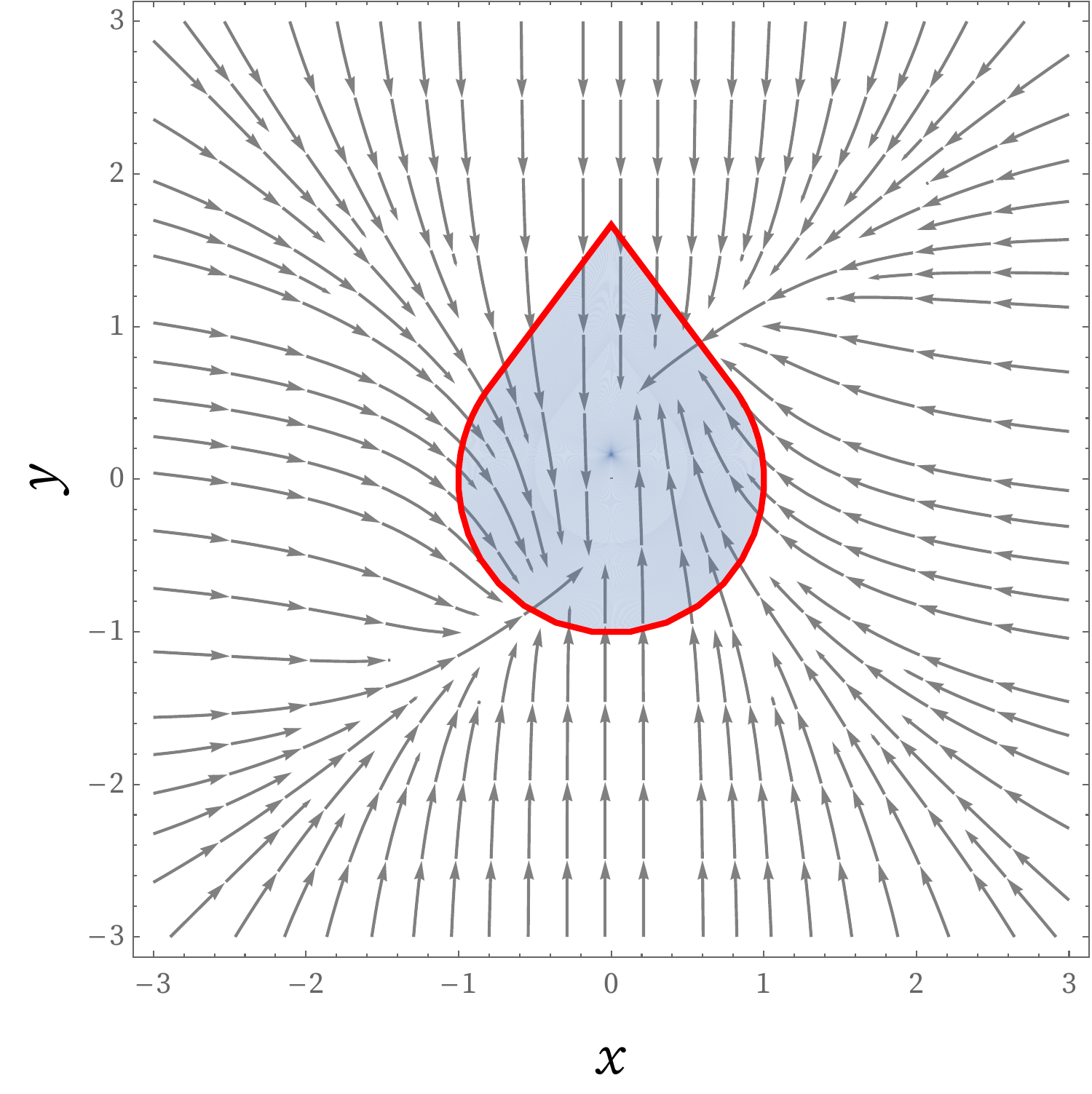}
    \caption{``Droplet'' invariant candidate \label{fig:ex1}}
    \end{subfigure}~\begin{subfigure}{.5\textwidth}
    \centering
    \includegraphics[scale=0.2]{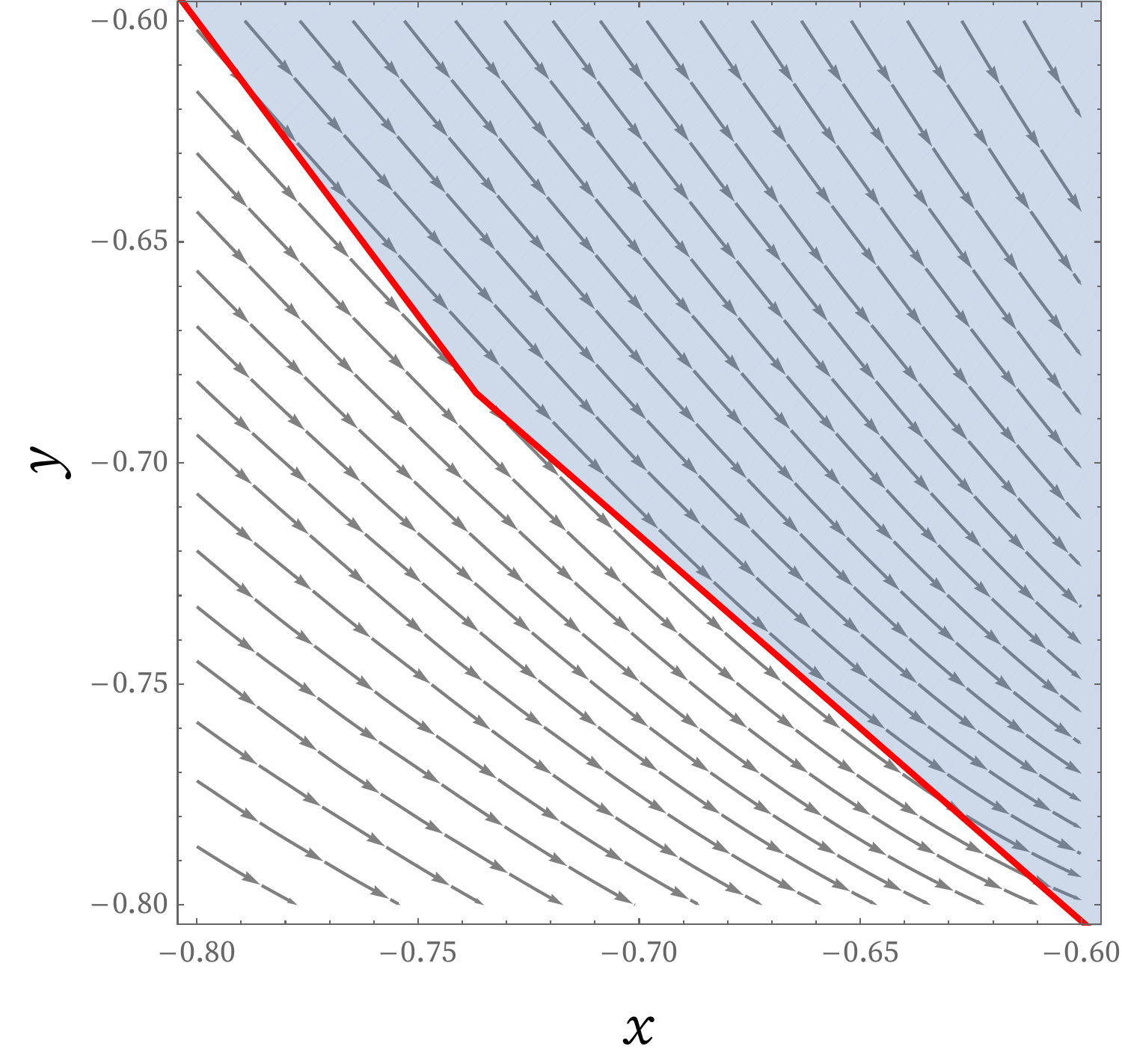}
    \caption{ Flow leaving the droplet \label{fig:cex1} }
    \end{subfigure}
    \caption{Checking positive invariance}
\end{figure}

By inspecting the phase portrait of the system in Fig.~\ref{fig:ex1}, the set defined by this formula appears to be positively invariant, which is something we should be able to check using the procedures described in the previous sections.
Checking positive invariance of $S$ using our implementation of \ExitSet~returns \textsf{False} within 0.3 seconds.\,\footnote{Using Mathematica 12.0, running on a machine with an Intel Core i5-7300U CPU clocked at 2.6GHz with 16GB of RAM.} Indeed, while it is difficult to see from inspecting Fig.~\ref{fig:ex1}, a closer examination (Fig.~\ref{fig:cex1}) reveals that the set characterized by $S$ is not positively invariant because the flow does in fact leave the droplet region.
On the other hand, no answer to this positive invariance question could be obtained using \LZZ~within reasonable time ($>4$ hours). 
\end{example}

\begin{example}
Now let us consider the system $x'=-x^3,~y'=-y^3$ and the set corresponding to the tilted Maltese cross in Figure~\ref{fig:ex2}, which, unlike the previous example, is \emph{not} described by a purely conjunctive formula, but is instead given by a disjunction of $4$ formulas describing the arms of the cross (each arm is described by a formula of the form $p_1\leq 0 \land p_2\leq 0 \land (p_3\leq 0 \lor p_4 \leq 0)$, where each $p_{i=1,2,3,4}$ is linear). For this example, one can verify that the set is indeed a positive invariant using \ExitSet{}, which returns \textsf{True} within 164 seconds. Once more, no answer could be obtained using \LZZ~within reasonable time ($>4$ hours).

\begin{figure}[h!]
    \centering
    \begin{subfigure}{.5\textwidth}
    \centering
    \includegraphics[scale=0.2]{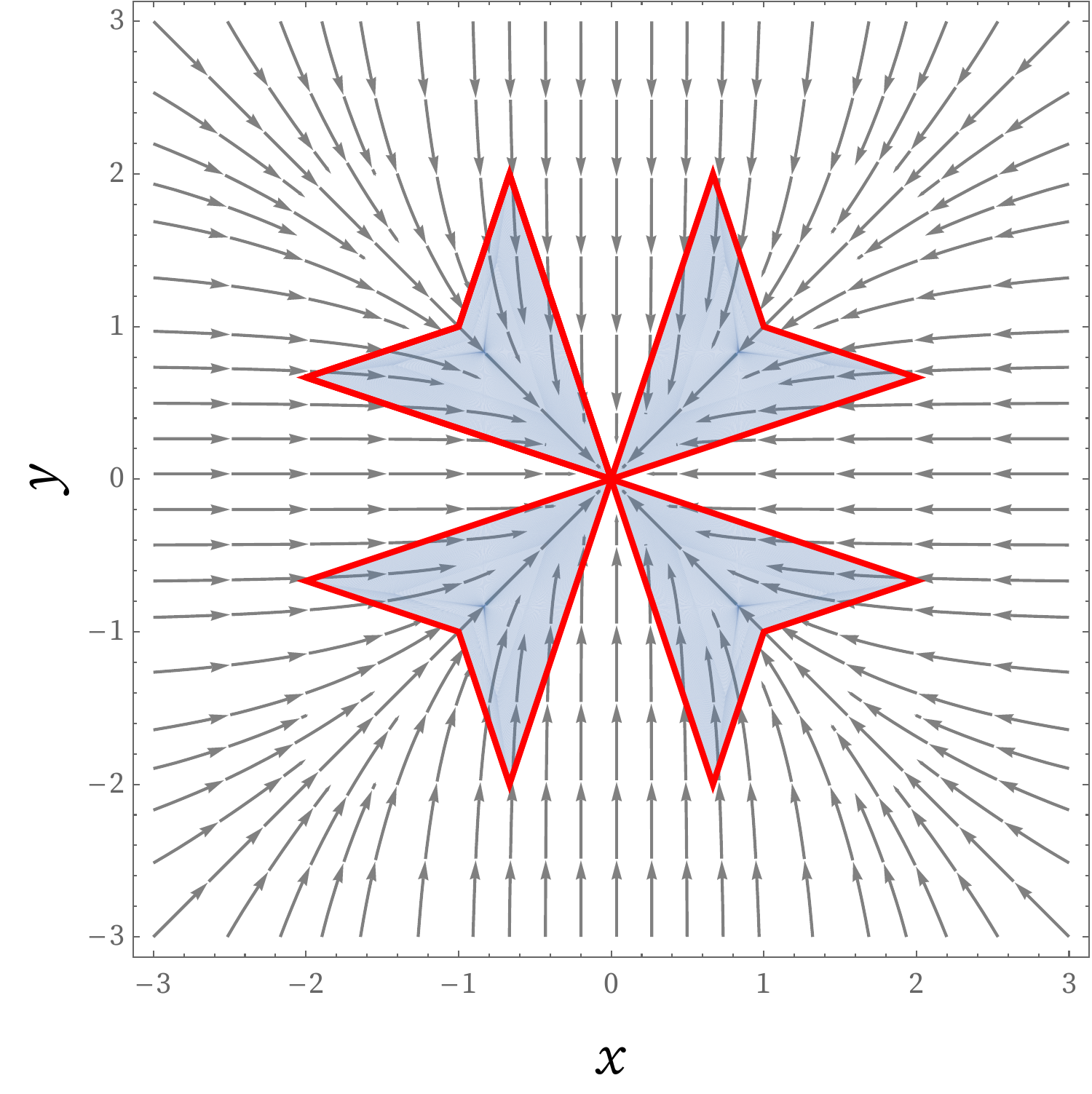}
    \caption{ Semi-linear invariant \label{fig:ex2} }
    \end{subfigure}~\begin{subfigure}{.5\textwidth}
    \centering
    \includegraphics[scale=0.2]{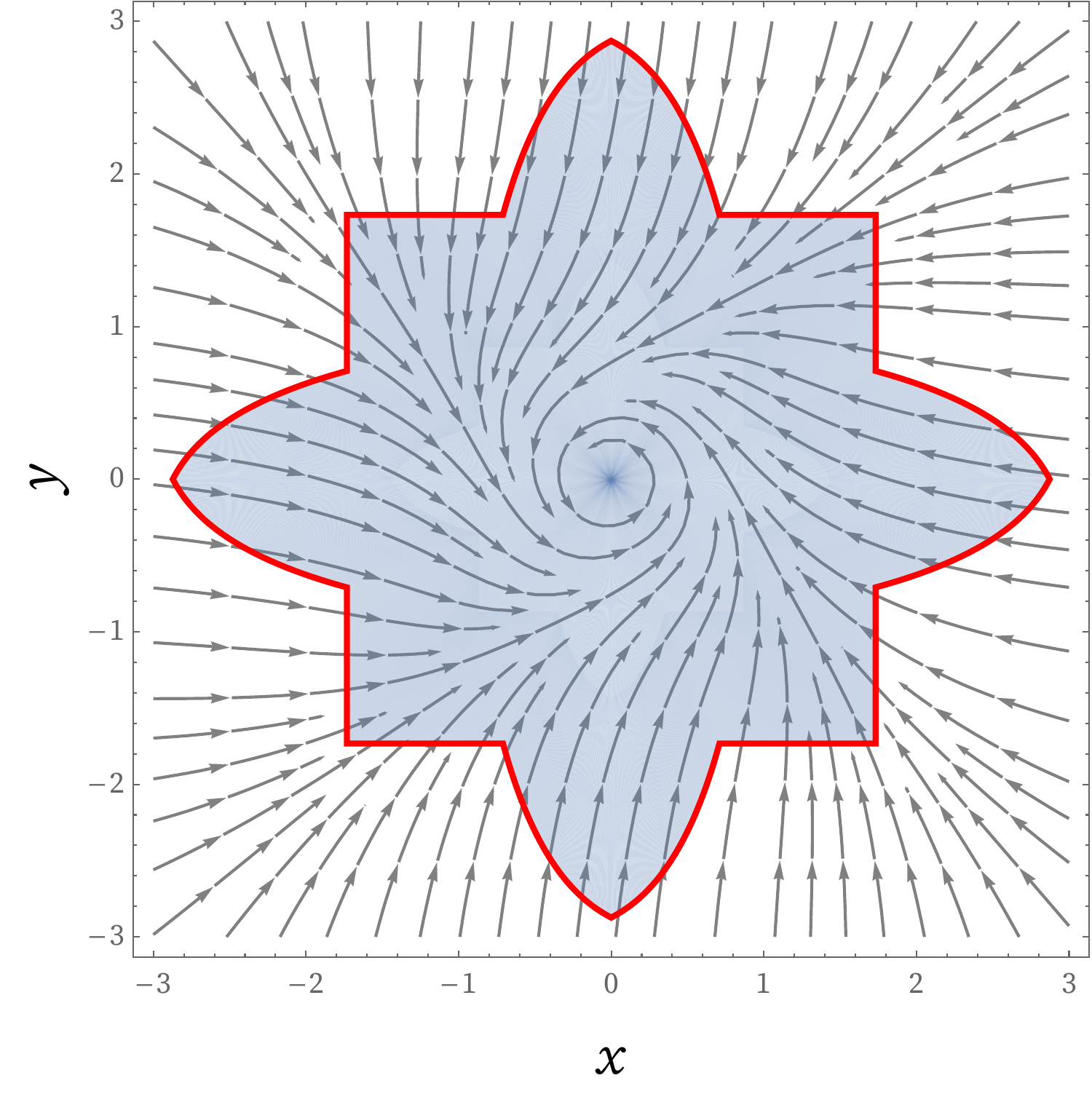}
    \caption{ Semi-algebraic invariant~\label{fig:ex3}}
    \end{subfigure}
    \caption{Positive invariants}
\end{figure}

The set shown in Figure~\ref{fig:ex2} is \emph{semi-linear} because its formal description only features polynomials of maximum degree $1$.
Figure~\ref{fig:ex3} illustrates a semi-algebraic set which is not semi-linear, featuring quadratic polynomials in its formal description; the vector field shown in Figure~\ref{fig:ex3} corresponds to $x'=-x^3 - y,~y'=-y^3 + x$. Using \ExitSet{} we are able to check (within 7 seconds) that the set is indeed positively invariant under the flow of the system, whereas \LZZ~produces the same answer in over 30 minutes.
\end{example}

\section{Positive Invariants Under Constraints}
 In addition to the standard notion of set positive invariance (as given in Definition~\ref{def:posinv}), more general notions have been considered. For example \emph{continuous invariance}, as it is known in the formal verification literature~\citep[see e.g.][]{DBLP:conf/cav/PlatzerC08,DBLP:conf/emsoft/LiuZZ11}, extends positive invariance to accommodate cases in which there is a \emph{constraint} (given by some $Q\subseteq \mathbb{R}^n$) imposed on the evolution of the system. 
\begin{definition}[Continuous invariant]
\label{def:continvar}
A set $S\subseteq \mathbb{R}^n$ is a \emph{continuous invariant} under evolution constraint $Q\subseteq \mathbb{R}^n$ if and only if the following holds:
\[
\forall~{x}\in S.~\forall~t\geq 0.\bigl(\left(\forall~\tau\in [0,t].~{\varphi}(\tau,x)\in Q \right)\to {\varphi}(t,x)\in S \bigr)\,.
\]
\end{definition}
Essentially, in a \emph{continuous invariant} positive invariance is predicated on the constraint $Q$ being maintained.
Thus, positive invariance may be regarded as a special case of \emph{continuous invariance} as defined above, i.e. the special case where the constraint $Q$ is all of $\mathbb{R}^n$. 

\begin{remark}
Readers familiar with temporal logics such as LTL may think of continuous invariance as (very loosely speaking) being in a certain sense analogous to temporal modal operators such as \textsf{Weak Until} $(\mathbf{W})$, i.e. one may think of a continuous invariant described by formula $S$ subject to evolution constraint described by $Q$ as satisfying the temporal logic formula $S~{\mathbf{W}}~\lnot Q$. Of course, the semantics of such a formula needs to be defined over the trajectories of the continuous system rather than discrete traces, e.g. as is done in Signal Temporal Logic~\citep[STL, see][]{DBLP:conf/formats/MalerN04}.
\end{remark}

The work of~\cite{DBLP:conf/emsoft/LiuZZ11} was developed in this slightly more general setting of continuous invariance, rather than positive invariance. A semi-algebraic set $S$ subject to a semi-algebraic evolution constraint $Q$ is a continuous invariant of the system $x'=f(x)$ if and only if~\cite[Thm. 19]{DBLP:conf/emsoft/LiuZZ11}: $S \cap Q \cap \INf(Q)  \subseteq \INf(S)$ and $S^c \cap Q \cap \INnf(Q) \subseteq \INnf(S)^c$.

The \ExitSet~algorithm introduced in this article is likewise easily lifted to check continuous invariance.
\begin{theorem}
A semi-algebraic set $S$ is a continuous invariant for a system of ODEs $x'=f(x)$ subject to a semi-algebraic evolution constraint $Q$ if and only if
\[  \lnot \left(\IsEmpty_f(S,Q \setminus \exit_f(Q))  ~\lor~ \IsEmpty_{-f}(\lnot S, Q \setminus \exit_{-f}(Q)) \right) \enspace .\]
\end{theorem}
The main difference with respect to Theorem~\ref{thm:isemptychar}, is that instead of considering the entire space $\bbr^n$ for both forward and backward flows, we focus on $Q \setminus \exit_{f}(Q))$ (which is equivalent to $Q \cap \INf(Q)$ by Lemma~\ref{lem:link}) for the forward flow and $Q \setminus \exit_{-f}(Q)$ (or equivalently $Q \cap \INnf(Q)$) for the backward flow. These formulations make explicit the fact that the states from which the flow exits $Q$ (formally captured by $\exit_f(Q)$) are not relevant for checking continuous invariance and are thus removed from $Q$. 
Said differently, by construction, the set $\exit_f(Q)$ (resp. $\exit_{-f}(Q))$ is considered a positive (resp. negative) invariant set relative to $Q$.

\subsection{Discrete Abstractions of Continuous Systems}
Problems involving positive invariance checking under evolution constraints (i.e. continuous invariance in the sense of Definition~\ref{def:continvar}) arise frequently in the area of formal verification. Invariants described using formulas with non-trivial Boolean structure are particularly important to verification methods based on \emph{discrete abstractions} of continuous dynamical systems~\citep{DBLP:conf/vmcai/SogokonGJP16}. Briefly, discrete abstraction involves partitioning the state space (e.g. $\mathbb{R}^n$) into disjoint sets that correspond 
to equivalence classes representing states in a discrete transition system. For
example, such a partitioning can be obtained from an \emph{algebraic decomposition}
of $\mathbb{R}^n$ using a finite set of polynomials $\{ p_1,\dots,p_k \}$. Each cell of
this decomposition is described by a conjunction of sign conditions on these polynomials,
e.g. the formula $S \equiv p_1>0 \land p_2=0 \land \dots \land p_k<0$ describes a cell (which is a basic semi-algebraic set corresponding to a single discrete state in the abstraction). Discrete abstractions of continuous systems are obtained by constructing a discrete transition relation between the discrete states. An abstraction is said to be \emph{sound} if the absence of a discrete transition from the state described by $S_i$ to another state described by $S_j$ in the transition relation implies that the continuous system cannot evolve from any state within the set $S_i$ to any state within $S_j$ without leaving the union $S_i \cup S_j$; an abstraction is said to be \emph{exact} if the presence of such a transition implies the existence of a trajectory which starts at a state within $S_i$ and reaches some state in $S_j$ without leaving the union $S_i \cup S_j$ in the process.
In order to construct the transition relation for a sound and exact discrete abstraction one considers the union of neighbouring cells $S_i$ and $S_j$ in the algebraic decomposition and checks whether the set described by $S_j$ is a continuous invariant subject to the constraint $S_i \lor S_j$. There can be
no transition from cell $S_i$ to $S_j$ in the discrete transition relation if 
and only if $S_j$ is continuous invariant under constraint $S_i \lor S_j$ in the sense of Definition~\ref{def:continvar}.
Naturally, the Boolean structure of the formulas involved make the construction of discrete abstractions a potentially fruitful area of application for the \ExitSet~algorithm.

\section{Related Work}\label{sec:relwork}
The method of applying the ascending chain condition to ideals generated by successive Lie derivatives of polynomials in order to prove invariance of algebraic varieties in polynomial vector fields was employed by~\cite{NovikovYakovenko},~\cite{DBLP:conf/tacas/GhorbalP14}, and more recently by~\cite{harms2017polynomial}. \cite{DBLP:conf/emsoft/LiuZZ11} were the first to address positive invariance of semi-algebraic sets using techniques described in Section~\ref{sec:realinduction} of this article. \cite{Dowek2003} investigated the use of real induction to solve kinematic problems involving ODEs.

\cite{DBLP:journals/jacm/PlatzerT20} recently developed a system of formal axioms (one of which formalizes the real induction principle) for reasoning about continuous invariants in differential dynamic logic. This axiomatization is \emph{complete} in the sense that a formal proof of continuous invariance of a \emph{semi-analytic set} represented by a formula $S$ can be derived in differential dynamic logic from the axioms whenever this invariance property holds, and a refutation can be derived whenever it does not.

Among characterizations of positive set invariance in a less general setting than that considered in this article, we note the work of~\cite{CastelanHennet1993}, who reported necessary and sufficient conditions for positive invariance of convex polyhedra in linear vector fields.

\section*{Conclusion}
This article describes two alternative characterizations of positively invariant sets for systems of ODEs with unique solutions. 

The first characterization, along with its associated \LZZ{} decision procedure for checking positive invariance of semi-algebraic sets in poylnomial vector fields, is closely related to the work by~\cite{DBLP:conf/emsoft/LiuZZ11}. 
While the relationship between the work of~\cite{DBLP:conf/emsoft/LiuZZ11} and the principle of \emph{real induction} has been known informally to a number of researchers, this important link has not been adequately elaborated in existing literature. One of our aims in writing this article has been to make this relationship more widely appreciated and also to create an accessible account of the original \LZZ{} decision procedure, along with our own improvements to this method (Section~\ref{subsec:lzzimprovements}) and nuances in its practical implementation informed by our experience (Section~\ref{sec:lzzalg}). 

The second part of the article contributes an alternative characterization of set positive invariance and is based on the notion of exit sets~\citep{conley}. The topological origins of this notion afford certain computational vistas that suggest a very different approach to developing a decision procedure for checking positive invariance than that of~\LZZ{}. The \ExitSet{} procedure developed in Section~\ref{sec:exitsetalg} is, to the authors' knowledge, entirely novel. Its main advantage over \LZZ{} lies in its efficient handling of formulas with non-trivial Boolean structure (a class of problems where the \LZZ{} procedure generally performs poorly). The complexity analysis undertaken in Section~\ref{sec:complexity} sheds some light on the computational advantages of using \ExitSet{}, which is empirically confirmed in a number of examples in Section~\ref{sec:ex}. 

Important topics not touched upon in this article include  \emph{robustness} of positively invariant sets under small perturbations of the system dynamics; indeed, in practical applications, the system of ODEs is often only known approximately and invariants that are not robust are in a certain sense unphysical. In the future we hope to build upon the present work to address these considerations.

\begin{paragraph}{Acknowledgement} 
  The authors are indebted to Dr Paul B. Jackson and Dr Kousha Etessami at the University of Edinburgh for their insights into the real induction principle underpinning the work of Liu, Zhan and Zhao, and extend special thanks to Yong Kiam Tan at Carnegie Mellon University for suggesting improvements to the writing in an early draft of this work (and for bringing the authors' attention to the article~\citep{clark2019} of which they were previously unaware); his formal development of an invariant checking procedure in differential dynamic logic appeared in~\citep{DBLP:conf/lics/PlatzerT18} and~\citep{DBLP:journals/jacm/PlatzerT20}. The authors would also very much like to thank the anonymous reviewers for their careful reading and valuable suggestions for improving the article.
\end{paragraph}

\bibliographystyle{elsarticle-num-names}
\bibliography{root}

\end{document}